\newtheorem{example}{Example}
\newtheorem{theorem}{Theorem}
\newtheorem{lemma}{Lemma}
\newtheorem{corollary}{Corollary}
\newtheorem{definition}{Definition}
\newtheorem{proposition}{Proposition}
\newtheorem{remark}{Remark}
\DeclarePairedDelimiter{\ceil}{\lceil}{\rceil}
\def\mcl{\mathcal}
\begin{document}
	%
	\title{Tunable Measures for Information Leakage and Applications to Privacy-Utility Tradeoffs }

	\author{Jiachun Liao, {\em Student Member, IEEE}, Oliver Kosut, {\em Member, IEEE},\\ Lalitha Sankar, {\em Senior Member, IEEE},  and Flavio du Pin Calmon, {\em Member, IEEE}
		\thanks{This material is based upon work supported by the National Science Foundation under Grant Nos. CCF-1422358, CCF-1350914, and CIF-1422358. This work was presented in part at IEEE International Symposium on Information Theory and Information Theory Workshop in 2018.}
	}


	\maketitle
	
	\begin{abstract}
		We introduce a tunable measure for information leakage called \textit{maximal $\alpha$-leakage}. This measure quantifies the maximal gain of an adversary in inferring any (potentially random) function of a dataset from a release of the data. The inferential capability of the adversary is, in turn, quantified by a class of adversarial loss functions that we introduce as $\alpha$-loss, $\alpha \in [1,\infty) \cup \{\infty\}$. The choice of $\alpha$ determines the specific adversarial action and ranges from refining a belief (about any function of the data) for $\alpha =1$ to guessing the most likely value for $\alpha = \infty$ while refining the $\alpha^{\text{th}}$ moment of the belief for $\alpha$ in between. Maximal $\alpha$-leakage then quantifies the adversarial gain under $\alpha$-loss over all possible functions of the data. In particular, for the extremal values of $\alpha=1$ and $\alpha=\infty$, maximal $\alpha$-leakage simplifies to mutual information and maximal leakage, respectively. For $\alpha\in(1,\infty)$ this measure is shown to be the Arimoto channel capacity of order $\alpha$. We show that maximal $\alpha$-leakage satisfies data processing inequalities and a sub-additivity property thereby allowing for a weak composition result. Building upon these properties, we use maximal $\alpha$-leakage as the privacy measure and study the problem of data publishing with privacy guarantees, wherein the utility of the released data is ensured via a \emph{hard distortion} constraint. Unlike average distortion, hard distortion provides a deterministic guarantee of fidelity. We show that under a hard distortion constraint, for $\alpha>1$ the optimal mechanism is independent of $\alpha$, and therefore, the resulting optimal tradeoff is the same for all values of $\alpha>1$. Finally, the tunability of maximal $\alpha$-leakage as a privacy measure is also illustrated for binary data with average Hamming distortion as the utility measure.		
	\end{abstract}
	
	\begin{IEEEkeywords}
		Mutual information, maximal leakage,  maximal $\alpha$-leakage, Sibson mutual information, Arimoto mutual information, $f$-divergence, privacy-utility tradeoff, hard distortion. 
	\end{IEEEkeywords}
	
	\IEEEpeerreviewmaketitle
		
	\section{Introduction and Overview}
	The measure and control of private information leakage is a recognized objective in communications, information theory, and computer science. Modern cryptography \cite{DirectionCryptograpy_Diffie76,PublicKeyCrypto_Elgamal85,VisualCryptograpy_Prisco14}, for example, aims at designing and analyzing security systems that are believed to be impervious to computationally bounded adversaries. Alternatively, information-theoretic security studies settings where an asymmetry of information between an adversary and the legitimate parties (e.g., the wiretap channel \cite{Gaussian-Wiretap-78,SecrecyCapacity_Bnargav16,BraodcastChannelConfidential16}) can be exploited to guarantee that no private information is leaked regardless of computational assumptions. An adversary that \emph{only} observes  the output of a (computationally) secure cipher or cannot overcome the information asymmetry in a wiretap-like setting does not, for all practical purposes, pose a privacy risk.
	
	However, modern applications such as online data sharing, social networks, cloud-based services, and mobile computing have significantly increased the number of ways in which private information can leak. Services that require a user to disclose data in order to receive utility inevitably incur a privacy risk through unwanted inferences. For example, \textit{sensitive information} such as political preference, medical conditions, and identity can be reliably estimated from movie ratings \cite{DeanonymizationLargeSpareDataset_Narayanan08}, online shopping patterns, \cite{CloudService_Privacy-Ristenpart09}, and via deanonymization and tracking of interactions in social network data \cite{RumorNetwor_Shah11,RumorIdentification_liang15}, respectively. 
	Moreover, practical implementations of cryptographic schemes are susceptible to so-called ``side-channel attacks,'' where sensitive information leaks through unexpected channels. For example, a malicious application may get timing characteristics \cite{Chip-leakage_Ghassami15,NetworkonChip-Biswas18}. In these examples, an adversary that observes information leaked through a side-channel can more reliably infer private data, such as a key or a plaintext.
	
	Several (often overlapping) definitions of privacy/information leakage have been proposed over the past decade. The most widely adopted measure is differential privacy (DP) \cite{Dwork_DP,Dwork_DP_Survey}, which was introduced within the context of querying databases. DP seeks to ensure that changes in the database entries do not significantly influence the value of a query. A variety of information-theoretic measures have also been proposed as leakage measures. Foremost among them is mutual information (MI): its use as a privacy measure in  \cite{Aggarwal2006,Rebollo-Monedero2010,Motwani1,PrivacyAgainstStatistic_Calmon12,sankar_utility-privacy_2013,Sankar2011,maximalcorrelation_Asoodeh2015,Relation_MI_DP_Wang16,HypothesisTest&MutualInf_Liao2017,ITPrivacySmartMetering_Khisti18} is inspired by the common appearance of MI as an operationally-meaningful quantity throughout the literature on communication systems.
	In a similar vein, divergence-based quantities such as total variation distance between the prior and posterior distributions  \cite{TVDprivacy_Rassouli&Gunduz18} have also been proposed as leakage measures. Information-theoretic measures have been studied in the DP community via R{\'e}nyi differential privacy which is based on R{\'e}nyi divergence \cite{R-DP_Mironov17} that allow relaxing the original definition of DP in order to enable better utility guarantees. However, the gamut of information-theoretic leakage measures proposed to address the privacy problem do not \textit{yet} have clear operational meanings or adversarial models in their definitions. 

    More recently, information-theoretic formulations have been introduced to capture privacy against a ``guessing'' adversary. Here, privacy is measured in terms of an adversary's gain in guessing the private information after observing disclosed data. For example, Asoodeh et al. use the probability of correctly guessing to measure privacy \cite{privacyGuessing_asoodeh2017}; and Issa et al. introduce maximal leakage (MaxL), which quantifies the maximal logarithmic gain in the probability of correctly guessing any arbitrary function of the original data from released data \cite{OperationalLeak_issa2018}. A related line of work includes \cite{Guessing_Massey1994,Guesswork_Malone04,Guesswork_Christiansen13}, where security is quantified in terms of the expected number of guesses (or moments thereof) required by an adversary to correctly identify a quantity of interest (e.g., a password or a transmitted codeword).

	This work builds upon the abovementioned efforts to operationally motivate measures and presents a larger class of meaningful information-theoretic measures that can be operationally motivated in the privacy setting. To this end, we introduce a tunable loss function, namely $\alpha$-loss ($1\leq \alpha\leq \infty$), to capture adversarial actions. In particular, for $\alpha=1$ and $\alpha=\infty$ the loss function simplifies to the logarithmic loss (log-loss) \cite{Merhav1998,SurrogateLoss&fDivergence_Nguyen09,Courtade2011} and the probability of error\footnote{Note that the probability of error for a maximum likelihood estimator is exactly the 0-1 loss \cite{Classification_Bartlett06, SurrogateLoss&fDivergence_Nguyen09}.}, respectively.
	The choice of the loss function captures the \textit{inferential action} of an adversary. Specifically, the adversarial action, henceforth referred to as inference, involves refining a posterior belief of one or more sensitive features. 
	Adversarial gain of a computationally unbounded adversary is then simply the decrease in (inferential) loss on average as a result of a data release.

	 We use the $\alpha$-loss function to derive two new privacy measures called \textit{$\alpha$-leakage} and \textit{maximal $\alpha$-leakage}. Specifically, $\alpha$-leakage quantifies an adversary's gain in inferring a \textit{specific} private attribute in the dataset; in contrast, maximal $\alpha$-leakage quantifies an adversary's gain in inferring \textit{any arbitrary} attribute of the dataset. In particular, maximal $\alpha$-leakage includes MI and MaxL as special cases for $\alpha=1$ and $\alpha=\infty$, respectively. This approach allows us to show that MaxL can be interpreted in terms of an adversary seeking to minimize the 0-1 loss function \cite{Classification_Bartlett06, SurrogateLoss&fDivergence_Nguyen09} ($\alpha=\infty$), i.e., the adversary makes a hard decision via a maximum likelihood estimator. On the other hand, we show that when MI is used as a leakage measure ($\alpha=1$), the underlying loss function is the log-loss, that models a (soft decision) belief-refining adversary.
	 In addition to what the adversary observes (e.g., released census dataset or information via a side-channel), the adversary may also have access to other correlated side-information (e.g., voter record database or individual personal information in side-channel attacks); generalizing \textit{$\alpha$-leakage} and \textit{maximal $\alpha$-leakage} to model such side-information is indeed possible as recently shown by the authors in \cite{RobustnessMaxAlpLK_Liao19}; however, this generalization is beyond the scope of this paper.

	Our proposed measures can be applied to the aforementioned privacy and side-channel settings. In most non-trivial settings of data publishing, there is a fundamental privacy-utility tradeoff (PUT): on the one hand, releasing data  ``as is'' can lead to unwanted inferences of private information. On the other hand, perturbing or limiting the released data reduces its quality. We quantify PUTs for two types of data models: one in which the entire dataset is sensitive (as illustrated in Fig.~\ref{fig:Datamodel_1}) and the other in which only a subset of the dataset is sensitive (as illustrated in Fig.~\ref{fig:Datamodel_2}). 	Throughout this paper, we use $X$ to denote the original data that will be \textit{released} as $Y$ via a randomized mapping; $X$ may be entirely sensitive as in Fig.~\ref{fig:Datamodel_1}, or it may be separate from the sensitive features $S$ as in Fig.~\ref{fig:Datamodel_2}. The variable $U$ represents a specific sensitive feature of the dataset that the adversary is interested in learning. Examples of datasets wherein the entire data is sensitive include data collected by smart devices such as smartphone sensors, movie recommendation systems, where it is hard to know \textit{a priori} which aspect of the data ought to be identified as sensitive. In contrast, examples of datasets with clearly defined sensitive features include census and other datasets that explicitly include personally identifiable information.

	The exact nature of the PUT depends on exactly how both privacy and utility are measured. Towards an understanding of our new privacy measures, we consider PUTs in which (maximal) $\alpha$-leakage is the privacy measure, and we study several options for utility measure.
	In general, a meaningful utility measure (between the original and released data) should require the released data to provide either (i) average-case guarantees on fidelity \cite{LocalPrivacy_Duchi13,StaircaseMechanismDP_Geng_Kairouz15,PrivacyAgainstStatistic_Calmon12,privacyGuessing_asoodeh2017,TVDprivacy_Rassouli&Gunduz18}; or (ii) worst-case guarantees on fidelity. Indeed, requirement (i) lends itself to modeling with a large class of expected value constraints including average distortion constraints and is now well studied in information-theoretical privacy via a variety of measures such as Hamming distortion, square error and Kullback--Leibler divergence \cite{DP-Hamming_Kousha18, Privacy&MMSE_Asoodeh2016, StaircaseMechanismDP_Geng_Kairouz15, HypothesisMaxL_Liao2017, HypothesisTest&MutualInf_Liao2017}. We note that average distortion constraints are also well studied in rate-distortion theory.
	To capture utility requirement (ii), we introduce a \emph{hard distortion} measure which constrains the privacy mechanism so that the distortion between original and released datasets is bounded with probability $1$. Such an approach has also been studied in rate-distortion theory as a potential distortion measure (see, for example, \cite{MaxDistortion_Tuncel02} for the use of per symbol distortion constraints). In addition, compared to average-case distortion constraints \cite{DP-Hamming_Kousha18, Privacy&MMSE_Asoodeh2016, StaircaseMechanismDP_Geng_Kairouz15, HypothesisMaxL_Liao2017, HypothesisTest&MutualInf_Liao2017}, a hard distortion measure is quite stringent but allows the data curator to make specific, deterministic guarantees on the fidelity of the released dataset relative to the original. Such a deterministic guarantee can lead to more accurate statistical estimators, e.g., the empirical distribution estimation for publicly released datasets such as the census.

\begin{figure*}
	\centering
	\subfloat[The privacy protection for entirely sensitive datasets. ]{%
		\includegraphics[width=3.5  in]{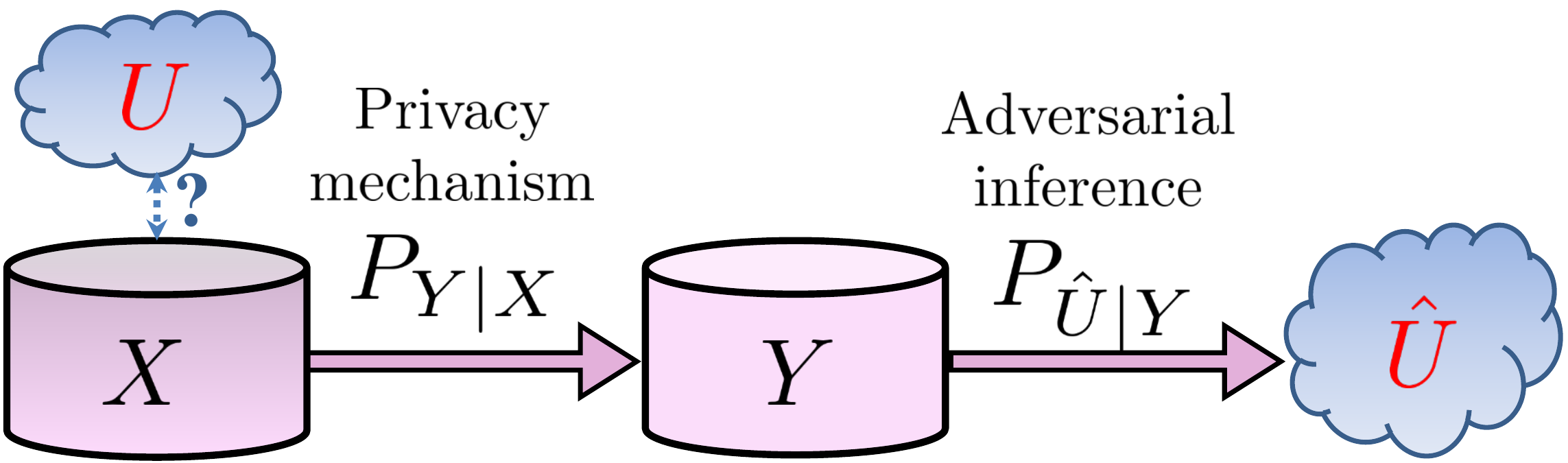}\label{fig:Datamodel_1}  }			                   
	\subfloat[The privacy protection for datasets with non-sensitive and sensitive data. ]{
		\includegraphics[width=3.5  in]{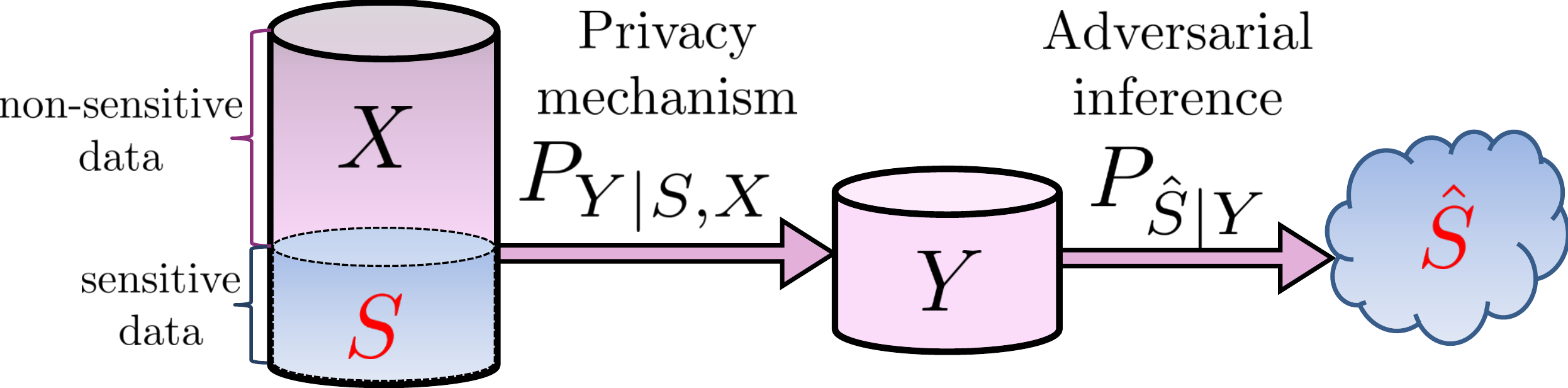}\label{fig:Datamodel_2}  }	                	
	\caption{Two privacy-guaranteed data publishing scenarios: (i) the left figure shows the privacy protection for entirely sensitive datasets, where $X$ and $Y$ represent the original and released data. An adversary intends to infer a function $U$ of $X$ from $Y$, and $\hat{U}$ is the adversary's estimation of $U$. Generally, the function $U$ is unknown to the data curator/provider; (ii) the right figure shows the privacy protection for datasets consisting of non-sensitive and sensitive data, where $X$ and $S$ represent the non-sensitive and sensitive data in original dataset, respectively, and $Y$ is the released version of $X$. The adversary intends to infer $S$ from $Y$, and $\hat{S}$ is the adversary's estimation of $S$. } 
\end{figure*}

\subsection{Contributions and Organization}
The main contributions of this paper include:
	\begin{itemize}
		\item We introduce a tunable loss function, namely $\alpha$-loss ($1\leq \alpha\leq \infty$), which captures log-loss and $0$-$1$ loss, respectively, for extremal values of $\alpha =1$ and $\infty$, respectively (Sec. \ref{Subsec:alpha-loss}).
		\item Based on $\alpha$-loss, we define two operational measures of information leakage: $\alpha$-leakage and maximal $\alpha$-leakage, and show that: (i) $\alpha$-leakage equals to Arimoto mutual information of order $\alpha$ \cite{alphaMI_verdu,AlphaMI_Arimoto1975}; and (ii) maximal $\alpha$-leakage equals to MI for $\alpha=1$ and Arimoto channel capacity \cite{AlphaMI_Arimoto1975} of order $\alpha$ for $\alpha>1$. Note that maximal $\alpha$-leakage captures MI and MaxL at the extremal values of $\alpha$ (Sec. \ref{Subsec:alpha leakage measures}). The proofs of these results rely on the fact that maximizing either the Arimoto MI or the Sibson MI \cite{alphaMI_Sibson1969} over the input distribution yields the same quantity, the Arimoto channel capacity.
		\item Inspired by the fact that maximal $\alpha$-leakage equals to the Arimoto channel capacity, we introduce a broader class of information-leakage measures based on $f$-divergences, which capture maximal $\alpha$-leakage as a special case (Sec. \ref{Subsec:f-divergence-based leakages});
		\item We prove that maximal $\alpha$-leakage satisfies several useful properties, including: (i) quasi-convexity, (ii) data-processing inequalities: post-processing inequality and linkage inequality, (iii) sub-additivity (iv) additivity for memoryless mappings (Sec. \ref{Sec:Properties}).
		\item In the context of privacy-guaranteed data publishing subject to a hard distortion utility constraint on data, we solve the resulting PUT problems exactly for maximal $\alpha$-leakage as well as its $f$-divergence-based variants (Sec. \ref{Subsec:PUT_for-HDvsMaxAlphaLK}). For $\alpha$-leakage, which restricts leakage about specific sensitive data as shown in Fig. \ref{fig:Datamodel_2}, we provide an inner bound of the optimal PUT (Sec. \ref{Subsec:PUT_for-HDvsAlphaLK}). In Sec. \ref{Sec:Examples}, we illustrate these results via two examples.
\end{itemize}

	\section{Preliminaries}\label{Sec:Preliminaries}
	We use capital letters to represent \textit{discrete} random variables, and the corresponding capital calligraphic and lower-case letters represent their \textit{finite} supports and the elements of the supports, respectively. For example, for a random variable $X$, its support is $\mcl X$ with any possible realization $x\in \mcl X$. In addition, we use $\log$ to represent the natural logarithm, and $[a,b]$ to indicate the set of integers from $a$ to $b$. We use $|\cdot|$ to indicate the cardinality of a set, e.g., $|\mcl X|$, and $\|\cdot\|_{p}$ to represent the $p$-norm of a vector, e.g., for $\alpha\geq 1$, $\|P_X\|_{\alpha}\triangleq (\sum_{x\in \mcl X}P_X(x)^{\alpha})^{\frac{1}{\alpha}}$.

	We begin by reviewing R{\'e}nyi entropy and divergence \cite{measures_renyi1961,RenyiDivergence_Erven}.
	\begin{definition}
		Given a distribution $P_X$, the R{\'e}nyi entropy of order $\alpha\in (0,1)\cup(1,\infty)$ is defined as
		\begin{align}
			\label{eq:renyi_entropy}
			H_{\alpha}(P_X)&=\frac{1}{1-\alpha}\log\sum_{x\in \mcl X}P_{X}(x)^{\alpha},\\
			&=\frac{\alpha}{1-\alpha}\log\|P_X\|_{\alpha}, \quad  (\alpha\geq 1). 
		\end{align}
		Let $Q_X$ be a distribution over the support of $P_X$. The R{\'e}nyi divergence (between $P_X$ and $Q_X$) of order $\alpha\in (0,1)\cup(1,\infty)$ is defined as 
		\begin{align}
			\label{eq:renyi_divergence}
			D_{\alpha}(P_X\|Q_X)=\frac{1}{\alpha-1}
			\log\left(\sum\limits_{x\in \mcl X}\frac{P_X(x)^{\alpha}}{Q_X(x)^{\alpha-1}}\right).
		\end{align}
		Both of the two quantities are defined by their continuous extensions for $\alpha=1$ and $\infty$. 
		Specifically, for $\alpha=\infty$, the two quantities are given by
		\begin{align}
			H_{\infty}(P_X) =\min_x\log \frac{1}{P_X(x)} ,
		\end{align}
		which is called min-entropy, and
		\begin{align}
			D_{\infty}(P_X\|Q_X) =\log \max_x \frac{P_X(x)}{Q_X(x)}.
		\end{align}
		For $\alpha=1$, the R{\'e}nyi entropy and divergence reduce to Shannon entropy and Kullback-Leibler divergence, respectively \cite{alphaMI_verdu}. 
	\end{definition}   
	The $\alpha$-leakage and maximal $\alpha$-leakage measures can be expressed in terms of Sibson MI \cite{alphaMI_Sibson1969} and Arimoto MI \cite{AlphaMI_Arimoto1975}. These quantities generalize the usual notion of MI. We review these definitions next.
	\begin{definition}
		Let discrete random variables $(X,Y)\sim P_{X,Y}$ with $P_X$ and $P_{Y|X}$ as the marginal and conditional distributions, respectively, and $Q_Y$ be an arbitrary distribution over the finite support $\mcl Y$. The Sibson mutual information of order $\alpha\in(0,1)\cup(1,\infty)$ is defined as
		\begin{align}
			\hspace{-5pt} I_\alpha^{\text{S}}(X;Y)\triangleq &\inf_{Q_Y}\,D_\alpha(P_{X,Y}\|P_X\times Q_Y)
			\label{eq:Sibson_MI}\\
			= &\frac{\alpha}{\alpha-1}\log \sum\limits_{y\in \mcl Y}\left(\sum\limits_{x\in \mcl X}P_X(x)P_{Y|X}(y|x)^{\alpha}\right)^{\frac{1}{\alpha}}.
		\end{align}
		The Arimoto mutual information of order $\alpha\in(0,1)\cup(1,\infty)$ is defined as
		\begin{IEEEeqnarray}{l l}
			\label{eq:Def_ArimotoMI}
			I_\alpha^{\text{A}}(X;Y)&\triangleq H_{\alpha}(X)-H_{\alpha}^{\text{A}}(X|Y)\\
			&=\frac{\alpha}{\alpha-1}\log\frac{\sum\limits_{y\in \mcl Y}\left(\sum\limits_{x\in \mcl X}P_{X,Y}(x,y)^{\alpha}\right)^{\frac{1}{\alpha}}}{\left(\sum\limits_{x\in \mcl X}P_X(x)^{\alpha}\right)^{\frac{1}{\alpha}}},\\
			\label{eq:Arimoto_MI}
			&=\frac{\alpha}{\alpha-1}\log\frac{\sum\limits_{y\in \mcl Y }\|P_{X,Y}(\cdot,y)\|_{\alpha}}{\|P_X\|_{\alpha}},\quad (\alpha\geq 1)
		\end{IEEEeqnarray}
		where $H_{\alpha}^{\text{A}}(X|Y)$ is Arimoto conditional entropy of $X$ given $Y$ defined as
		\begin{align}
			\label{eq:Def_ArimotoConditionalEntropy}
			H_{\alpha}^{\text{A}}(X|Y)=\frac{\alpha}{1-\alpha}\log\sum\limits_{y\in \mcl Y}\left(\sum\limits_{x\in \mcl X}P_{X,Y}(x,y)^{\alpha}\right)^{\frac{1}{\alpha}}.
		\end{align}
		All of these quantities are defined by their continuous extension for $\alpha=1$ or $\infty$.
	\end{definition}
	Note that for $\alpha=1$, both Sibson and Arimoto MIs reduce to Shannon's MI; however, for $\alpha=\infty$, the Sibson MI is 
	\begin{align}
		I_{\infty}^{\text{S}}(X;Y)=\log \sum\limits_{y}\max_x P_{Y|X}(y|x),
	\end{align}
	and the Arimoto MI is given by 
	\begin{align}
			I_{\infty}^{\text{A}}(X;Y)=\log\frac{\sum\limits_{y}\max\limits_{x}P_{X,Y}(x,y)}{\max\limits_{x}P_X(x)}.
	\end{align}
	The two measures of information generalize Shannon's MI and have a number of interesting and useful properties in various problems \cite{alphaMI_Sibson1969,AlphaMI_Arimoto1975,alphaMI_verdu,ConvexityAlphaMI_Ho}.

	\section{Tunable Loss Function and Information Leakage Measures}\label{Sec:Information Leakage Measures}	
    Information leakage of a data release can be viewed as an increase in adversarial inference as a result of the data release. This inference performance can be precisely characterized by a loss function that an adversary minimizes. In this section, we introduce a tunable loss function, namely \textit{$\alpha$-loss} for $\alpha\in [1,\infty]$, to captures a computationally unbounded adversary's inference in refining a posterior belief of one or more sensitive features from a data release, and introduce two tunable measures, called \textit{$\alpha$-leakage} and \textit{maximal $\alpha$-leakage}, respectively, to measure the corresponding information leakages due to the data release. 
    
	\subsection{$\alpha$-Loss Function}\label{Subsec:alpha-loss}
	 For a Markov chain $X-Y-\hat{X}$, let $\hat{X}$ be an estimator of $X$ and $P_{\hat{X}|Y}$ be a strategy for estimating $X$ from $Y$. We denote the probability of correctly estimating $X=x$ given $Y=y$ as $P_{\hat{X}|Y}(x|y)$. The estimation strategy $P_{\hat{X}|Y}$ is selected in order to minimize an \textit{expected loss} measure. Denoting the loss function by $\ell(x,y,P_{\hat{X}|Y})$, the expected loss is given by $\mathbb{E}\big[\ell\big(X,Y,P_{\hat{X}|Y}\big)\big]$. 
    \begin{figure}
    	\centering
    	\includegraphics[width=3.5 in]{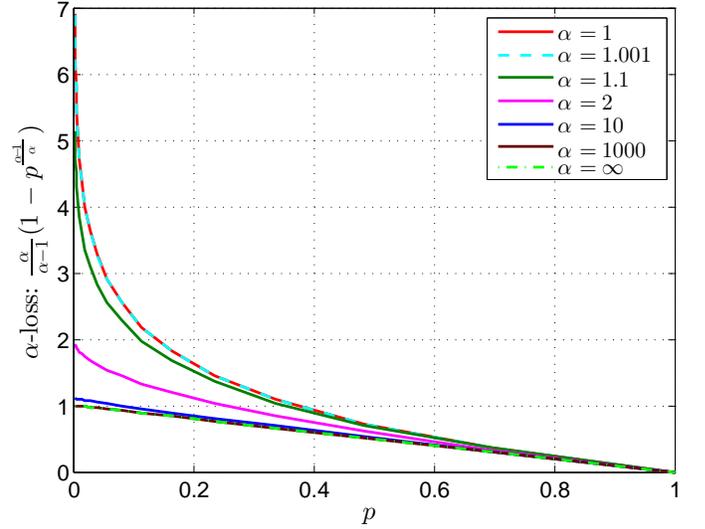}
    	\caption{The plot of $\alpha$-loss as a function of $p$. Note that the $p\in[0.001,1]$ represents the probability of correctly guessing, i.e., $p=P_{\hat{X}|Y}(x|y)$ with an observation $Y=y$ and $p=P_{\hat{X}}(x)$ without any observation. }
    	\label{fig:alpha-loss}
    \end{figure}
    
    One formulation of the loss function is the probability of \textit{incorrectly} guessing given by %
    \begin{equation}\label{pe_loss}
    \ell_{0-1}(x,y,P_{\hat{X}|Y})=1-P_{\hat{X}|Y}(x|y),
    \end{equation}
    such that the expected loss $\mathbb{E}\big[\ell_{0-1}\big(X,Y,P_{\hat{X}|Y}\big)\big]$ is the expected probability of error.
    Here, the optimal strategy $P_{\hat{X}|Y}^{\star}$ is the standard maximal posterior (MAP) estimator given by 
    \begin{align}
    P_{\hat{X}|Y}^{\star}(x|y) = \begin{cases} 
    1,& x=\arg\max\limits_{x\in\mcl X} P_{X|Y}(x|y)\\
    0,&\mbox{otherwise}
    \end{cases},
    \end{align}
    which makes the loss $\ell_{0-1}(x,y,P_{\hat{X}|Y}^{\star})$ be either $0$ or $1$, and therefore, called \textit{$0$-$1$ loss} in the literature \cite{Classification_Bartlett06, SurrogateLoss&fDivergence_Nguyen09}. The corresponding expected loss $\mathbb{E}\big[\ell_{0-1}\big(X,Y,P_{\hat{X}|Y}^{\star}\big)\big]$ is the minimal expected probability of error.
    
    To measure the uncertainty for the strategy $P_{\hat{X}|Y}$, the log-loss  (used, for example, in \cite{SurrogateLoss&fDivergence_Nguyen09,Merhav1998,Courtade2011,Courtade2014}) is given by
    \begin{equation}\label{log_loss}
    \ell_{\text{log}}(x,y,P_{\hat{X}|Y})=\log\frac{1}{P_{\hat{X}|Y}(x|y)}.
    \end{equation}
    The expected loss in this case is the conditional cross-entropy, given by
    \begin{IEEEeqnarray}{l l}
    	&\mathbb{E}\left[\ell_{\text{log}}(X,Y,P_{\hat{X}|Y})\right]\nonumber\\
   	=& \sum_{x,y} P_{X,Y}(x,y)\log\frac{1}{P_{\hat{X}|Y}(x|y)},\\
    =& H(X|Y)+\sum_y P_Y(y)D(P_{X|Y=y}\|P_{\hat{X}|Y=y}). \label{eq:expected_log_loss}
    \end{IEEEeqnarray} 
    Therefore, the optimal strategy is the true posterior distribution of $X$ given $Y$, i.e., $P_{\hat{X}|Y}^{\star}=P_{X|Y}$, which makes the expected loss in \eqref{eq:expected_log_loss} become the conditional entropy $H(X|Y)$. That is, the minimal expected log-loss is the true conditional entropy.

    \begin{figure}
    	\centering
    	\includegraphics[width=3.5 in]{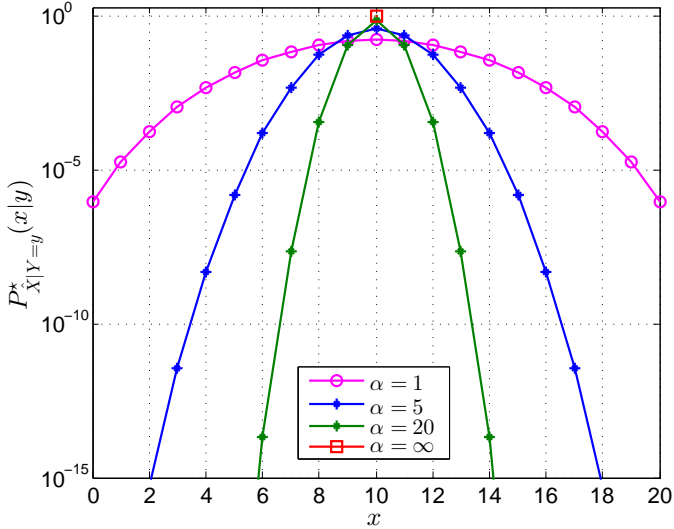}
    	\caption{The optimal strategy in \eqref{eq:alphaLoss-OptStrategy} for different $\alpha$. Note that the magenta circles represent the true conditional probability $P_{X|Y=y}$, which is a binomial distribution with parameters $(n,p)=(20,0.5)$. 
    	 }
    	\label{fig:alpha-loss-mech}
    \end{figure}

    Note that both the $0$-$1$ loss and log-loss functions are decreasing in the probability of correctly estimation $P_{\hat{X}|Y}(x|y)$. Specifically, for $P_{\hat{X}|Y}(x|y)=1$, both the values of $0$-$1$ loss and $\alpha$-loss are $0$, and for $P_{\hat{X}|Y}(x|y)=0$, the values of $0$-$1$ loss and log-loss become $1$ and $\infty$, respectively. To allow a continuous quantification of the loss for $P_{\hat{X}|Y}(x|y)=0$ from $1$ to $\infty$, we formally define a tunable loss function, namely \textit{$\alpha$-loss}, as follows.        
    \begin{definition}[{$\alpha$-loss}]\label{Def:alpha-loss}
    	Let random variables $X$, $Y$ and $\hat{X}$ form a Markov chain $X-Y-\hat{X}$, where $\hat{X}$ is an estimator of $X$. The $\alpha$-loss of the strategy $P_{\hat{X}|Y}$ for estimating $X$ from $Y$ is 
    	 \begin{equation}\label{poly_loss}
    	\ell_{\alpha}(x,y,P_{\hat{X}|Y})=\frac{\alpha}{\alpha-1} \big(1-P_{\hat{X}|Y}(x|y)^{\frac{\alpha-1}{\alpha}}\big),
    	\end{equation}
    	where $\alpha\in(1,\infty)$. It is defined by its continuous extension for $\alpha=1$ and $\alpha=\infty$, respectively, and is given by
    	\begin{align}
    	\label{eq:alpha-loss-to-log-loss}
    	 \hspace{-10pt}   \ell_{1}(x,y,P_{\hat{X}|Y})&=\hspace{-2pt}\lim_{\alpha\to 1}\ell_{\alpha}(x,y,P_{\hat{X}|Y})\hspace{-2pt}= \log \frac{1}{P_{\hat{X}|Y}(x|y)},\\
    	\label{eq:alpha-loss-to-01-loss}
    	\hspace{-10pt}	\ell_{\infty}\hspace{-1pt}(x,y,P_{\hat{X}|Y})\hspace{-2pt}&=\hspace{-2pt}\lim_{\alpha\to \infty}\ell_{\alpha}(x,y,P_{\hat{X}|Y})\hspace{-2pt}= \hspace{-2pt}1\hspace{-2pt}-\hspace{-2pt}P_{\hat{X}|Y}(x|y).
    	\end{align}
    \end{definition}
     Note that for $\alpha=1$, the expression in \eqref{eq:alpha-loss-to-log-loss} follows directly from the L'H{\^o}pital's rule and $\alpha$-loss becomes the log-loss in \eqref{log_loss}; and for $\alpha=\infty$, the loss in \eqref{eq:alpha-loss-to-01-loss} is exactly the probability of error in \eqref{pe_loss}, which becomes $0$-$1$ loss for MAP estimators. Fig.~\ref{fig:alpha-loss} plots the $\alpha$-loss function in \eqref{poly_loss} for different values of $\alpha$. From Fig.~\ref{fig:alpha-loss}, we observe that $\alpha$-loss function is decreasing and convex in  the probability of correctly guessing.

     \begin{lemma}\label{lem:Minimal-expectedAlphaLoss}
     	For $1\leq \alpha\leq \infty$, the minimal expected $\alpha$-loss is given by
     	\begin{align}
     	&\min_{P_{\hat{X}|Y}} \mathbb{E}\left[\ell_{\alpha}(X,Y,P_{\hat{X}|Y})\right]\nonumber\\
     	=&\begin{cases}
     	\frac{\alpha}{\alpha-1}\left(1-\exp\left(\frac{1-\alpha}{\alpha}H_{\alpha}^{\text{A}}(X|Y)\right)\right),& \alpha>1\\
     	H(X|Y), & \alpha=1
     	\end{cases},\label{eq:expected-alpha-loss}
     	\end{align}
     	with the optimal estimation strategy given by \footnote{Note that if there are more than one realization sharing the same maximal posterior belief, for $\alpha=\infty$ the optimal strategy in \eqref{eq:alphaLoss-OptStrategy} will output these most likely values with the same probability.}
     	\begin{align}\label{eq:alphaLoss-OptStrategy}
     	P^{\star}_{\hat{X}|Y}(x|y)=\frac{P_{\hat{X}|Y}(x|y)^{\alpha}}{\sum\limits_{x\in\mcl X}P_{\hat{X}|Y}(x|y)^{\alpha}}.
     	\end{align}
     \end{lemma}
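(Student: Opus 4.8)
The plan is to establish the case $\alpha\in(1,\infty)$ by a direct per-symbol optimization and to recover the endpoints $\alpha=1$ and $\alpha=\infty$ by continuous extension. First I would expand the expected loss, using $P_{X,Y}(x,y)=P_Y(y)P_{X|Y}(x|y)$, as
\[ \mathbb{E}\left[\ell_\alpha(X,Y,P_{\hat X|Y})\right]=\frac{\alpha}{\alpha-1}\sum_{y}P_Y(y)\Bigl(1-\sum_{x}P_{X|Y}(x|y)\,P_{\hat X|Y}(x|y)^{\frac{\alpha-1}{\alpha}}\Bigr). \]
Because $\frac{\alpha}{\alpha-1}>0$ for $\alpha>1$ and the constraint $\sum_x P_{\hat X|Y}(x|y)=1$ separates across the values of $y$, the minimization decouples into independent problems: for each fixed $y$, maximize $\sum_x P_{X|Y}(x|y)\,q_x^{(\alpha-1)/\alpha}$ over the probability simplex $q=(q_x)_x$.

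The crux is this single per-$y$ maximization, which I would resolve with H\"older's inequality for the conjugate exponents $r=\alpha$ and $s=\frac{\alpha}{\alpha-1}$. Writing the summand as $P_{X|Y}(x|y)\cdot q_x^{(\alpha-1)/\alpha}$ and noting that the second factor raised to the power $s$ is exactly $q_x$, H\"older gives
\[ \sum_x P_{X|Y}(x|y)\,q_x^{\frac{\alpha-1}{\alpha}}\le\Bigl(\sum_x P_{X|Y}(x|y)^{\alpha}\Bigr)^{\frac1\alpha}\Bigl(\sum_x q_x\Bigr)^{\frac{\alpha-1}{\alpha}}=\Bigl(\sum_x P_{X|Y}(x|y)^{\alpha}\Bigr)^{\frac1\alpha}, \]
where the last equality uses $\sum_x q_x=1$. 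H\"older's equality condition $q_x\propto P_{X|Y}(x|y)^{\alpha}$ then pins down the optimal (tilted) estimator $P^\star_{\hat X|Y}(x|y)=P_{X|Y}(x|y)^{\alpha}/\sum_{x'}P_{X|Y}(x'|y)^{\alpha}$, matching \eqref{eq:alphaLoss-OptStrategy} (with the true posterior $P_{X|Y}$ on the right-hand side). Equivalently, since $q\mapsto q^{(\alpha-1)/\alpha}$ is concave for $\alpha>1$, a single stationarity condition from a Lagrange multiplier on the simplex would identify the same global maximizer.

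Substituting the maximal value and summing over $y$ yields
\[ \min_{P_{\hat X|Y}}\mathbb{E}\left[\ell_\alpha\right]=\frac{\alpha}{\alpha-1}\left(1-\sum_y P_Y(y)\left(\sum_x P_{X|Y}(x|y)^{\alpha}\right)^{\frac1\alpha}\right). \]
To match \eqref{eq:expected-alpha-loss}, I would fold the prior back in via $P_Y(y)=\bigl(P_Y(y)^{\alpha}\bigr)^{1/\alpha}$, turning the inner sum into $\sum_y\bigl(\sum_x P_{X,Y}(x,y)^{\alpha}\bigr)^{1/\alpha}$, which by the definition \eqref{eq:Def_ArimotoConditionalEntropy} of Arimoto conditional entropy equals $\exp\!\bigl(\frac{1-\alpha}{\alpha}H_\alpha^{\text{A}}(X|Y)\bigr)$; this gives the stated $\alpha>1$ formula. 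The endpoint $\alpha=1$ is the log-loss computation already carried out in \eqref{eq:expected_log_loss}, with optimal strategy $P_{X|Y}$ and value $H(X|Y)$; I would confirm it is the $\alpha\to1$ limit of the $\alpha>1$ formula by a short L'H\^opital argument, and check that as $\alpha\to\infty$ the tilted estimator concentrates on $\arg\max_x P_{X|Y}(x|y)$ (with ties shared, per the footnote) while the value tends to $1-\sum_y\max_x P_{X,Y}(x,y)$, the minimal error probability. The only real work is bookkeeping: choosing the correct conjugate pair in H\"older and algebraically reabsorbing $P_Y(y)$ into the Arimoto conditional entropy; no genuine obstacle arises beyond care at the two endpoints.
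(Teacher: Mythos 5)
Your proposal is correct, and it follows the paper's overall structure: the same expansion of the expected loss, the same decoupling into independent per-$y$ maximizations over the simplex, and the same reabsorption of $P_Y(y)$ to express the optimal value through the Arimoto conditional entropy $H_\alpha^{\text{A}}(X|Y)$. The one genuine difference is how you resolve the per-$y$ maximization of $\sum_x P_{X|Y}(x|y)\,q_x^{(\alpha-1)/\alpha}$: the paper sets this up as a convex program and invokes KKT conditions (Appendix A), whereas you apply H\"older's inequality with conjugate exponents $(\alpha,\tfrac{\alpha}{\alpha-1})$, which yields the upper bound $\|P_{X|Y}(\cdot|y)\|_\alpha$ and, via H\"older's equality condition $q_x\propto P_{X|Y}(x|y)^\alpha$, the tilted optimal estimator in a single stroke. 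Your route is somewhat more self-contained: it avoids convex-optimization machinery and delivers the maximizer and the maximum simultaneously, while the KKT route is more mechanical but generalizes more readily if the constraint set were not the full simplex. Both handle the endpoints identically ($\alpha=1$ via the log-loss computation, $\alpha\to\infty$ via concentration of the tilted distribution on the posterior modes with ties shared). You were also right to note, parenthetically, that the right-hand side of the lemma's displayed optimal strategy should read $P_{X|Y}$ (the true posterior) rather than $P_{\hat{X}|Y}$; the paper's proof confirms this is a typo in the lemma statement.
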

 A detailed proof is in Appendix \ref{Proof:lem:Minimal-expectedAlphaLoss}.
 Note that in \eqref{eq:expected-alpha-loss}, $H_{\alpha}^{\text{A}}(X|Y)$ is Arimoto conditional entropy of $X$ given $Y$ in \eqref{eq:Def_ArimotoConditionalEntropy}. For $\alpha=\infty$, the expression of $ H^{\text{A}}_{\infty}(X|Y)$ is
 \begin{align}
 H^{\text{A}}_{\infty}(X|Y)=\log\sum_{y}P_Y(y)\max_x P_{X|Y}(x|y),
 \end{align} such that $\exp\left(H^{\text{A}}_{\infty}(X|Y)\right)$ is the maximal expected probability of correctly guessing $X$ from $Y$. Therefore, for $\alpha=\infty$, the minimal expected $\alpha$-loss is the minimal expected probability of error. In addition, the optimal estimation strategy in \eqref{eq:alphaLoss-OptStrategy} becomes the true posterior distribution of $X$ for $\alpha=1$ and the MAP estimator for $\alpha=\infty$, respectively.
\begin{example}
 	Let the conditional probability distribution of $X$ given $Y=y$ be a binomial distribution with parameters $(n,p)=(20,0.5)$, i.e., $P_{X|Y}(x|y)={20\choose x}0.5^x0.5^{20-x}$ for $x\in[0,20]$. Fig.~\ref{fig:alpha-loss-mech} shows the optimal strategies in \eqref{eq:alphaLoss-OptStrategy} for different values of $\alpha$. We observe from Fig.~\ref{fig:alpha-loss-mech} that as $\alpha$ grows from $1$ to $\infty$, the optimal strategy gradually eliminates the less likely values of $X$ (given $y$) and transforms from the true posterior distribution to the MAP estimator. 	
 \end{example}

   	\subsection{$\alpha$-Leakage and Maximal $\alpha$-Leakage}\label{Subsec:alpha leakage measures} 
	Let $X$ and $Y$ represent the original data and released data, respectively, and let $U$ represent an arbitrary (potentially random) function of $X$ that the observer (a curious or malicious user of the released data $Y$) is interested in learning. 
	In \cite{OperationalLeak_issa2018}, Issa \textit{et al.} 
	introduced MaxL to quantify the maximal gain in an adversary's ability of guessing $U$ after observing $Y$. We review the definition below.
	\begin{definition}[{\cite[Def. 1]{OperationalLeak_issa2018}}]\label{Def:MaximalLeakage}
		Given a joint distribution $P_{X,Y}$ on finite alphabets, the \textit{maximal leakage} from $X$ to $Y$ is
		\begin{equation}\label{ml_op_def}
			\mcl L_{\text{MaxL}}(X\to Y)\triangleq\sup_{U- X- Y} \log \frac{\max\limits_{P_{\hat{U}|Y}} \mathbb{E}\left[P_{\hat{U}|Y}(U|Y)\right]}{\max\limits_{u}  P_{U}(u)},
		\end{equation}
		where $\hat{U}$ represents an estimator taking values from the same arbitrary finite support as $U$.
	\end{definition} 
		Note that the numerator of the logarithmic term in \eqref{ml_op_def} is the maximal expected probability of correctly guessing $U$ with $Y$ given by
		\begin{align}\label{eq:MaxL-0-1Loss}
			\max\limits_{P_{\hat{U}|Y}} \mathbb{E}\left[P_{\hat{U}|Y}(U|Y)\right]=\max\limits_{u}\sum\limits_{y}P_{Y}(y)P_{U|Y}(u|y),
		\end{align}
		which is exactly the complement of the minimal expected $0$-$1$ loss in guessing $U$ with $Y$. Similarly, the denominator is the complement of the minimal expected $0$-$1$ loss in guessing $U$ without $Y$. Therefore, MaxL is a leakage measure related to $0$-$1$ loss in \eqref{pe_loss}. 
		
		In addition, in Def. \ref{Def:MaximalLeakage}, $U$ represents any (possibly random) function of $X$. The numerator represents the maximal probability of correctly guessing $U$ based on $Y$, while the denominator represents the maximal probability of correctly guessing $U$ \emph{without} knowing $Y$. Thus, MaxL quantifies the maximal logarithmic gain in guessing any possible function of $X$ when an adversary has access to $Y$.

	Analogously to the derivation of MaxL from $0$-$1$ loss, we introduce $\alpha$-leakage and maximal $\alpha$-leakage based on $\alpha$-loss (under the assumptions of discrete random variables and finite supports). The formal definitions are as follows. 
 
	\begin{definition}[{$\alpha$-Leakage}]\label{Def:alphaLeakge}
		Given a joint distribution $P_{X,Y}$ and an estimator $\hat{X}$ with the same support as $X$, the $\alpha$-leakage from $X$ to $Y$ is defined as
		\begin{align}
			\label{eq:alphaLeak_definition}
			\mcl L_{\alpha}(X\to Y)
				\triangleq\frac{\alpha}{\alpha-1}\log\frac{\max\limits_{P_{\hat{X}|Y}}\mathbb{E}\left[P_{\hat{X}|Y}(X|Y)^{\frac{\alpha-1}{\alpha}}\right]}{\max\limits_{P_{\hat{X}}}\mathbb{E}\left[P_{\hat{X}}(X)^{\frac{\alpha-1}{\alpha}}\right]},
		\end{align}
		for $\alpha\in(1,\infty)$ and by the continuous extension of \eqref{eq:alphaLeak_definition} for $\alpha = 1$ and $\infty$.
	\end{definition}
	Note that for any specific function $U$ of $X$, the joint probability distribution of $X$ and the $U$ is known, and therefore, $\alpha$-leakage can also be used to measure the the inference gain in inferring the specific function $U$ from the released data $Y$. In addition, the two maximizations in the numerator and denominator of the logarithmic ratio in \eqref{eq:alphaLeak_definition} imply the optimal adversarial actions in the sense of minimizing the expected $\alpha$-loss in Lemma \ref{lem:Minimal-expectedAlphaLoss}. Therefore, it limits the inference gain that an adversary can obtain by minimizing the expected $\alpha$-loss, no matter the adversary has prior knowledge (i.e., the probability distribution of the original data) of the original data or not.
	
	Whereas $\alpha$-leakage captures how much an adversary can learn about $X$ (or a specific function of $X$) from $Y$, we also wish to quantify the information leaked about \textit{any function} of $X$ through $Y$. To this end, we define maximal $\alpha$-leakage below.
	\begin{definition}[Maximal $\alpha$-Leakage]\label{Def:GeneralLeakge}
		Given a joint distribution $P_{X,Y}$ on finite alphabets $\mcl X\times\mcl Y$, the maximal $\alpha$-leakage from $X$ to $Y$ is defined as
	\begin{align}
			\label{eq:GealLeak_definition}
			\mcl L_{\alpha}^{\text{max}}(X\to Y)		
			\triangleq&\sup_{U- X- Y }\mcl L_{\alpha}(U; Y),		
		\end{align}
		where $1\leq \alpha\leq \infty$, and $U$ represents any function of $X$ and takes values from an arbitrary finite alphabet. 
	\end{definition}	
	Note that for $\alpha\geq 1$,
	\begin{align}\label{eq:MaxAlphaLK-AlphaLoss}
	&\max\limits_{P_{\hat{U}|Y}} \mathbb{E}\left[P_{\hat{U}|Y}(U|Y)^{\frac{\alpha-1}{\alpha}}\right]\nonumber \\
	=&1-\frac{\alpha-1}{\alpha}\min\limits_{P_{\hat{U}|Y}}\mathbb{E}\left[\ell_{\alpha}(U,Y,P_{\hat{U}|Y})\right].
	\end{align}
	Thus, there is a similar connection between maximal $\alpha$-leakage and $\alpha$-loss (in Def. \ref{Def:alpha-loss}) as that observed in \eqref{eq:MaxL-0-1Loss} between MaxL and $0$-$1$ loss, and maximal $\alpha$-leakage quantifies an adversary's capability to infer \textit{any function} of data $X$ from the released $Y$.

	Making use of the result in Lemma \ref{lem:Minimal-expectedAlphaLoss}, the following theorem simplifies the expression of $\alpha$-leakage in \eqref{eq:alphaLeak_definition}.
	\begin{theorem}\label{Thm:DefEquialentExpression_alphaleakage}
		For $1\leq \alpha\leq \infty$, $\alpha$-leakage defined in \eqref{eq:alphaLeak_definition} simplifies to
		\begin{align}\label{eq:alphaLeak_EquivDef}
			\mcl L_{\alpha}(X\to Y)=I_{\alpha}^{\text{A}}(X;Y) .
		\end{align}	
	\end{theorem}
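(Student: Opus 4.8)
The plan is to reduce the entire statement to Lemma~\ref{lem:Minimal-expectedAlphaLoss}, since the two maximizations appearing in the definition of $\alpha$-leakage are exactly (affine reparametrizations of) minimal expected $\alpha$-loss problems. First I would rewrite the numerator. Using the identity in \eqref{eq:MaxAlphaLK-AlphaLoss} specialized to $U=X$, namely
\begin{equation}
\max_{P_{\hat{X}|Y}}\mathbb{E}\!\left[P_{\hat{X}|Y}(X|Y)^{\frac{\alpha-1}{\alpha}}\right]=1-\frac{\alpha-1}{\alpha}\min_{P_{\hat{X}|Y}}\mathbb{E}\!\left[\ell_{\alpha}(X,Y,P_{\hat{X}|Y})\right],
\end{equation}
I would substitute the closed form of the minimal expected $\alpha$-loss from \eqref{eq:expected-alpha-loss}. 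The two $\frac{\alpha-1}{\alpha}$ and $\frac{\alpha}{\alpha-1}$ factors cancel and the additive $1$'s cancel, leaving the clean expression
\begin{equation}
\max_{P_{\hat{X}|Y}}\mathbb{E}\!\left[P_{\hat{X}|Y}(X|Y)^{\frac{\alpha-1}{\alpha}}\right]=\exp\!\left(\frac{1-\alpha}{\alpha}H_{\alpha}^{\text{A}}(X|Y)\right),
\end{equation}
with the optimal strategy given by \eqref{eq:alphaLoss-OptStrategy}.

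Next I would treat the denominator as the degenerate special case of the same problem in which the observation carries no information, i.e.\ $Y$ is a constant (or, equivalently, the estimator $\hat{X}$ may not depend on $Y$). Applying the identical reduction, the relevant Arimoto conditional entropy collapses to the R\'enyi entropy $H_{\alpha}(X)$ of \eqref{eq:renyi_entropy}, so that
\begin{equation}
\max_{P_{\hat{X}}}\mathbb{E}\!\left[P_{\hat{X}}(X)^{\frac{\alpha-1}{\alpha}}\right]=\exp\!\left(\frac{1-\alpha}{\alpha}H_{\alpha}(X)\right).
\end{equation}
Substituting both expressions into \eqref{eq:alphaLeak_definition} and simplifying, the prefactor $\frac{\alpha}{\alpha-1}$ multiplies $\frac{1-\alpha}{\alpha}\bigl(H_{\alpha}^{\text{A}}(X|Y)-H_{\alpha}(X)\bigr)$, yielding $H_{\alpha}(X)-H_{\alpha}^{\text{A}}(X|Y)$, which is precisely $I_{\alpha}^{\text{A}}(X;Y)$ by the definition in \eqref{eq:Def_ArimotoMI}. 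This establishes \eqref{eq:alphaLeak_EquivDef} for $\alpha\in(1,\infty)$.

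Finally I would dispatch the extremal cases $\alpha=1$ and $\alpha=\infty$ by continuity. For $\alpha=1$ the numerator and denominator of the log-ratio each tend to $1$, so the $\tfrac{0}{0}$ form must be resolved via L'H\^opital's rule (or directly from the $\alpha=1$ branch of Lemma~\ref{lem:Minimal-expectedAlphaLoss}), recovering $H(X)-H(X|Y)=I(X;Y)=I_{1}^{\text{A}}(X;Y)$; for $\alpha=\infty$ one takes the limit of the closed form and matches it against $I_{\infty}^{\text{A}}(X;Y)$. Since both $\mathcal{L}_{\alpha}$ and $I_{\alpha}^{\text{A}}$ are defined by continuous extension at the endpoints, agreement on $(1,\infty)$ together with continuity suffices. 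I expect the only delicate step to be the denominator reduction: one must verify carefully that removing the dependence on $Y$ indeed turns the Arimoto conditional entropy into the unconditioned R\'enyi entropy $H_{\alpha}(X)$ (equivalently, that the optimal prior-only strategy is the tilted distribution $P_{\hat{X}}(x)\propto P_X(x)^{\alpha}$), and that the endpoint limits are taken consistently with the continuous extensions used to define all the quantities involved.
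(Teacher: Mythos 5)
Your proposal is correct and follows essentially the same route as the paper's proof in Appendix~\ref{Proof:DefEquialentExpression_alphaleakage}: reduce the numerator to Lemma~\ref{lem:Minimal-expectedAlphaLoss} via the identity \eqref{eq:MaxAlphaLK-AlphaLoss}, identify the denominator as $\exp\bigl(\frac{1-\alpha}{\alpha}H_{\alpha}(X)\bigr)$ with the tilted optimizer $P_{\hat{X}}(x)\propto P_X(x)^{\alpha}$, and handle $\alpha=1,\infty$ by the continuous extensions. The only (harmless) difference is that the paper resolves the denominator by reapplying KKT conditions directly, whereas you obtain it as the constant-$Y$ degenerate case of Lemma~\ref{lem:Minimal-expectedAlphaLoss}, using the fact that $H_{\alpha}^{\text{A}}(X|Y)$ collapses to $H_{\alpha}(X)$ when $Y$ is deterministic.
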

	From \eqref{eq:MaxAlphaLK-AlphaLoss} and Lemma \ref{lem:Minimal-expectedAlphaLoss}, we simplify the scaled logarithm of the ratio in \eqref{eq:alphaLeak_definition} to Arimoto MI. A detailed proof is in Appendix \ref{Proof:DefEquialentExpression_alphaleakage}, where we show that Arimoto conditional entropy and R{\'e}nyi entropy capture the inference uncertainties of an adversary for knowing $Y$ or not, respectively, and $\alpha$-leakage measures the decrease in the inference uncertainty by knowing $Y$. 
	
	Making use of the conclusion in Thm. \ref{Thm:DefEquialentExpression_alphaleakage}, the following theorem gives equivalent expressions for maximal $\alpha$-leakage. Note that in the following theorem we use the well-known equivalence of the supremums of Sibson and Arimoto MIs \cite[Thm. 5]{alphaMI_verdu}.
	\begin{theorem}\label{Thm:DefEquialentExpression}
		For $1\leq \alpha\leq \infty$, the maximal $\alpha$-leakage defined in \eqref{eq:GealLeak_definition} simplifies to			\vspace{5pt}\\
\begin{subequations}
	\label{eq:GealLeak_EquivDef}
	$\mcl L_{\alpha}^{\text{max}}(X\to Y)$	
	\begin{align}[left={=\empheqlbrace\,}]
	&\sup_{P_{\tilde{X}}}I^{\text{S}}_\alpha(\tilde{X};Y)=\sup_{P_{\tilde{X}}}I^{\text{A}}_\alpha(\tilde{X};Y),& 1<\alpha\leq \infty  \label{eq:GealLeak_EquivDef_1infty}\\
	& I(X;Y),  &\alpha=1  \label{eq:GealLeak_EquivDef_1}
	\end{align}
\end{subequations}	
		where $P_{\tilde{X}}$ is a probability distribution over the support of $P_X$.
	\end{theorem}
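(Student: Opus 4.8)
The plan is to reduce maximal $\alpha$-leakage to a supremum of Arimoto mutual informations and then prove matching upper and lower bounds that pin this supremum to the Arimoto channel capacity. Applying Theorem~\ref{Thm:DefEquialentExpression_alphaleakage} to the (possibly randomized) function $U$ of $X$ gives $\mcl L_\alpha(U;Y)=I_\alpha^{\text{A}}(U;Y)$, so Definition~\ref{Def:GeneralLeakge} reads $\mcl L_\alpha^{\text{max}}(X\to Y)=\sup_{U-X-Y} I_\alpha^{\text{A}}(U;Y)$. For $\alpha=1$, all three quantities collapse to Shannon's MI, and the ordinary data-processing inequality yields $I(U;Y)\le I(X;Y)$ with equality at $U=X$, which settles \eqref{eq:GealLeak_EquivDef_1}. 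The remainder of the plan treats $1<\alpha\le\infty$.

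For the upper bound I would argue, for every $U-X-Y$,
\begin{align}
I_\alpha^{\text{A}}(U;Y) &\le \sup_{P_{\tilde U}} I_\alpha^{\text{A}}(\tilde U;Y) = \sup_{P_{\tilde U}} I_\alpha^{\text{S}}(\tilde U;Y)\nonumber\\
&\le \sup_{P_{\tilde X}} I_\alpha^{\text{S}}(\tilde X;Y),\nonumber
\end{align}
where the first inequality is trivial ($P_U$ is a particular input to the channel $P_{Y|U}$), the equality is the Sibson--Arimoto supremum identity \cite[Thm. 5]{alphaMI_verdu}, and the last inequality is a data-processing step. Indeed, since $U-X-Y$, each row $P_{Y|U}(\cdot|u)=\sum_x P_{X|U}(x|u)P_{Y|X}(\cdot|x)$ is a mixture of rows of $P_{Y|X}$; holding this $P_{X|U}$ fixed, any input $P_{\tilde U}$ induces $P_{\tilde X}(x)=\sum_u P_{\tilde U}(u)P_{X|U}(x|u)$ with $\tilde U-\tilde X-Y$, and the data-processing inequality of R{\'e}nyi divergence gives $I_\alpha^{\text{S}}(\tilde U;Y)\le I_\alpha^{\text{S}}(\tilde X;Y)$. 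Taking the supremum over $U$ yields $\mcl L_\alpha^{\text{max}}(X\to Y)\le \sup_{P_{\tilde X}} I_\alpha^{\text{S}}(\tilde X;Y)$.

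For the lower bound I would exhibit a single family of functions realizing an \emph{arbitrary} reweighting of the input inside the Sibson formula. Fix a target $P_{\tilde X}$ and split each symbol $x$ into $k_x$ equiprobable copies: let $U$ take values in $\{(x,i): x\in\mcl X,\, 1\le i\le k_x\}$ with $P_{U|X}((x,i)|x)=1/k_x$, so that $P_{X|U}=\delta_x$ on copy $(x,i)$ and $P_U((x,i))=P_X(x)/k_x$. A direct computation then gives $I_\alpha^{\text{A}}(U;Y)=I_\alpha^{\text{S}}(\nu;Y)$ with $\nu(x)\propto k_x^{1-\alpha}P_X(x)^\alpha$, because for $\alpha>1$ the replication multiplicities enter the Arimoto ratio exactly as an input reweighting. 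Choosing $k_x\approx M\,(P_X(x)^\alpha/P_{\tilde X}(x))^{1/(\alpha-1)}$ and letting $M\to\infty$ drives $\nu\to P_{\tilde X}$, so $\sup_{U-X-Y} I_\alpha^{\text{A}}(U;Y)\ge \sup_{P_{\tilde X}} I_\alpha^{\text{S}}(\tilde X;Y)$. Combining the two bounds and invoking \cite[Thm. 5]{alphaMI_verdu} once more for the equality of the two suprema in \eqref{eq:GealLeak_EquivDef_1infty} finishes the argument; the case $\alpha=\infty$ follows by continuous extension, where $I_\infty^{\text{S}}$ is input-independent over full-support inputs and the bound reduces to maximal leakage.

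I expect the main obstacle to be the lower-bound construction: verifying that the multiplicities $k_x^{1-\alpha}$ act precisely as the input weights $\nu(x)$ in the Sibson expression, and controlling the integer rounding of $k_x$ so that $\nu_k\to P_{\tilde X}$. The contrast with $\alpha=1$ must be respected and is instructive: there $k_x^{1-\alpha}=1$, the replication has no effect, the reachable $\nu$ is frozen at $P_X$, and the leakage correctly collapses to $I(X;Y)$ rather than the capacity.
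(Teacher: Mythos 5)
Your proposal is correct and follows essentially the same route as the paper's proof: reduce to $\sup_{U-X-Y} I_\alpha^{\text{A}}(U;Y)$ via Theorem~\ref{Thm:DefEquialentExpression_alphaleakage}, upper-bound by combining the Sibson--Arimoto supremum identity with Sibson's data-processing inequality, and lower-bound by splitting each symbol $x$ into copies so that the multiplicities reweight the input as $k_x^{1-\alpha}P_X(x)^\alpha$ inside the Sibson expression---which is exactly the paper's construction of a $U$ with $H(X|U)=0$ and induced $P_{\tilde X}(x)\propto\sum_{u\in\mcl U_x}P_U(u)^\alpha$, with your equiprobable splitting plus limiting argument standing in for the paper's (asserted) exact realizability of any $P_{\tilde X}$. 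The only cosmetic difference is that the paper handles $\alpha=\infty$ separately via the maximal-leakage result of Issa et al.\ rather than by continuous extension, but the same bounding argument applies verbatim at $\alpha=\infty$.
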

		Note that maximal $\alpha$-leakage is essentially the Arimoto channel capacity (with a support-set constrained input distribution) for $\alpha> 1$ \cite{AlphaMI_Arimoto1975}, which is used to characterize probabilities of decoding error for scenarios in which transmission rates are higher than channel capacity. The limit of maximal $\alpha$-leakage for $\alpha=1$ gives the Shannon channel capacity. Recall that the limit of $\alpha$-loss in \eqref{poly_loss} leads to the log-loss (for $\alpha=1$) and 0-1 loss (for $\alpha=\infty$) functions, respectively. Consequently, for $\alpha=1$ and $\infty$, maximal $\alpha$-leakage simplifies to MI and MaxL, respectively. 

   A detailed proof for Thm. \ref{Thm:DefEquialentExpression} is in Appendix \ref{Proof:DefEquialentExpression}. We summarize key steps in the proof as follows: by applying Thm. \ref{Thm:DefEquialentExpression_alphaleakage}, we write maximal $\alpha$-leakage as
	\begin{align}
		\label{eq:Thm_MaxAlphaLeak_ProofSketch}
		\mcl L_{\alpha}^{\text{max}}(X\to Y)&=\sup_{U-X-Y}I_{\alpha}^{\text{A}}(U;Y)\quad \alpha\in[1,\infty].
	\end{align}
	For $\alpha=1$, Arimoto MI is simply the Shannon's MI, and combining with the data processing inequalities, \eqref{eq:Thm_MaxAlphaLeak_ProofSketch} simplifies to $I(X;Y)$. Note that for $\alpha> 1$, Arimoto MI does not satisfy data processing inequalities. By using the facts that Arimoto MI and Sibson MI have the same supremum \cite[Thm. 5]{alphaMI_verdu} and that Sibson MI satisfies data processing inequalities \cite[Thm. 3]{alphaMI_verdu}, we limit the supremum in \eqref{eq:Thm_MaxAlphaLeak_ProofSketch}
	by $\sup_{P_{\tilde{X}}}I^{\text{S}}_\alpha(\tilde{X};Y)$, and then, show that the upper bound $\sup_{P_{\tilde{X}}}I^{\text{S}}_\alpha(\tilde{X};Y)$ can be achieved by a specific $U$ with $H(X|U)=0$.

	\begin{example}
		Given a binary channel 
		\begin{align}
			P_{Y|X}=\begin{bmatrix}
			1-\rho_1 & \rho_1\\
			\rho_2   & 1-\rho_2
			\end{bmatrix},
		\end{align}
		where $\rho_1,\rho_2\in[0,1]$ are the crossover probabilities, maximal $\alpha$-leakage in \eqref{eq:GealLeak_EquivDef} is given by
		\begin{IEEEeqnarray}{l l}
			 &\mcl L_{\alpha}^{\text{max}}(X\to Y)\nonumber\\
			 =&\frac{\alpha}{\alpha-1}\log \Bigg(\Big|(1-\rho_1)^\alpha(1-\rho_2)^\alpha-\rho_1^\alpha\rho_2^\alpha\Big|^\frac{1}{\alpha} \nonumber\\
			 &\,\cdot\bigg(\Big|(1-\rho_2)^{\alpha}\hspace{-2pt}-\hspace{-1pt}\rho_1^{\alpha}\Big|^\frac{1}{1-\alpha}\hspace{-2pt}+\hspace{-1pt}\Big|(1-\rho_1)^{\alpha}\hspace{-2pt}-\hspace{-1pt}\rho_2^{\alpha}\Big|^\frac{1}{1-\alpha}\bigg)^{\hspace{-2pt}\frac{\alpha-1}{\alpha}}\hspace{-1pt}\Bigg).\label{eq:MaxL-Binary}
		\end{IEEEeqnarray}
		If $\rho_1=\rho_2$, \eqref{eq:MaxL-Binary} simplifies to 
		\begin{align}
			\mcl L_{\alpha}^{\text{max}}(X\to Y)=\frac{1}{\alpha-1}\log \left((1-\rho_1)^\alpha+\rho_1^\alpha\right)+\log2,
		\end{align} which is exactly the $\alpha$-leakage for the binary symmetric channel with the uniform input distribution.
		Fig.~\ref{fig:MaxAlphaLK} plots the values of maximal $\alpha$-leakage for example channels where $\rho_1=\rho_2$ and $\rho_1\neq\rho_2$, and shows that the ordering of leakages for the two channels varies with $\alpha$.
	\end{example}
 \begin{figure}
	\centering
	\captionsetup{justification=centering} 	
	\includegraphics[width=3.5  in]{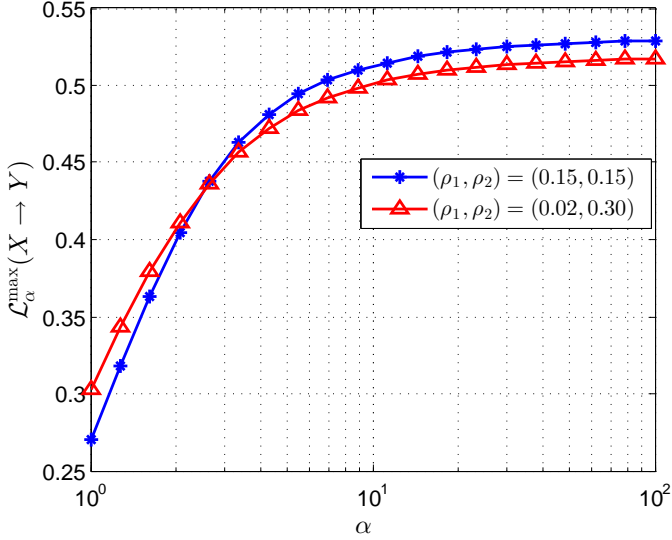}\label{fig:MaxAlphaLK-value} 			                   
	\caption{The values of maximal $\alpha$-leakage for binary channels determined by a pair of crossover probabilities $(\rho_1,\rho_2)$.} 
	\label{fig:MaxAlphaLK}  	
\end{figure}

	\subsection{Leakage Measures Based on $f$-Divergence}\label{Subsec:f-divergence-based leakages}
	We introduce two classes of information leakages derived from $f$-divergence, called \textit{$f$-leakage} and \textit{maximal $f$-leakage}. The $f$-leakage depends on the distribution of original data, and in contrast, maximal $f$-divergence only depends on the support of original data. We also show the relation between the $f$-divergence-based measures and maximal $\alpha$-leakage for $\alpha=1$ and $\alpha>1$, respectively.
	
	Recall that for a convex function $f:\mathbb{R}_+\to \mathbb{R}$ such that $f(1)=0$, an $f$-divergence $D_f$ is a measure of the distance between two distributions given by
	\begin{align}\label{eq:f-divergence}
	D_f(P_Y\|Q_Y) = \sum_y Q(y)\, f\left(\frac{P(y)}{Q(y)}\right).
	\end{align}		
	\begin{definition}\label{Def:Def-IntendedMeasure}
		Given a joint distribution $P_{X,Y}=P_{Y|X}P_X$ and a $f$-divergence $D_f$, the $f$-leakage is defined as 
		\begin{align}\label{eq:Def-fLeakKforDist}
		\mcl L_f(X\to Y)=\inf_{Q_Y} D_f(P_{X,Y}\|P_X\times Q_Y),
		\end{align} 
        and the maximal $f$-leakage is defined as
		\begin{align}\label{eq:Definition-f_divergence_measure}
		\mcl L_f^{\text{max}}(X\to Y)=\sup_{P_{\tilde{X}}}\,\inf_{Q_Y}\, D_f(P_{Y|X}P_{\tilde{X}}\|P_{\tilde{X}}\times Q_Y),
		\end{align}
		where $P_{\tilde{X}}$ is a distribution over the support of $P_X$. 
	\end{definition}
    Note that in Definition \ref{Def:Def-IntendedMeasure}, maximal $f$-leakage ($\mcl L_f^{\text{max}}$) depends on the distribution of $X$ only through its support. In contrast, $f$-leakage ($\mathcal{L}_f$) depends fully on the distribution of $X$. Both measures depend on the chosen mechanism $P_{Y|X}$. 
    
	Recall that for $\alpha=1$, maximal $\alpha$-leakage is MI. Therefore, it is a special case of $\mcl L_f(X\to Y)$ in \eqref{eq:Def-fLeakKforDist} with $f(t)=t\log t$. Furthermore, for $\alpha>1$, maximal $\alpha$-leakage has a one-to-one relationship with a special case of $\mcl L_f^{\text{max}}$ in \eqref{eq:Definition-f_divergence_measure} for $f$ given by 
	\begin{equation}\label{eq:Hellinger_f-function}
	f_\alpha(t)=\frac{1}{\alpha-1} (t^\alpha-1),
	\end{equation}
	such that $D_f$ is the Hellinger divergence of order $\alpha$ \cite{Liese2006}. The following lemma makes this observation precise.
	\begin{lemma}\label{Lem:alphaLeakage_fDivergenceLeakage}
		For discrete random variables $X$ and $Y$, the maximal $\alpha$-leakage ($\alpha>1$) from $X$ to $Y$ can be written as	
		\begin{align}	\label{eq:alphaLeakage_HellingerDivergenceLeakage}	
		\mcl L_{\alpha}^{\text{max}}(X\to Y)=\mathsmaller{\frac{1}{\alpha-1} \log \big(1+(\alpha-1) \mathcal{L}_{f_\alpha}^{\text{max}}(X\to Y)\big)},
		\end{align}
	where $\mathcal{L}_{f_\alpha}^{\text{max}}(X\to Y)$ indicates a set of maximal $f$-leakage in \eqref{eq:Definition-f_divergence_measure} defined  from the function given by \eqref{eq:Hellinger_f-function}.
	\end{lemma}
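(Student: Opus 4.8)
The plan is to exhibit a fixed, monotone scalar map $g$ that converts the inner objective of the Hellinger ($f_\alpha$) problem into that of the R\'enyi (Sibson) problem, and then to argue that such a map passes through both the infimum over $Q_Y$ and the supremum over $P_{\tilde X}$. First I would compute the Hellinger divergence explicitly. Abbreviating the common summation term by $T\triangleq\sum_{x,y} P_{\tilde X,Y}(x,y)^{\alpha}/\bigl(P_{\tilde X}(x)\,Q_Y(y)\bigr)^{\alpha-1}$, the choice of $f_\alpha$ in \eqref{eq:Hellinger_f-function} gives $D_{f_\alpha}(P_{Y|X}P_{\tilde X}\,\|\,P_{\tilde X}\times Q_Y)=\tfrac{1}{\alpha-1}(T-1)$, whereas the R\'enyi divergence \eqref{eq:renyi_divergence} reads $D_\alpha(P_{\tilde X,Y}\,\|\,P_{\tilde X}\times Q_Y)=\tfrac{1}{\alpha-1}\log T$. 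Eliminating $T$ yields, for every fixed $Q_Y$ and $P_{\tilde X}$, the pointwise identity
\begin{equation}
D_\alpha\bigl(P_{\tilde X,Y}\,\|\,P_{\tilde X}\times Q_Y\bigr)
= g\Bigl(D_{f_\alpha}\bigl(P_{Y|X}P_{\tilde X}\,\|\,P_{\tilde X}\times Q_Y\bigr)\Bigr),
\qquad g(u)\triangleq\tfrac{1}{\alpha-1}\log\bigl(1+(\alpha-1)u\bigr).
\end{equation}

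Next I would observe that, for $\alpha>1$, the scalar map $g$ is continuous and strictly increasing on its domain, hence it commutes with both optimizations, i.e. $\inf_{Q_Y} g(\cdot)=g\bigl(\inf_{Q_Y}(\cdot)\bigr)$ and $\sup_{P_{\tilde X}} g(\cdot)=g\bigl(\sup_{P_{\tilde X}}(\cdot)\bigr)$. Applying the first of these to the inner infimum, and recalling that the Sibson mutual information \eqref{eq:Sibson_MI} is exactly $I^{\text{S}}_\alpha(\tilde X;Y)=\inf_{Q_Y} D_\alpha(P_{\tilde X,Y}\,\|\,P_{\tilde X}\times Q_Y)$, shows that $I^{\text{S}}_\alpha(\tilde X;Y)$ equals $g$ evaluated at the inner $f_\alpha$-leakage $\inf_{Q_Y} D_{f_\alpha}(P_{Y|X}P_{\tilde X}\,\|\,P_{\tilde X}\times Q_Y)$. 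Applying the second identity to the outer supremum and using Definition~\ref{Def:Def-IntendedMeasure} then gives $\sup_{P_{\tilde X}} I^{\text{S}}_\alpha(\tilde X;Y)=g\bigl(\mcl L_{f_\alpha}^{\text{max}}(X\to Y)\bigr)$. Finally, I would invoke Theorem~\ref{Thm:DefEquialentExpression}, which for $\alpha>1$ identifies the left-hand side $\sup_{P_{\tilde X}} I^{\text{S}}_\alpha(\tilde X;Y)$ with $\mcl L_{\alpha}^{\text{max}}(X\to Y)$; substituting the definition of $g$ produces precisely \eqref{eq:alphaLeakage_HellingerDivergenceLeakage}.

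The algebraic steps are routine, so I expect the only delicate point to be the interchange of $g$ with the $\inf$ and $\sup$. This is where I would focus the rigor: I must confirm that $g$ is genuinely continuous and strictly increasing across the entire range of attainable divergence values (so that it preserves both extrema, not merely the order), and I must handle the infimum over $Q_Y$ cleanly. The latter is in fact sidestepped because the Sibson-MI closed form in \eqref{eq:Sibson_MI} already encapsulates $\inf_{Q_Y} D_\alpha$ in a single expression, so no separate attainment argument for the inner minimizer is needed; the supremum over $P_{\tilde X}$ is over a compact probability simplex, which likewise poses no difficulty.
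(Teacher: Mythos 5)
Your proposal is correct and follows essentially the same route as the paper: both express the R\'enyi divergence as the fixed monotone map $g(u)=\tfrac{1}{\alpha-1}\log\bigl(1+(\alpha-1)u\bigr)$ applied to the Hellinger divergence $D_{f_\alpha}$, and then pass $g$ through the inner infimum over $Q_Y$ and the outer supremum over $P_{\tilde X}$, identifying the result with $\sup_{P_{\tilde X}} I^{\text{S}}_\alpha(\tilde X;Y)=\mcl L_\alpha^{\text{max}}(X\to Y)$ via Theorem~\ref{Thm:DefEquialentExpression}. Your added attention to continuity and strict monotonicity of $g$ over the attainable range is a slight sharpening of the paper's one-line monotonicity justification, but it is the same argument.
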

    A detailed proof is in Appendix \ref{Proof:alphaLeakage_fDivergenceLeakage}. Note that substituting $f_{\alpha}$ defined in \eqref{eq:Hellinger_f-function} into \eqref{eq:f-divergence}, we can obtain the Hellinger divergences of order $\alpha>1$. Thus, from Lemma \ref{Lem:alphaLeakage_fDivergenceLeakage}, for $\alpha>1$, maximal $\alpha$-leakage can be transformed to the maximal $f$-leakage based on Hellinger divergences via a one-to-one mapping. In this sense, maximal $\alpha$-leakage is a special case of maximal $f$-leakage.

	\section{Properties of Maximal $\alpha$-Leakage}\label{Sec:Properties} 
	Thm. \ref{Thm:DefEquialentExpression_alphaleakage} shows that $\alpha$-leakage is exactly Arimoto MI, and therefore, several basic properties of $\alpha$-leakage have been shown including (i) non-negativity \cite[Sec. II-A]{alphaMI_verdu}, (ii) quasi-convexity\footnote{For $\alpha\geq 1$ and $P_X$, the Arimoto MI $I^{\text{A}}_{\alpha}(X;Y)$ is the logarithm of a linear combination of the $p$-norm ($p=\alpha$) $\|P_{Y|X}(\cdot|x)\|_{\alpha}$. From \cite[Chapter 3.5]{boydconvex}, we know a log-convex function is quasi-convex such that $I^{\text{A}}_{\alpha}(X;Y)$ is quasi-convex in $P_{Y|X}$ given $P_X$.} in $P_{Y|X}$ given $P_X$ \cite[Chapter 3.5]{boydconvex}, and (iii) post-processing inequality\footnote{From the monotonicity of conditional Arimoto entropy \cite[Cor. 1]{Arimotoconditional_fehr2014}, one can derive that for a Markov chain $X-Y-Z$, $I^{\text{A}}_{\alpha}(X;Z)\leq I^{\text{A}}_{\alpha}(X;Y)$.} \cite[Cor. 1]{Arimotoconditional_fehr2014}. We now explore proprieties of maximal $\alpha$-leakage and show that its properties include: (i) quasi-convexity in the conditional distribution $P_{Y|X}$; (ii) data processing inequalities; (iii) sub-additivity (composition property \cite{OperationalLeak_issa2018}) and additivity for memoryless mechanisms.

	The following theorem results from the expression of maximal $\alpha$-leakage in Thm. \ref{Thm:DefEquialentExpression} as well as some known properties of Sibson MI \cite{alphaMI_Sibson1969,ConvexityAlphaMI_Ho,alphaMI_verdu}.
	\begin{theorem}\label{Thm:Geneleak_qusiconvex_nondecreasing_dataprocessing}
		For $1\leq \alpha\leq \infty$, maximal $\alpha$-leakage 
		\begin{itemize}
			\item[1.] is quasi-convex in $P_{Y|X}$;
			\item[2.] is monotonically non-decreasing in $\alpha$;
			\item[3.] satisfies data processing inequalities: let random variables $X,Y,Z$ form a Markov chain, i.e., $X-Y-Z$, then
			\begin{subequations}\label{eq:GeneLeak_DataProcessIneq}
				\begin{align}
					\mcl L_{\alpha}^{\text{max}}(X\to Z)\leq \mcl L_{\alpha}^{\text{max}}(X\to Y) \label{eq:GeneLeak_DataProcessIneq_XY}\\
					\mcl L_{\alpha}^{\text{max}}(X\to Z)\leq \mcl L_{\alpha}^{\text{max}}(Y\to Z) \label{eq:GeneLeak_DataProcessIneq_YZ}.
				\end{align}
			\end{subequations}
			\item[4.]
			satisfies
			\begin{align}
				\mcl L_{\alpha}^{\text{max}}(X\to Y)\geq 0
			\end{align}with equality if and only if $X$ is independent of $Y$, and 
			\begin{align}\label{eq:MaxAlphaLeak-RangeUpper}
				\mcl L_{\alpha}^{\text{max}}(X\to Y)\leq \begin{cases}
					\log|\mathcal{X}|  &\alpha>1\\
					H(P_X)&\alpha=1 
				\end{cases}
			\end{align} with equality if and only if $X$ is a deterministic function of $Y$. 
		\end{itemize} 
	\end{theorem}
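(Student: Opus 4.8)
The plan is to read off all four properties from the representation in Theorem~\ref{Thm:DefEquialentExpression}, namely $\mcl L_\alpha^{\text{max}}(X\to Y)=\sup_{P_{\tilde X}} I_\alpha^{\text{S}}(\tilde X;Y)=\sup_{P_{\tilde X}} I_\alpha^{\text{A}}(\tilde X;Y)$ for $\alpha>1$ and $\mcl L_1^{\text{max}}(X\to Y)=I(X;Y)$, where $P_{\tilde X}$ ranges over distributions supported on $\mathrm{supp}(P_X)$. The unifying idea is that each property is a known property of Sibson (or Arimoto) MI that survives the pointwise supremum over the input distribution.

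For quasi-convexity in $P_{Y|X}$ I would first invoke that, for each fixed input $P_{\tilde X}$, the Sibson MI $I_\alpha^{\text{S}}(\tilde X;Y)$ is quasi-convex in the channel $P_{Y|X}$ \cite{ConvexityAlphaMI_Ho}, and then use that a pointwise supremum of quasi-convex functions is quasi-convex, since the sublevel sets of the supremum are intersections of the (convex) sublevel sets of the individual functions. The $\alpha=1$ case is the classical convexity of $I(X;Y)$ in $P_{Y|X}$. For monotonicity in $\alpha$ I rely on the known monotonicity of $I_\alpha^{\text{S}}$ in $\alpha$ \cite{alphaMI_verdu}: for $1<\alpha_1\le\alpha_2$ the pointwise inequality $I_{\alpha_1}^{\text{S}}(\tilde X;Y)\le I_{\alpha_2}^{\text{S}}(\tilde X;Y)$ survives the supremum over $P_{\tilde X}$. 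The boundary $\alpha=1$ needs one extra line: since $\mcl L_1^{\text{max}}=I(X;Y)=I_1^{\text{S}}(X;Y)$ at the true $P_X$, the same $\alpha$-monotonicity at the fixed input $P_X$ gives $I_1^{\text{S}}(X;Y)\le I_\alpha^{\text{S}}(X;Y)\le\sup_{P_{\tilde X}}I_\alpha^{\text{S}}(\tilde X;Y)$.

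Both data processing inequalities follow from the DPI for Sibson MI \cite[Thm.~3]{alphaMI_verdu} applied at each input distribution. For the post-processing bound \eqref{eq:GeneLeak_DataProcessIneq_XY}, each $P_{\tilde X}$ induces the chain $\tilde X-Y-Z$, so $I_\alpha^{\text{S}}(\tilde X;Z)\le I_\alpha^{\text{S}}(\tilde X;Y)$, and taking the supremum over $P_{\tilde X}$ preserves the inequality. For the linkage bound \eqref{eq:GeneLeak_DataProcessIneq_YZ} I would push each input through the first channel to the induced distribution $P_{\tilde Y}(y)=\sum_x P_{Y|X}(y|x)P_{\tilde X}(x)$, apply the DPI to the reversed chain $Z-\tilde Y-\tilde X$ to obtain $I_\alpha^{\text{S}}(\tilde X;Z)\le I_\alpha^{\text{S}}(\tilde Y;Z)$, and verify $\mathrm{supp}(P_{\tilde Y})\subseteq\mathrm{supp}(P_Y)$ so that $P_{\tilde Y}$ is admissible in the supremum defining $\mcl L_\alpha^{\text{max}}(Y\to Z)$; the supremum then yields the claim. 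The $\alpha=1$ cases are the classical DPIs for mutual information.

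Non-negativity is inherited from $I_\alpha^{\text{S}}\ge 0$, and the supremum vanishes iff every admissible input gives zero Sibson MI, i.e.\ iff the rows $P_{Y|X}(\cdot|x)$ coincide on $\mathrm{supp}(P_X)$, which is exactly $X\perp Y$. For the upper bound I switch to the Arimoto form and use $I_\alpha^{\text{A}}(\tilde X;Y)=H_\alpha(\tilde X)-H_\alpha^{\text{A}}(\tilde X|Y)\le H_\alpha(\tilde X)\le\log|\mcl X|$; since the feasible set is compact the supremum is attained, and equality forces both a uniform maximizer (so $H_\alpha(\tilde X)=\log|\mcl X|$) and $H_\alpha^{\text{A}}(\tilde X|Y)=0$, the latter being equivalent to $X$ being a deterministic function of $Y$; conversely that condition makes $H_\alpha^{\text{A}}(\tilde X|Y)=0$ for every input, so the uniform input attains $\log|\mcl X|$. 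The $\alpha=1$ statements are the standard $I(X;Y)\le H(P_X)$ with equality iff $H(X|Y)=0$. I expect the linkage inequality to be the main obstacle, since it crosses between the $X$- and $Y$-input problems: the care is in choosing the reversed form of the Sibson DPI and in checking the support containment $\mathrm{supp}(P_{\tilde Y})\subseteq\mathrm{supp}(P_Y)$ that keeps the induced input feasible; a secondary subtlety is pinning down the attainment and uniform-input structure behind the equality condition in the upper bound.
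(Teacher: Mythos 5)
Your proposal is correct, and for Parts 1--3 it follows essentially the paper's own route: quasi-convexity as a pointwise supremum of quasi-convex Sibson MIs, monotonicity by applying the pointwise monotonicity of $I^{\text{S}}_\alpha$ in $\alpha$ at the optimal input and passing to the supremum, and both data-processing inequalities from the Sibson-MI DPIs of Verd\'u's Thm.~3 lifted through the supremum (indeed you are more careful than the paper on two small points the paper glosses over: the $\alpha=1$ boundary in the monotonicity argument, where the paper implicitly treats $\mcl L_1^{\text{max}}$ as a supremum rather than $I(X;Y)$ at the true input, and the support containment $\mathrm{supp}(P_{\tilde Y})\subseteq\mathrm{supp}(P_Y)$ needed to make the induced input admissible in the linkage inequality, which the paper dismisses with ``similarly''). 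Where you genuinely diverge is the upper bound in Part 4. The paper first uses monotonicity in $\alpha$ to reduce to $\alpha=\infty$, bounds $\mcl L^{\text{max}}_\infty(X\to Y)=\log\sum_y\max_x P_{Y|X}(y|x)\le\log|\mcl X|$, and then proves achievability for $1<\alpha<\infty$ by an explicit construction: a channel $P_{X\Leftarrow Y}$ under which $X$ is a deterministic function of $Y$, for which $\sup_{P_X}\frac{\alpha}{\alpha-1}\log\sum_x P_X(x)^{1/\alpha}$ is maximized at the uniform input by Schur-concavity, yielding $\log|\mcl X|$. You instead bound directly via the Arimoto decomposition $I^{\text{A}}_\alpha(\tilde X;Y)=H_\alpha(\tilde X)-H^{\text{A}}_\alpha(\tilde X|Y)\le H_\alpha(\tilde X)\le\log|\mcl X|$ and extract the equality condition from tightness of both inequalities (uniform maximizer plus $H^{\text{A}}_\alpha(\tilde X^\star|Y)=0$, the latter equivalent to determinism since the $\alpha$-norm equals the $1$-norm only for vectors with a single nonzero entry). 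Your route is shorter and handles existence and the ``only if'' direction in one stroke, at the cost of invoking attainment of the supremum (compactness plus continuity, fine on finite alphabets) and the nonnegativity/vanishing characterization of Arimoto conditional entropy; the paper's route avoids the equality analysis of $H^{\text{A}}_\alpha$ but needs the separate achievability construction. Both are sound.
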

	A detailed proof is in Appendix \ref{Proof:Geneleak_qusiconvex_nondecreasing_dataprocessing}. 
	\begin{remark}
		Note that:
		\begin{itemize}
			\item Since both MI and MaxL are convex in $P_{Y|X}$, $\mcl L^{\text{max}}_{1}(X\to Y)$ and $\mcl L^{\text{max}}_{\infty}(X\to Y)$ are convex in $P_{Y|X}$. 
			\item From the monotonicity in Part 2, we can bound maximal $\alpha$-leakage from above by\footnote{For $\alpha=\infty$, the $I^{\text{S}}_{\infty}(P_X,P_{Y|X})$ depends on the marginal distribution $P_X$ only through the support of $X$.}
			\begin{align}
			\mcl L_{\alpha}^{\text{max}}(X \to Y)\leq \mcl L_{\text{MaxL}}(X\to Y)=I^{\text{S}}_{\infty}(X;Y).
			\end{align}
			\item The data processing inequalities in \eqref{eq:GeneLeak_DataProcessIneq_XY} and \eqref{eq:GeneLeak_DataProcessIneq_YZ} are called \textit{post-processing inequality} and \textit{linkage inequality}, respectively \cite{AxiomaticStatisticalPrivacyUtility_Kifer12,PUTConstrainedDataReleaseMech_Wang17}. It is worth noting that not all information leakage measures satisfy the linkage inequality \cite{PUTConstrainedDataReleaseMech_Wang17,TVDprivacy_Rassouli&Gunduz18}. Examples include $\alpha$-leakage, maximal information leakage \cite{PrivacyAgainstStatistic_Calmon12}, probability of correctly guessing, and DP.
			\item From the monotonicity of maximal $\alpha$-leakage and the upper bound of MaxL in \cite[Lemma 1]{OperationalLeak_issa2018}, we know that if $|\mathcal{Y}|< |\mathcal{X}|$, the upper bound in \eqref{eq:MaxAlphaLeak-RangeUpper} can be tighter as
					\begin{align}
					\mcl L_{\alpha}^{\text{max}}(X\to Y)\leq \begin{cases}
					\log|\mathcal{Y}|  &\alpha>1\\
					\min\{H(P_X), \log|\mathcal{Y}|\}&\alpha=1,
					\end{cases}\nonumber
					\end{align}
				with equality for $\alpha>1$ if and only if $Y$ is a deterministic function of $X$.
		\end{itemize}
	\end{remark}
From Thm. \ref{Thm:DefEquialentExpression}, we know that for $\alpha>1$, maximal $\alpha$-leakage is the supremum of Arimoto/Sibson MI over all possible distributions on the support of original data, and therefore, is a function of a conditional probability distribution. The following theorem bounds the supremum from below by a closed-form expression of the conditional probability distribution.
\begin{theorem}[Lower Bound]\label{Thm:MaxAlphaLeak-Bounds}
	For $1<\alpha\leq \infty$, maximal $\alpha$-leakage is bounded from below by
	 \begin{align}
	\mcl L_{\alpha}^{\text{max}}(X\to Y)\geq\frac{\alpha}{\alpha-1}\log\frac{\sum\limits_{y\in\mcl Y} \|P_{Y|X}(y|\cdot)\|_{\alpha} }{|\mathcal{X}|^{\frac{1}{\alpha}}},
	\end{align} with equality if and only if for all $x_1,x_2\in \mcl X$, there is
	\begin{align}\label{eq:MaxAlphaLeak-LowBD-ComplementCnst}
	\sum_y\frac{P_{Y|X}(y|x_1)^\alpha}{\|P_{Y|X}(y|\cdot)\|_\alpha^{\alpha-1}}=\sum_y\frac{P_{Y|X}(y|x_2)^\alpha}{\|P_{Y|X}(y|\cdot)\|_\alpha^{\alpha-1}} .
	\end{align}
\end{theorem}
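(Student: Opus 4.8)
The plan is to exploit the variational characterization of maximal $\alpha$-leakage established in Theorem~\ref{Thm:DefEquialentExpression}. For $\alpha>1$ that result gives
\[
\mathcal{L}_{\alpha}^{\text{max}}(X\to Y)=\sup_{P_{\tilde{X}}}I^{\text{S}}_\alpha(\tilde{X};Y),
\]
so \emph{any} admissible input distribution furnishes a lower bound. The natural candidate is the uniform distribution $P_{\tilde X}(x)=1/|\mathcal X|$ on the support of $P_X$. First I would substitute this into the closed form of the Sibson MI in \eqref{eq:Sibson_MI}, factor the constant $(1/|\mathcal X|)^{1/\alpha}$ out of each inner sum $\sum_x P_{\tilde X}(x)P_{Y|X}(y|x)^\alpha$, and recognize the remaining inner term as $\|P_{Y|X}(y|\cdot)\|_\alpha^\alpha$. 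This collapses $I^{\text S}_\alpha$ at the uniform input to exactly $\frac{\alpha}{\alpha-1}\log\big(\sum_y\|P_{Y|X}(y|\cdot)\|_\alpha\,/\,|\mathcal X|^{1/\alpha}\big)$, which is the claimed bound. The $\alpha=\infty$ endpoint follows by continuity (indeed directly, since there $I^{\text S}_\infty$ is independent of $P_{\tilde X}$, so the inequality is in fact an equality).

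For the equality characterization I would argue that the uniform input attains the supremum precisely under the stated condition. Since $\frac{\alpha}{\alpha-1}\log(\cdot)$ is strictly increasing for $\alpha>1$, maximizing $I^{\text S}_\alpha(\tilde X;Y)$ is equivalent to maximizing
\[
g(P_{\tilde X})\triangleq\sum_{y}\Big(\sum_x P_{\tilde X}(x)\,P_{Y|X}(y|x)^\alpha\Big)^{1/\alpha}
\]
over the probability simplex. The key structural fact is that $g$ is concave: each summand is an affine function of $P_{\tilde X}$ composed with the concave, nondecreasing map $t\mapsto t^{1/\alpha}$ (concave because $1/\alpha\in(0,1)$), and a sum of concave functions is concave. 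Concavity reduces the optimality question to a first-order condition.

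The crux is then the first-order analysis at the uniform point, which lies in the \emph{interior} of the simplex. For a differentiable concave function on the simplex, an interior point is a global maximizer if and only if its gradient is constant across coordinates: the feasible perturbations are $e_{x_1}-e_{x_2}$, so the directional derivative vanishes in both directions $\pm(e_{x_1}-e_{x_2})$ iff $\partial_{x_1}g=\partial_{x_2}g$, and constancy of the gradient makes the variational inequality $\nabla g^\top(P-P_{\tilde X})\le 0$ hold for every simplex point. I would compute $\partial g/\partial P_{\tilde X}(x)$, evaluate it at the uniform input using $\sum_{x'}P_{\tilde X}(x')P_{Y|X}(y|x')^\alpha=\|P_{Y|X}(y|\cdot)\|_\alpha^\alpha/|\mathcal X|$, and show it equals a strictly positive constant $|\mathcal X|^{1-1/\alpha}/\alpha$ (independent of $x$) times $\sum_y P_{Y|X}(y|x)^\alpha/\|P_{Y|X}(y|\cdot)\|_\alpha^{\alpha-1}$. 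Equating the partials across $x_1,x_2$ then yields exactly \eqref{eq:MaxAlphaLeak-LowBD-ComplementCnst}. The main obstacle is making the interior first-order condition fully rigorous---in particular confirming that concavity upgrades the necessary stationarity condition to a sufficient one, so the equivalence (equality in the bound $\iff$ \eqref{eq:MaxAlphaLeak-LowBD-ComplementCnst}) holds in both directions---together with reading the $\alpha=\infty$ condition as the appropriate limit.
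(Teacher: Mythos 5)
Your proposal is correct, and it takes a genuinely different route from the paper's proof. The paper works with the sup-inf form $\sup_{P_{\tilde X}}\inf_{Q_Y}D_\alpha(P_{\tilde X}P_{Y|X}\|P_{\tilde X}\times Q_Y)$: it introduces an auxiliary divergence $k_\alpha(P\|Q)=\sum_y Q(y)\left(P(y)/Q(y)\right)^\alpha$, proves its joint convexity, swaps $\sup$ and $\inf$ by a minimax argument (convexity in $Q_Y$, linearity in $P_{\tilde X}$) to reduce the leakage to $\inf_{Q_Y}\max_x k_\alpha(P_{Y|X=x}\|Q_Y)$, bounds the maximum by the average over $x$, and then invokes a dedicated lemma showing the average is minimized by the distribution $P_c(y)\propto\|P_{Y|X}(y|\cdot)\|_\alpha$, with equality overall iff all $k_\alpha(P_{Y|X=x}\|P_c)$ coincide. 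You instead stay on the primal side: plugging the uniform input into the closed-form Sibson MI gives the bound in one line, and the equality characterization follows from concavity of $g(P_{\tilde X})=\sum_y\big(\sum_x P_{\tilde X}(x)P_{Y|X}(y|x)^\alpha\big)^{1/\alpha}$ plus the first-order condition at an interior point of the simplex. The two are tightly linked: your partial derivative $\sum_y P_{Y|X}(y|x)^\alpha\big/\|P_{Y|X}(y|\cdot)\|_\alpha^{\alpha-1}$ equals $Z^{1-\alpha}k_\alpha(P_{Y|X=x}\|P_c)$ with $Z=\sum_y\|P_{Y|X}(y|\cdot)\|_\alpha$, so your stationarity condition and the paper's "equal divergences to $P_c$" condition are the same quantity in disguise. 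What each buys: your argument is more elementary and self-contained (no minimax theorem, no auxiliary lemma, and the "if and only if" falls out of a standard concave-program optimality condition), while the paper's dual route additionally identifies the optimal output distribution $Q_Y^\star=P_c$, the same object that drives its later privacy-utility tradeoff theorems. Two small points to tidy in your write-up: (i) handle columns $y$ with $P_{Y|X}(y|x)=0$ for all $x$ (drop them so $g$ is differentiable in a neighborhood of the uniform point); (ii) at $\alpha=\infty$ the bound degenerates to the identity $\mathcal{L}_\infty^{\text{max}}(X\to Y)=\log\sum_y\max_x P_{Y|X}(y|x)$, so equality holds unconditionally there and the "iff" clause should only be asserted for finite $\alpha$ — a looseness shared by the paper, whose proof likewise treats only $1<\alpha<\infty$.
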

A detailed proof is in Appendix \ref{Proof:Thm:MaxAlphaLeak-Bounds}.

When data may be revealed multiple times (e.g., entering a password multiple times), it is essential to quantify how mechanisms are designed with maximal $\alpha$ leakage compose in terms of total leakage. 
Consider two released versions $Y_1$ and $Y_2$ of $X$. The following theorem limits maximal $\alpha$-leakage to an adversary who has access to both $Y_1$ and $Y_2$ simultaneously.
	\begin{theorem}[Sub-additivity/Composition]\label{Thm:GeneLeak_CompositionTheory}
		Given a Markov chain $Y_1-X-Y_2$, we have $(\alpha\in [1,\infty])$ 
		\begin{align}
			\mcl L_{\alpha}^{\text{max}}(X\to Y_1,Y_2)\leq \sum_{i\in\{1,2\}}\mcl L_{\alpha}^{\text{max}}(X\to Y_i).
		\end{align}
	\end{theorem}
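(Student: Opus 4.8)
The plan is to prove the bound for $\alpha\in(1,\infty)$ using the Sibson-MI characterization $\mcl L_{\alpha}^{\text{max}}(X\to Y)=\sup_{P_{\tilde X}}I_\alpha^{\text{S}}(\tilde X;Y)$ from Thm.~\ref{Thm:DefEquialentExpression}, and then to recover the endpoints $\alpha=1$ and $\alpha=\infty$ separately. The central reduction is that it suffices to establish a \emph{pointwise} bound: for every fixed input distribution $P_{\tilde X}$, writing $w_x\triangleq P_{\tilde X}(x)$ and using $|_w$ to denote evaluation at that input,
\[
I_\alpha^{\text{S}}(\tilde X;Y_1,Y_2)\big|_{w}\leq \mcl L_{\alpha}^{\text{max}}(X\to Y_1)+\mcl L_{\alpha}^{\text{max}}(X\to Y_2),
\]
since taking the supremum over $w$ on the left then yields the theorem (the right-hand side being independent of $w$). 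The Markov chain $Y_1-X-Y_2$ enters here as the conditional-independence factorization $P_{Y_1,Y_2|X}(y_1,y_2|x)=P_{Y_1|X}(y_1|x)P_{Y_2|X}(y_2|x)$, which is precisely what lets the joint channel split.

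First I would write the exponentiated joint quantity as the double sum $\exp\!\big(\tfrac{\alpha-1}{\alpha}I_\alpha^{\text{S}}(\tilde X;Y_1,Y_2)\big|_w\big)=\sum_{y_1,y_2}\big(\sum_x w_x P_{Y_1|X}(y_1|x)^\alpha P_{Y_2|X}(y_2|x)^\alpha\big)^{1/\alpha}$ and attack it by peeling off $y_2$ for each fixed $y_1$. Setting $v_x\triangleq w_x P_{Y_1|X}(y_1|x)^\alpha$ and renormalizing $\tilde w_x(y_1)\triangleq v_x/\sum_{x'}v_{x'}$ into a genuine probability distribution on the support of $P_X$, the inner sum over $y_2$ factors as $\big(\sum_x v_x\big)^{1/\alpha}\sum_{y_2}\big(\sum_x \tilde w_x(y_1)P_{Y_2|X}(y_2|x)^\alpha\big)^{1/\alpha}$, whose second factor is exactly $\exp\!\big(\tfrac{\alpha-1}{\alpha}I_\alpha^{\text{S}}(\tilde X;Y_2)\big|_{\tilde w(y_1)}\big)$.

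The key step, and the one I expect to carry the weight of the argument, is that this inner factor, although it depends on $y_1$ through $\tilde w(y_1)$, is bounded \emph{uniformly} by $\exp\!\big(\tfrac{\alpha-1}{\alpha}\mcl L_{\alpha}^{\text{max}}(X\to Y_2)\big)$, precisely because maximal $\alpha$-leakage is a supremum over \emph{all} input distributions supported on $P_X$. This decouples $y_1$ from $y_2$: pulling the uniform bound out of the $y_1$-sum leaves $\exp\!\big(\tfrac{\alpha-1}{\alpha}\mcl L_{\alpha}^{\text{max}}(X\to Y_2)\big)\sum_{y_1}\big(\sum_x w_x P_{Y_1|X}(y_1|x)^\alpha\big)^{1/\alpha}$, and the remaining $y_1$-sum equals $\exp\!\big(\tfrac{\alpha-1}{\alpha}I_\alpha^{\text{S}}(\tilde X;Y_1)\big|_w\big)\leq \exp\!\big(\tfrac{\alpha-1}{\alpha}\mcl L_{\alpha}^{\text{max}}(X\to Y_1)\big)$. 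Taking logarithms and multiplying by $\tfrac{\alpha}{\alpha-1}>0$ gives the pointwise bound, completing the case $\alpha\in(1,\infty)$. A minor bookkeeping point is the degenerate case where $\sum_x w_x P_{Y_1|X}(y_1|x)^\alpha=0$; such a $y_1$ contributes zero to both sides and is discarded before renormalizing.

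Finally, for $\alpha=1$ maximal $\alpha$-leakage is $I(X;Y)$ (with the true $P_X$), and the claim is the standard chain-rule estimate $I(X;Y_1,Y_2)=I(X;Y_1)+I(X;Y_2\mid Y_1)\leq I(X;Y_1)+I(X;Y_2)$, where the last inequality uses that $Y_1-X-Y_2$ gives $H(Y_2\mid X,Y_1)=H(Y_2\mid X)$ so that $I(X;Y_2\mid Y_1)\le I(X;Y_2)$. The endpoint $\alpha=\infty$ follows either by continuity of the relevant quantities in $\alpha$ or, directly, by repeating the peeling argument with $\max_x$ in place of the $\alpha$-norm, which reproduces the maximal-leakage composition of Issa \emph{et al.}
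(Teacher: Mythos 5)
Your proposal is correct and follows essentially the same route as the paper's proof: for $\alpha\in(1,\infty)$ you factor the joint channel via the Markov chain, tilt the input distribution by $P_{Y_1|X}(y_1|\cdot)^{\alpha}$ (the paper's $P_{\tilde{X}}(x|y_1)=P_X(x)P_{Y_1|X}(y_1|x)^{\alpha}/K(y_1)$ is exactly your $\tilde w(y_1)$), recognize the inner $y_2$-sum as a Sibson MI at the tilted input, and bound it uniformly by the supremum defining $\mcl L_{\alpha}^{\text{max}}(X\to Y_2)$ — the paper does this in two steps (max over $y_1$, then sup over input distributions) where you do it in one, but the idea is identical — and the $\alpha=1$ and $\alpha=\infty$ endpoints are handled by the same chain-rule and $\max_x$ arguments, respectively. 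Your explicit treatment of the degenerate case $\sum_x w_x P_{Y_1|X}(y_1|x)^{\alpha}=0$ is a minor bookkeeping point the paper leaves implicit.
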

A detailed proof is in Appendix \ref{proof:Thm:GeneLeak_CompositionTheory}.

The following theorem shows the additivity of maximal $\alpha$-leakage for memoryless mechanisms.
\begin{theorem}[Additivity for Memoryless Mechanisms]\label{thm:MaxAlphaLK-Addtivity}
	For $\alpha\in [1,\infty]$ and a finite integer $n>0$, let $X^n$ and $Y^n$ be $n$-length input and output, respectively, of a memoryless mechanism with no feedback, i.e., 
	\begin{align}
	P_{Y^n|X^n}=\prod_{i=1}^{n}P_{Y_{i}|X_{i}},
	\end{align}
	where $X_{i}$ and $Y_{i}$ represent the $i^{\text{th}}$ element of $X^n$ and $Y^n$, respectively, such that 
	\begin{itemize}
		\item[(1)] for $\alpha>1$
		\begin{align}
		\mcl L_{\alpha}^{\text{max}}(X^n\to Y^n)=\sum_{i=1}^{n}  \mcl L_{\alpha}^{\text{max}}(X_{i}\to Y_{i})
		\end{align}
		\item[(2)] for $\alpha=1$
		\begin{align}
			\mcl L_{1}^{\text{max}}(X^n\to Y^n)\leq \sum_{i=1}^{n}  \mcl L_{1}^{\text{max}}(X_{i}\to Y_{i})
		\end{align}
		with equality if and only if entries of $X^n$ are mutually independent. 
	\end{itemize}
\end{theorem}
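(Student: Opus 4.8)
The plan is to prove the two inequalities separately, treating the achievability (the equality for $\alpha>1$) as the substantive step. Throughout I would use the representations from Thm.~\ref{Thm:DefEquialentExpression}: $\mcl L_{\alpha}^{\text{max}}(X\to Y)=\sup_{P_{\tilde X}}I_\alpha^{\text{S}}(\tilde X;Y)$ for $\alpha>1$, and $\mcl L_{1}^{\text{max}}(X\to Y)=I(X;Y)$ for $\alpha=1$.

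\textbf{Upper bound, all $\alpha$.} First I would establish $\mcl L_{\alpha}^{\text{max}}(X^n\to Y^n)\le\sum_i\mcl L_{\alpha}^{\text{max}}(X_i\to Y_i)$ for every $\alpha\in[1,\infty]$ using the structural results already proved. Because the mechanism is memoryless and feedback-free, $Y_1,\dots,Y_n$ are conditionally independent given $X^n$; in particular $Y^{i-1}-X^n-Y_i$ is a Markov chain for each $i$, so applying the sub-additivity bound of Thm.~\ref{Thm:GeneLeak_CompositionTheory} inductively yields $\mcl L_{\alpha}^{\text{max}}(X^n\to Y^n)\le\sum_{i}\mcl L_{\alpha}^{\text{max}}(X^n\to Y_i)$. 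Since $Y_i$ depends on $X^n$ only through $X_i$, the chain $X^n-X_i-Y_i$ holds, and the linkage inequality \eqref{eq:GeneLeak_DataProcessIneq_YZ} of Thm.~\ref{Thm:Geneleak_qusiconvex_nondecreasing_dataprocessing} gives $\mcl L_{\alpha}^{\text{max}}(X^n\to Y_i)\le\mcl L_{\alpha}^{\text{max}}(X_i\to Y_i)$. Combining the two chains of inequalities settles the $\le$ direction uniformly in $\alpha$.

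\textbf{Achievability for $\alpha>1$.} The key computational fact is that Sibson MI is additive over a product channel fed by a product input: if $P_{\tilde X^n}=\prod_i P_{\tilde X_i}$ and $P_{Y^n|X^n}=\prod_i P_{Y_i|X_i}$, then in \eqref{eq:Sibson_MI} the inner sum $\sum_{x^n}P_{\tilde X^n}(x^n)\,P_{Y^n|X^n}(y^n|x^n)^{\alpha}$ factorizes across coordinates, hence so does the bracketed $1/\alpha$-power, and the outer sum over $y^n$ becomes a product of per-coordinate sums; taking $\frac{\alpha}{\alpha-1}\log$ gives $I_\alpha^{\text{S}}(\tilde X^n;Y^n)=\sum_i I_\alpha^{\text{S}}(\tilde X_i;Y_i)$. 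I would then choose, for each $i$, a distribution $P^\star_{\tilde X_i}$ on $\mathrm{supp}(P_{X_i})$ attaining $\mcl L_{\alpha}^{\text{max}}(X_i\to Y_i)$, and use the product $\prod_i P^\star_{\tilde X_i}$ as a candidate input in the supremum defining $\mcl L_{\alpha}^{\text{max}}(X^n\to Y^n)$. By the additivity just derived this candidate achieves $\sum_i\mcl L_{\alpha}^{\text{max}}(X_i\to Y_i)$, producing the matching lower bound and hence equality.

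\textbf{The case $\alpha=1$.} Here $\mcl L_{1}^{\text{max}}(X^n\to Y^n)=I(X^n;Y^n)$. Writing $I(X^n;Y^n)=H(Y^n)-H(Y^n|X^n)$ and using that memorylessness without feedback gives $H(Y^n|X^n)=\sum_i H(Y_i|X_i)$, the elementary bound $H(Y^n)\le\sum_i H(Y_i)$ yields $I(X^n;Y^n)\le\sum_i I(X_i;Y_i)=\sum_i\mcl L_{1}^{\text{max}}(X_i\to Y_i)$. Equality in $H(Y^n)\le\sum_i H(Y_i)$ holds precisely when the $Y_i$ are mutually independent; mutual independence of the inputs $X_i$ propagates through the product channel to independence of the $Y_i$, which is the stated equality condition.

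\textbf{Main obstacle.} The delicate point is achievability for $\alpha>1$. Beyond the (routine) additivity computation for Sibson MI, one must verify that the product input $\prod_i P^\star_{\tilde X_i}$ is admissible for the \emph{support-constrained} supremum, i.e.\ that its support lies in $\mathrm{supp}(P_{X^n})$. This is immediate when $X^n$ has product support $\prod_i\mathrm{supp}(P_{X_i})$; since maximal $\alpha$-leakage depends on $P_{X^n}$ only through its support, the argument must rely on (or assume) this product structure, as additivity can otherwise degrade to strict inequality for strongly correlated inputs.
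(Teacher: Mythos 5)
Your proposal is correct in substance and takes a genuinely different route from the paper's. The paper proves part (1) by a single chain-rule manipulation of $\sup_{P_{\tilde X^n}} I_\alpha^{\text{S}}(\tilde X^n;Y^n)$: it factors the objective using memorylessness and lack of feedback, and then, at the pivotal step \eqref{eq:MaxAlphaLK-Add-inpf6}, asserts that restricting the supremum to memoryless (product) input distributions loses nothing. That assertion is exactly the nontrivial content of the theorem (additivity of Arimoto/R\'enyi capacity over parallel channels) and is never actually argued in the paper; the stated justification only says the two expressions agree \emph{for} memoryless sources, not that memoryless sources attain the supremum. Your decomposition sidesteps this: the upper bound $\mcl L_{\alpha}^{\text{max}}(X^n\to Y^n)\le\sum_i\mcl L_{\alpha}^{\text{max}}(X_i\to Y_i)$ follows for every $\alpha\in[1,\infty]$ by iterating the composition bound of Thm.~\ref{Thm:GeneLeak_CompositionTheory} and then applying the linkage inequality \eqref{eq:GeneLeak_DataProcessIneq_YZ} to the chains $X^n-X_i-Y_i$, while the matching lower bound for $\alpha>1$ is the routine tensorization of Sibson MI under product inputs. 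This buys rigor precisely where the paper hand-waves, and it treats all $\alpha$ uniformly (for $\alpha=1$ your entropy argument and the paper's chain-rule-of-MI argument are interchangeable). One caveat: your equality discussion for $\alpha=1$, like the paper's, establishes only the ``if'' direction; equality forces the $Y_i$ to be mutually independent, which does not force the $X_i$ to be independent (take completely noisy channels $P_{Y_i|X_i}(y|x)=1/|\mcl X|$), so the stated ``only if'' cannot be recovered by either argument and is in fact false as written.

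The obstacle you flag at the end is genuine, and it is a gap in the theorem and the paper's proof rather than in your proposal. By Thm.~\ref{Thm:DefEquialentExpression}, the supremum defining $\mcl L_\alpha^{\text{max}}(X^n\to Y^n)$ ranges over distributions supported on $\mathrm{supp}(P_{X^n})$, so your product candidate $\prod_i P^\star_{\tilde X_i}$ is admissible only when $\mathrm{supp}(P_{X^n})=\prod_i\mathrm{supp}(P_{X_i})$. Without that hypothesis part (1) fails: take $X_1=X_2$ uniform on $\{0,1\}$ and both channels the identity; then $\mcl L_\alpha^{\text{max}}(X^2\to Y^2)=\log 2$ while $\sum_i \mcl L_\alpha^{\text{max}}(X_i\to Y_i)=2\log 2$. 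The paper's proof silently makes the same assumption, since its supremum over chain-rule factorizations $\prod_i P_{\tilde X_i|\tilde X_{i-1},\dots,\tilde X_1}$ treats every product distribution over the per-letter supports as feasible. So your proof, read as establishing the theorem under a product-support (e.g., independent-entries) hypothesis together with the observation that some such hypothesis is necessary, is the correct resolution, and is tighter than what the paper actually proves.
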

A detailed proof is in Appendix \ref{proof:thm:MaxAlphaLK-Addtivity}.

\section{Privacy-Utility Tradeoff with a Hard Distortion Constraint}\label{Sec:PUT_for-HD}
In a privacy-guaranteed data publishing setting, a data curator/provider uses a mapping called \textit{privacy mechanism} to generate distorted versions of original data for releases. The privacy mechanism determines the fidelity of the released data. With a higher fidelity, more utility is maintained, while less privacy preserved. Therefore, a privacy-utility tradeoff (PUT) problem arises in the design of the privacy mechanism. 

We consider the two different data publishing scenarios shown in Figs. \ref{fig:Datamodel_1} and \ref{fig:Datamodel_2}: the first where the entirety of the dataset $X$ is considered private, and the second where the dataset consists of two parts $S$ and $X$, where only $S$ is considered private. For the first case (Fig. \ref{fig:Datamodel_1}), we use maximal $\alpha$-leakage as the privacy measure, thereby limiting the inference of any private information about the dataset represented by the function $U$. For the second case (Fig. \ref{fig:Datamodel_2}), we use $\alpha$-leakage as the privacy measure, thereby limited the inference only of the specific private information represented by $S$.

 We measure utility in terms of a \emph{hard distortion} measure, which constrains the privacy mechanism so that the distortion between each pair of original and released data is bounded with probability $1$. Unlike average distortion measures, the hard distortion measure gives all data samples the same fidelity guarantee (distortion bound), which is independent of the probabilities of the samples. Therefore, the hard distortion measure excludes the case that large distortions are applied to samples with very small probabilities, which is possible under average distortion constraints. The fidelity guarantee can lead to better performance for applications for which low probability events cannot be ignored or excluded easily (e.g., anomaly detection from released datasets or high-fidelity empirical distribution estimation for census applications) but is incompatible with some privacy measures like DP and L-DP. 
Specifically, for the original and released data $X,Y$ and a distortion function $d(\cdot,\cdot)$, the utility guarantee is modeled as the hard distortion constraint $d(X,Y)\le D$ \textit{with probability $1$}, where $D$ is the maximal permitted distortion. In other words, if a privacy mechanism $P_{Y|X}$ satisfies the hard distortion constraint, given input $x$, the output $y$ of the privacy mechanism must lie in a \textit{non-empty} set $B_D(x)$ given by
\begin{equation}
\label{eq:PUT_HardDist_CollectofFeasibleY}
B_D(x)\triangleq\{y:d(x,y)\le D\},
\end{equation}
i.e., for any $x$ with $P_X(x)>0$, $P_{Y|X}(y|x)=0$ if $y\notin B_D(x)$. Thus, a mathematical model of the PUT problem is given by
\begin{subequations}
	\label{eq:PUT-HD}
	\begin{align}
		\inf_{P_{Y|X}\in\mcl P_{Y|X}}\quad &\mcl L^{(\cdot)}_{(\cdot)}(X\to Y) \label{eq:PUT-HD-obj}\\
	   \text{s.t.,} \quad  & d(X,Y)\leq D,\label{eq:PUT-HD-constriant}
	\end{align}
\end{subequations}
where 
the set $\mcl P_{Y|X}$ is the collection of stochastic matrices, and the superscript and subscript of $\mcl L$ depend on the privacy measure under consideration (see Sec. \ref{Sec:Information Leakage Measures} for notation). 

\begin{remark}
	Note that given any input $x$, the hard distortion constraint in \eqref{eq:PUT-HD-constriant} will force the conditional probabilities of the outputs that are not in $B_D(x)$ to be zero. Thus, this utility guarantee is \emph{incompatible} with some privacy notions, which require each input to be mapped to all outputs with some positive probabilities; e.g., DP and any maximal $f$-leakage with $f(0)=\infty$.
\end{remark}

\subsection{PUTs for Entirely Sensitive Datasets}\label{Subsec:PUT_for-HDvsMaxAlphaLK}

For the privacy-guaranteed publishing of an entirely sensitive dataset shown in Fig. \ref{fig:Datamodel_1}, we use maximal $\alpha$-leakage as the privacy measure. From Section \ref{Subsec:f-divergence-based leakages}, we know that maximal $\alpha$-leakage is a specific case of $f$-leakage and maximal $f$-leakage (in Def. \ref{Def:Def-IntendedMeasure}) for $\alpha=1$ and $\alpha>1$, respectively. Hereby, we solve the PUT problems which minimize either $f$-leakage or maximal $f$-leakage, subject to a hard distortion constraint. By applying the relations between maximal $\alpha$-leakage and the $f$-divergence-based variants, we derive the optimal PUTs and optimal privacy mechanisms for the PUT problem with maximal $\alpha$-leakage as the privacy measure. We denote an optimal PUT as $\text{PUT}_{\text{HD},\mcl L^{(\cdot)}_{(\cdot)}}$, where HD and $\mcl L^{(\cdot)}_{(\cdot)}$ in the subscript indicate the hard distortion and the involved privacy measure, respectively. 

The following theorem characterizes the optimal tradeoff, i.e., the minimal leakage for any given distortion bound $D$, denoted as $\text{PUT}_{\text{HD},\mcl L_f}(D)$, in \eqref{eq:PUT-HD} for the case that $f$-leakage is used as the privacy measure. 
\begin{theorem}\label{thm:PUT_fLeakKforDistvsHardDist}
	For any $f$-leakage $\mcl L_f$ in \eqref{eq:Def-fLeakKforDist} and a distortion function $d(\cdot,\cdot)$ with $B_D(x)$ in \eqref{eq:PUT_HardDist_CollectofFeasibleY}, the optimal PUT in \eqref{eq:PUT-HD} is given by
		\begin{align}
		\hspace{-50 pt}&\text{PUT}_{\text{HD},\mcl L_f} (D)\nonumber\\
		=&\inf_{P_{Y|X}: d(X,Y)\le D}\,\mcl L_f(X; Y) \label{eq:PUT_fLeakKforDistvsHardDist},\\		
		=&f(0)+\inf_{Q_Y} \mathbb{E} \mathsmaller{\left[Q_{Y}(B_D(X))\Big(\hspace{-2 pt}f\big(\frac{1}{Q_Y(B_D(X))}\big)\hspace{-1pt}-\hspace{-1 pt}f(0)\hspace{-2 pt}\Big)\hspace{-2 pt}\right]}.
		\label{eq:PUT-fLeakKforDist-HDdist-OptValue}
		\end{align}
	Moreover, letting $Q_Y^\star$ be the distribution achieving the infimum in \eqref{eq:PUT-fLeakKforDist-HDdist-OptValue}, an optimal mechanism $P^{\star}_{Y|X}$ is given by 
	\begin{equation}\label{eq:opt_mech}
	P_{Y|X}^{\star}(y|x)=\frac{\mathbf{1}\big(d(x,y)\le D\big)Q_Y^\star(y)}{Q_Y^\star(B_D(x))}.
	\end{equation}
\end{theorem}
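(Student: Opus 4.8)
The plan is to recognize the tradeoff in \eqref{eq:PUT_fLeakKforDistvsHardDist} as a \emph{joint} infimum over the mechanism $P_{Y|X}$ (subject to the hard-distortion support constraint) and the reference distribution $Q_Y$ appearing in the definition \eqref{eq:Def-fLeakKforDist} of $f$-leakage, and then to solve the inner minimization over $P_{Y|X}$ in closed form for each fixed $Q_Y$. Writing out the $f$-divergence via \eqref{eq:f-divergence},
\begin{equation*}
\mcl L_f(X\to Y) = \inf_{Q_Y} \sum_{x,y} P_X(x)\, Q_Y(y)\, f\!\left(\frac{P_{Y|X}(y|x)}{Q_Y(y)}\right),
\end{equation*}
so that $\text{PUT}_{\text{HD},\mcl L_f}(D) = \inf_{Q_Y}\inf_{P_{Y|X}}\sum_{x,y} P_X(x) Q_Y(y) f(P_{Y|X}(y|x)/Q_Y(y))$, where the two infima may be exchanged since both equal the joint infimum. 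The key structural observation is that the objective is separable across the rows indexed by $x$, so the inner problem decouples into one convex program per $x$.

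Next I would solve, for fixed $Q_Y$ and fixed $x$, the minimization of $\sum_{y} Q_Y(y) f(P_{Y|X}(y|x)/Q_Y(y))$ over the row $P_{Y|X}(\cdot|x)$. The hard-distortion constraint forces $P_{Y|X}(y|x)=0$ for $y\notin B_D(x)$, so those terms contribute the fixed amount $\sum_{y\notin B_D(x)} Q_Y(y) f(0) = (1-Q_Y(B_D(x)))f(0)$; it then remains to minimize $\sum_{y\in B_D(x)} Q_Y(y) f(P_{Y|X}(y|x)/Q_Y(y))$ subject to $\sum_{y\in B_D(x)} P_{Y|X}(y|x)=1$. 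Applying Jensen's inequality to the convex $f$ with the normalized weights $Q_Y(y)/Q_Y(B_D(x))$ on $B_D(x)$ bounds this sum below by $Q_Y(B_D(x))\, f(1/Q_Y(B_D(x)))$, with equality precisely when the ratios $P_{Y|X}(y|x)/Q_Y(y)$ are constant over $y\in B_D(x)$ --- that is, when $P_{Y|X}(y|x)=Q_Y(y)/Q_Y(B_D(x))$, which is exactly the mechanism \eqref{eq:opt_mech}.

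Summing the per-row optima $(1-Q_Y(B_D(x)))f(0) + Q_Y(B_D(x))f(1/Q_Y(B_D(x)))$ against $P_X(x)$ and regrouping the $f(0)$ terms via $\sum_x P_X(x)(1-Q_Y(B_D(x))) = 1 - \mathbb{E}[Q_Y(B_D(X))]$ then yields exactly \eqref{eq:PUT-fLeakKforDist-HDdist-OptValue}, completing the evaluation of the inner infimum. Taking the remaining infimum over $Q_Y$ and denoting a minimizer by $Q_Y^\star$ gives the claimed optimal value and, through the Jensen equality condition, the optimal mechanism \eqref{eq:opt_mech}. To make ``the distribution achieving the infimum'' rigorous, I would observe that $Q_Y$ ranges over the compact probability simplex and that $W\mapsto W f(1/W)$ is the perspective of $f$ and hence convex, so the outer objective is a lower semicontinuous convex function of $Q_Y$ and attains its infimum.

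The main obstacle is a matter of care rather than depth: the boundary and degenerate cases in the Jensen step. Specifically, I must ensure the reference masses $Q_Y(B_D(x))$ are positive for every $x$ with $P_X(x)>0$ (otherwise the divergence is $+\infty$, so such $Q_Y$ is non-optimal and can be excluded), and correctly account for the value of $f(0)$, which may be finite (as for $f(t)=t\log t$ or the Hellinger generators $f_\alpha$ in \eqref{eq:Hellinger_f-function}) or $+\infty$; in the latter case the identity still holds but feasibility forces $Q_Y(B_D(x))=1$. The separability across $x$ and the exchange of infima are immediate, so the substance reduces to the single weighted-Jensen bound together with its tightness.
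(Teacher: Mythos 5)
Your proposal is correct and follows essentially the same route as the paper's proof in Appendix I: exchange the two infima, decouple the inner minimization across rows $x$, split the divergence into the $B_D(x)^c$ terms (contributing $(1-Q_Y(B_D(x)))f(0)$) and the $B_D(x)$ terms, apply weighted Jensen's inequality whose equality condition yields exactly the mechanism \eqref{eq:opt_mech}, and conclude attainment of the outer infimum from convexity of $t\mapsto tf(1/t)$ and compactness of the simplex. Your justification of the infima exchange (both equal the joint infimum) is actually simpler and more robust than the paper's appeal to joint convexity, and your explicit handling of the degenerate cases $Q_Y(B_D(x))=0$ and $f(0)=+\infty$ is a small point of care the paper's proof leaves implicit.
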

A detailed proof in Appendix \ref{Proof:thm:PUT_fLeakKforDistvsHardDist}. Note that as a result of the distribution dependence of the leakage measure $\mcl L_f$ in \eqref{eq:Def-fLeakKforDist}, the optimal tradeoff in \eqref{eq:PUT-fLeakKforDist-HDdist-OptValue} is an \textit{expected} function of $X$. The optimal mechanism $P^*_{Y|X}(y|x)$ is, in fact, the normalized probability of $y$ when the conditional support of $Y$ given $X=x$ is restricted to $B_D(x)$ (i.e., $Y$ is restricted to taking values in $B_D(x)$ for a given $x$).

In \eqref{eq:PUT-HD}, making use of maximal $f$-divergence as the privacy measure, the optimal PUT, denoted as $\text{PUT}_{\text{HD},\mcl L_f^{\text{max}}}(D)$, with respect to the hard distortion constraint is given as the minimal leakage for any given distortion bound $D$ in the following theorem.
\begin{theorem}\label{Thm:PUT_fLeak_HardDist}
	For any maximal $f$-leakage $\mcl L_f^{\text{max}}$ in \eqref{eq:Definition-f_divergence_measure}, a distortion function $d(\cdot,\cdot)$ and $B_D(x)$ in \eqref{eq:PUT_HardDist_CollectofFeasibleY}, the optimal PUT in \eqref{eq:PUT-HD} is given by
	\begin{align}
	\text{PUT}_{\text{HD},\mcl L_f^{\text{max}}} (D)=&\inf_{P_{Y|X}: d(X,Y)\le D}\,\mcl L_f^{\text{max}}(X\to Y), 	\label{eq:PUT_fDivergenceLeak_HardDistortion}\\
	=&q^\star f((q^\star)^{-1})+(1-q^\star) f(0) \label{eq:PUT-fLeakK-HDdist-OptValue},
	\end{align}
	with $q^{\star}$ defined as
	\begin{equation}\label{eq:q_star_def}
	q^\star\triangleq \sup_{Q_Y}\, \inf_x\, Q_Y(B_D(x)).
	\end{equation}
	Moreover, letting $Q_Y^\star$ be the distribution achieving the supremum in \eqref{eq:q_star_def}, an optimal mechanism $P_{Y|X}^{\star}$ is given by \eqref{eq:opt_mech}.
\end{theorem}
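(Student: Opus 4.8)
The plan is to reduce the three-layer optimization defining $\text{PUT}_{\text{HD},\mcl L_f^{\text{max}}}(D)$ to the scalar max-min quantity $q^\star$, mirroring the structure used for Theorem \ref{thm:PUT_fLeakKforDistvsHardDist} but adding one ingredient to handle the supremum over input distributions. First I would unfold \eqref{eq:Definition-f_divergence_measure}: expanding the $f$-divergence \eqref{eq:f-divergence} with $P=P_{Y|X}P_{\tilde X}$ and $Q=P_{\tilde X}\times Q_Y$, the factors $P_{\tilde X}(x)$ inside $f$ cancel, so that $D_f(P_{Y|X}P_{\tilde X}\|P_{\tilde X}\times Q_Y)=\sum_x P_{\tilde X}(x)\,g_{Q_Y}(x)$, where $g_{Q_Y}(x)\triangleq \sum_y Q_Y(y) f\!\left(P_{Y|X}(y|x)/Q_Y(y)\right)$ depends only on the $x$-th row of the mechanism. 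This expression is linear (hence concave) in $P_{\tilde X}$ and, since each summand $Q_Y(y)f(a/Q_Y(y))$ is a perspective of the convex $f$, convex in $Q_Y$. Over the compact probability simplices, Sion's minimax theorem then lets me interchange $\sup_{P_{\tilde X}}$ and $\inf_{Q_Y}$, and because the supremum of a linear functional over the simplex is attained at a vertex, I obtain the clean form $\mcl L_f^{\text{max}}(X\to Y)=\inf_{Q_Y}\max_{x} g_{Q_Y}(x)$.

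With this form in hand, the outer minimization becomes $\inf_{P_{Y|X}}\inf_{Q_Y}\max_x g_{Q_Y}(x)$. Swapping the two infima (always valid), I fix $Q_Y$ and minimize $\max_x g_{Q_Y}(x)$ over feasible $P_{Y|X}$. Since the hard-distortion constraint \eqref{eq:PUT_HardDist_CollectofFeasibleY} forces each row to be supported on $B_D(x)$ and the rows are chosen independently, the minimum of the maximum equals the maximum of the row-wise minima. Writing $q_x\triangleq Q_Y(B_D(x))$ and applying Jensen's inequality to $g_{Q_Y}(x)$ — using $\sum_{y\in B_D(x)}\frac{Q_Y(y)}{q_x}\cdot\frac{P_{Y|X}(y|x)}{Q_Y(y)}=1/q_x$ — gives the row-wise lower bound $g_{Q_Y}(x)\ge h(q_x)$, where $h(q)\triangleq(1-q)f(0)+q\,f(1/q)$, with equality exactly when $P_{Y|X}(y|x)/Q_Y(y)$ is constant on $B_D(x)$, i.e.\ for the mechanism \eqref{eq:opt_mech}. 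Hence $\inf_{P_{Y|X}}\max_x g_{Q_Y}(x)=\max_x h(q_x)$.

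Next I would establish that $h$ is non-increasing on $(0,1]$, the step that collapses the maximum over $x$ into a single evaluation. Writing $t=1/q\ge 1$, the supporting-line bound $f(0)\ge f(t)-t f'(t)$ from convexity of $f$ yields $h'(q)=-f(0)+f(t)-t f'(t)\le 0$ (the same conclusion holds via subgradients when $f$ is nondifferentiable). Therefore $\max_x h(q_x)=h\!\left(\min_x q_x\right)$, and the remaining problem is $\inf_{Q_Y} h\!\left(\min_x Q_Y(B_D(x))\right)$. Because $h$ is non-increasing and continuous, minimizing $h$ amounts to maximizing its argument, namely $\sup_{Q_Y}\min_x Q_Y(B_D(x))=q^\star$ from \eqref{eq:q_star_def}; the supremum is attained at some $Q_Y^\star$ by compactness of the simplex and continuity of $Q_Y\mapsto\min_x Q_Y(B_D(x))$. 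This gives $\text{PUT}_{\text{HD},\mcl L_f^{\text{max}}}(D)=h(q^\star)=q^\star f((q^\star)^{-1})+(1-q^\star)f(0)$, matching \eqref{eq:PUT-fLeakK-HDdist-OptValue}. For the optimal mechanism I plug $Q_Y^\star$ into \eqref{eq:opt_mech}: it is feasible (supported on $B_D(x)$), and evaluating $\mcl L_f^{\text{max}}$ at the single choice $Q_Y=Q_Y^\star$ gives the upper bound $h(q^\star)$, which together with the lower bound just derived shows it attains the optimum.

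The step I expect to be the main obstacle is justifying the minimax interchange cleanly — verifying the convex/concave structure and compactness for Sion's theorem, and in particular handling the boundary where some $Q_Y(y)=0$, so that the perspective $Q_Y(y)f(P_{Y|X}(y|x)/Q_Y(y))$ must be read through its lower-semicontinuous extension, together with the standing assumption $f(0)<\infty$ (noted in the Remark preceding the theorem) that keeps $h$ finite. Once the form $\mcl L_f^{\text{max}}=\inf_{Q_Y}\max_x g_{Q_Y}(x)$ is secured, the monotonicity of $h$ and the row-wise Jensen bound are routine, and the identification with $q^\star$ is immediate.
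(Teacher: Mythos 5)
Your proposal is correct and follows essentially the same route as the paper's proof: both reduce the problem to $\inf_{Q_Y}\sup_x\, g\big(Q_Y(B_D(x))\big)$ with $g(q)=qf(q^{-1})+(1-q)f(0)$ via a minimax interchange justified by linearity in $P_{\tilde{X}}$ and convexity in $(P_{Y|X},Q_Y)$, a row-wise application of Jensen's inequality whose equality case yields the mechanism in \eqref{eq:opt_mech}, and the monotonicity of $g$ (from the supporting-line bound of convexity) to identify the value $g(q^\star)$ with $q^\star$ as in \eqref{eq:q_star_def}. The only difference is bookkeeping: you perform the Sion swap inside $\mcl L_f^{\text{max}}$ and then decouple rows, whereas the paper swaps all three optimizations at once; the intermediate expressions and key lemmas coincide.
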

A detailed proof is in Appendix \ref{Proof:Thm:PUT_fLeak_HardDist}. Observe that, in contrast to the optimal tradeoff $\text{PUT}_{\text{HD},\mcl L_f}$ for $f$-leakage in Thm. \ref{thm:PUT_fLeakKforDistvsHardDist}, which depends on the probability distribution $P_X$, the optimal tradeoff $\text{PUT}_{\text{HD},\mcl L_f^{\text{max}}}$ for maximal $f$-leakage depends only on the support of $P_X$. This results from the fact that the maximal $f$-leakage $\mcl L_f^{\text{max}}$ in \eqref{eq:Definition-f_divergence_measure} is the supremum over all possible probability distribution on the support of $P_X$, and therefore, depends on $P_X$ only through the support.

The next corollary characterizes the optimal tradeoff $\text{PUT}_{\text{HD},\mcl L_\alpha^{\text{max}}}$ for maximal $\alpha$-leakage. Recall that for $\alpha=1$, $\mcl L_{1}^{\text{max}}$ equals $\mcl L_f$ with $f(t)=t\log t$. For $\alpha>1$, from the one-to-one relationship between $\mcl L_\alpha^{\text{max}}$ and $\mcl L_{f_\alpha}^{\text{max}}$ in \eqref{eq:alphaLeakage_HellingerDivergenceLeakage}, we know that finding $\text{PUT}_{\text{HD},\mcl L_\alpha^{\text{max}}}$ is equivalent to finding the optimal tradeoff $\text{PUT}_{\text{HD},\mcl L_f^{\text{max}}}$ in \eqref{eq:PUT_fDivergenceLeak_HardDistortion} for $\mcl L_f^{\text{max}}=\mcl L_{f_\alpha}^{\text{max}}$. 
\begin{corollary}\label{Col:PUT_MaximalAlphaLeak_HardDist}
	For maximal $\alpha$-leakage, the optimal PUT in \eqref{eq:PUT-HD} is given by 
	\hspace{-50pt}\begin{align}\label{eq:PUT-MaxAlphaLeak-U_HardDistortion}
	\text{PUT}_{\text{HD},\mcl L_\alpha^{\text{max}}} (D)=\hspace{-5pt}\inf_{P_{Y|X}: d(X,Y)\le D}\mcl L^{\text{max}}_\alpha(X\to Y),\,\qquad
	\end{align}
	\begin{subnumcases}{\hspace{+45pt}=}
	\inf_{Q_Y} \mathbb{E}\left[\log \mathsmaller{\frac{1}{Q_Y(B_D(X))}}\right],  \hspace{-15pt} & $\alpha=1$ \label{eq:PUT-MI-HDdist-OptValue}
	\\
	-\log q^\star,  \hspace{-15pt} & $\alpha>1$ \label{eq:PUT-MaxL-HDdist-OptValue}
	\end{subnumcases}
	where $q^\star$ is defined in \eqref{eq:q_star_def}. Moreover, an optimal mechanism is given by \eqref{eq:opt_mech}, where for $\alpha=1$, $Q_Y^\star$ achieves the infimum in \eqref{eq:PUT-MI-HDdist-OptValue}; and for $\alpha>1$, $Q_Y^\star$ achieves the supremum in \eqref{eq:q_star_def}.
\end{corollary}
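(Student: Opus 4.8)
The plan is to derive both branches of \eqref{eq:PUT-MaxAlphaLeak-U_HardDistortion} directly from Theorems \ref{thm:PUT_fLeakKforDistvsHardDist} and \ref{Thm:PUT_fLeak_HardDist} by substituting the appropriate $f$-function and simplifying, so that no new optimization is required; the optimal mechanism \eqref{eq:opt_mech} is then inherited in each case.

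For $\alpha=1$, I would use that $\mcl L_1^{\text{max}}$ is the mutual information $I(X;Y)$, which is the $f$-leakage $\mcl L_f$ with $f(t)=t\log t$, and invoke Theorem \ref{thm:PUT_fLeakKforDistvsHardDist}. Writing $q=Q_Y(B_D(X))$, the substitution $f(0)=\lim_{t\to 0^+}t\log t=0$ and $q\,f(1/q)=q\cdot\tfrac{1}{q}\log\tfrac{1}{q}=\log\tfrac{1}{q}$ collapses the bracketed term in \eqref{eq:PUT-fLeakKforDist-HDdist-OptValue} to $\log\tfrac{1}{Q_Y(B_D(X))}$, yielding \eqref{eq:PUT-MI-HDdist-OptValue}. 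The optimal $Q_Y^\star$ and mechanism \eqref{eq:opt_mech} carry over verbatim.

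For $\alpha>1$, the key is the one-to-one relation \eqref{eq:alphaLeakage_HellingerDivergenceLeakage} of Lemma \ref{Lem:alphaLeakage_fDivergenceLeakage}, which writes $\mcl L_\alpha^{\text{max}}=g\big(\mcl L_{f_\alpha}^{\text{max}}\big)$ with $g(u)=\tfrac{1}{\alpha-1}\log\big(1+(\alpha-1)u\big)$ and $f_\alpha$ as in \eqref{eq:Hellinger_f-function}. Since $\alpha>1$, the map $g$ is strictly increasing, so it commutes with the infimum over $P_{Y|X}$ in \eqref{eq:PUT-MaxAlphaLeak-U_HardDistortion}; hence $\text{PUT}_{\text{HD},\mcl L_\alpha^{\text{max}}}(D)=g\big(\text{PUT}_{\text{HD},\mcl L_{f_\alpha}^{\text{max}}}(D)\big)$ and the minimizing mechanism is unchanged. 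I would then evaluate $\text{PUT}_{\text{HD},\mcl L_{f_\alpha}^{\text{max}}}(D)$ from \eqref{eq:PUT-fLeakK-HDdist-OptValue} using $f_\alpha(0)=-\tfrac{1}{\alpha-1}$ and $f_\alpha\big((q^\star)^{-1}\big)=\tfrac{1}{\alpha-1}\big((q^\star)^{-\alpha}-1\big)$; a short computation gives $q^\star f_\alpha((q^\star)^{-1})+(1-q^\star)f_\alpha(0)=\tfrac{1}{\alpha-1}\big((q^\star)^{1-\alpha}-1\big)$. Feeding this into $g$ cancels the factor $\alpha-1$ inside the logarithm and leaves $\tfrac{1}{\alpha-1}\log(q^\star)^{1-\alpha}=-\log q^\star$, matching \eqref{eq:PUT-MaxL-HDdist-OptValue}, with mechanism \eqref{eq:opt_mech} and $Q_Y^\star$ achieving the supremum in \eqref{eq:q_star_def} inherited from Theorem \ref{Thm:PUT_fLeak_HardDist}.

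The only genuinely delicate point is justifying that the monotone transformation $g$ commutes with the infimum defining the PUT; this is immediate once strict monotonicity of $g$ for $\alpha>1$ is noted, but it is precisely the step that guarantees both the closed form $-\log q^\star$ and the fact that the same mechanism \eqref{eq:opt_mech} remains optimal for maximal $\alpha$-leakage. Everything else reduces to routine substitution of the specific $f$-functions into Theorems \ref{thm:PUT_fLeakKforDistvsHardDist} and \ref{Thm:PUT_fLeak_HardDist}.
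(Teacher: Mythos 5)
Your proposal is correct and follows essentially the same route as the paper: the paper likewise obtains the $\alpha=1$ case by specializing Theorem \ref{thm:PUT_fLeakKforDistvsHardDist} to $f(t)=t\log t$, and the $\alpha>1$ case by combining the monotone one-to-one relation \eqref{eq:alphaLeakage_HellingerDivergenceLeakage} of Lemma \ref{Lem:alphaLeakage_fDivergenceLeakage} with Theorem \ref{Thm:PUT_fLeak_HardDist} for $f_\alpha$, inheriting the mechanism \eqref{eq:opt_mech} in both cases. Your explicit algebra (e.g., $q^\star f_\alpha((q^\star)^{-1})+(1-q^\star)f_\alpha(0)=\tfrac{1}{\alpha-1}\big((q^\star)^{1-\alpha}-1\big)$, then applying $g$ to get $-\log q^\star$) checks out and simply fills in the substitutions the paper leaves implicit.
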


\begin{remark}
	The optimal PUTs in \eqref{eq:PUT-fLeakKforDist-HDdist-OptValue} and \eqref{eq:PUT-fLeakK-HDdist-OptValue} simplify to finding an output distribution $Q_Y^*$ that can be viewed as a ``target'' distribution, i.e., the optimal mechanism aims to produce this distribution as closely as possible, subject to the utility constraint. In particular, given an input, the optimal mechanism in \eqref{eq:opt_mech} distributes the outputs according to $Q_Y^*$ while conditioning the output to be within a ball of radius $D$ around the input. The optimization in \eqref{eq:q_star_def} ensures that all inputs are uniformly masked while \eqref{eq:PUT-fLeakKforDist-HDdist-OptValue} provides average guarantees. 
	
	Moreover, for any arbitrarily chosen maximal $f$-leakage, the optimal PUT in \eqref{eq:PUT-fLeakK-HDdist-OptValue} leads to the same target distribution $Q^*_Y$ given by \eqref{eq:q_star_def}. Therefore, the corresponding optimal mechanism in \eqref{eq:opt_mech} is independent of the choice of maximal $f$-leakage. As a special case of \eqref{eq:PUT-fLeakK-HDdist-OptValue},
		the optimal tradeoff in \eqref{eq:PUT-MaxL-HDdist-OptValue}, for maximal $\alpha$-leakage with $\alpha>1$, is achieved by the optimal mechanism (in \eqref{eq:opt_mech}) that is no longer depending on $\alpha$. And the optimal tradeoff (in \eqref{eq:PUT-MaxL-HDdist-OptValue}) itself is also independent of the value of $\alpha>1$. 
\end{remark}

\subsection{PUTs for Datasets Containing Non-Sensitive Data}\label{Subsec:PUT_for-HDvsAlphaLK}
For datasets containing both sensitive and non-sensitive data, indicated by $S$ and $X$, respectively, as shown in Fig \ref{fig:Datamodel_2}, the purpose of privacy protection is to limit information leakage of sensitive data while releasing non-sensitive data. We use $\alpha$-leakage from $S$ to $Y$ as the privacy measure, where $Y$ is the released version of $X$. Therefore, with $P_{Y|S,X}$ in the place of $P_{Y|X}$ in \eqref{eq:PUT-HD}, we obtain the optimal PUT as
\begin{equation}\label{eq:PUT_AlphaLeak_HD}
\text{PUT}_{\text{HD},\mcl L_\alpha}(D)=\inf_{P_{Y|S,X}: d(X,Y)\le D}\,\mcl L_\alpha(S; Y).
\end{equation}

The following theorem bounds $\text{PUT}_{\text{HD},\mcl L_\alpha}$ from below. Note that the $B_D$ in the following is the distortion ball defined in \eqref{eq:PUT_HardDist_CollectofFeasibleY}.
\begin{theorem}\label{Thm:PUT_AlphaLeak_HardDist}
	The minimal leakage $\text{PUT}_{\text{HD},\mcl L_\alpha}$ ($1\leq \alpha\leq \infty$) in \eqref{eq:PUT_AlphaLeak_HD} is bounded from below by
	\begin{align}\label{eq:PUT_AlphaLeak_HD_Alpha>1-LB}
	&\text{PUT}_{\text{HD},\mcl L_\alpha} (D)\geq \nonumber\\
	&\begin{cases}
	\sum\limits_{s,x}P(s,x)\log\Big(\max\limits_{\substack{y\in B_D(x)}} \sum\limits_{s'\in\mcl S_D(y)}P(s')\Big)^{-1}, & \hspace{-10pt}\alpha\hspace{-2pt}=\hspace{-2pt}1\\
	\log\sum\limits_{s,x}\frac{P(s)P(s,x)}{\max\limits_s P_S(s)} \Big(\max\limits_{y\in B_D(x)}\sum\limits_{s'\in\mcl S_D(y) }P(s')\Big)^{-1}, & \hspace{-12pt}\alpha\hspace{-2pt}=\hspace{-2pt}\infty\\
	\frac{\alpha}{\alpha-1}\log\sum\limits_{\substack{s,x}}\frac{P(s)^{\alpha}P(x|s)}{\|P_S\|_{\alpha}} \Big(\hspace{-2pt}\max\limits_{y\in B_D(x)}\hspace{-2pt}\sum\limits_{s'\in\mcl S_D(y) }\hspace{-3pt}P(s')^{\alpha}\Big)^{\frac{1-\alpha}{\alpha}}, &\hspace{-3pt}\text{else}\\	
	\end{cases}\nonumber
	\end{align}
	where the set $\mcl S_D(y)$ of $s$ for each $y$ is defined as
	\begin{align}
	\mcl S_D(y)\triangleq\{s:\exists\, x,\,P_{S,X}(s,x)>0,d(x,y)\leq D \}.
	\end{align}
    The lower bound is tight if there exists an privacy mechanism $P_{Y|S,X}\in \mcl P_{Y|S,X}(D)$ such that 
	\begin{itemize}
		\item[(i)] given $(s,x)$, 
		for any $y$ with $P(y|s,x)>0$,
		\begin{align}\label{AlphaLK-HD-optMech_alpha-Cond1}
		\sum_{s'\in\mcl S_D(y)}P(s')=\max_{y'\in B_D(s,x)} \sum_{s'\in\mcl S_D(y')}P(s');
		\end{align} 	
		\item[(ii)] given any $y$ with $P_Y(y)>0$, for any $s\in S_D(y)$,
		\begin{align}
			\sum\limits_{x:d(x,y)\leq D}\hspace{-10pt}P(y|s,x)P(x|s)\hspace{-2pt}=\hspace{-3pt}\frac{P_Y(y)}{\sum\limits_{s'\in\mcl S_D(y)}\hspace{-5pt} P(s')}, \label{AlphaLK-HD-optMech_alpha-Cond2}
		\end{align}
		where $P_Y$ is the marginal distribution of $Y$ from the privacy mechanism $P_{Y|S,X}$ and $P_{S,X}$.
	\end{itemize}

\end{theorem}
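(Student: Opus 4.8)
The plan is to bound the adversary's posterior gain directly, exploiting the one structural consequence of the hard-distortion constraint: for every output $y$ the joint slice $P_{S,Y}(\cdot,y)$ is supported on $\mcl S_D(y)$, because $P_{Y|S,X}(y|s,x)>0$ forces $d(x,y)\le D$. By Theorem~\ref{Thm:DefEquialentExpression_alphaleakage}, $\mcl L_\alpha(S;Y)=I_\alpha^{\text{A}}(S;Y)$, so for $\alpha>1$ it suffices to lower-bound $N\triangleq\sum_y\|P_{S,Y}(\cdot,y)\|_\alpha$, the numerator of the Arimoto MI, since $\frac{\alpha}{\alpha-1}\log(N/\|P_S\|_\alpha)$ is increasing in $N$; the extreme cases $\alpha=1$ and $\alpha=\infty$ reduce, respectively, to upper-bounding $H(S|Y)$ and lower-bounding $\sum_y\max_s P_{S,Y}(s,y)$.

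First, for each fixed $y$ I would bound the per-$y$ term below by a single linear functional of $P_{S,Y}(\cdot,y)$ engineered to reproduce the prior mass of $\mcl S_D(y)$. Writing $M_\alpha(y)\triangleq\sum_{s'\in\mcl S_D(y)}P_S(s')^\alpha$, the dual-norm (Hölder) inequality applied to the vector $u_s(y)=P_S(s)^{\alpha-1}M_\alpha(y)^{-(\alpha-1)/\alpha}$, which has unit $\tfrac{\alpha}{\alpha-1}$-norm over $\mcl S_D(y)$, gives
\[
\|P_{S,Y}(\cdot,y)\|_\alpha\ge M_\alpha(y)^{\frac{1-\alpha}{\alpha}}\sum_s P_S(s)^{\alpha-1}P_{S,Y}(s,y).
\]
The same device covers the endpoints: for $\alpha=\infty$ the $\ell_\infty$ dual norm ($\alpha'=1$) with $u_s(y)\propto P_S(s)$ on $\mcl S_D(y)$ yields $\max_s P_{S,Y}(s,y)\ge\big(\sum_{s'\in\mcl S_D(y)}P_S(s')\big)^{-1}\sum_s P_S(s)P_{S,Y}(s,y)$, and for $\alpha=1$ replacing the posterior by the prior-tilted reference $P_S(s)\mathbf{1}(s\in\mcl S_D(y))/\sum_{s'\in\mcl S_D(y)}P_S(s')$ and invoking nonnegativity of KL divergence gives $H(S|Y)\le H(S)+\sum_y P_Y(y)\log\sum_{s'\in\mcl S_D(y)}P_S(s')$.

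Next I would substitute $P_{S,Y}(s,y)=\sum_x P_S(s)P_{X|S}(x|s)P_{Y|S,X}(y|s,x)$, swap the order of summation, and note that $P_{Y|S,X}(\cdot|s,x)$ is a probability vector supported in $B_D(x)$. The resulting inner factor is a convex average over $y\in B_D(x)$ of $M_\alpha(y)^{(1-\alpha)/\alpha}$, a decreasing function of $M_\alpha(y)$ since $(1-\alpha)/\alpha<0$, hence is at least $\big(\max_{y\in B_D(x)}M_\alpha(y)\big)^{(1-\alpha)/\alpha}$; the $\alpha=1,\infty$ analogues produce $\max_{y\in B_D(x)}\sum_{s'\in\mcl S_D(y)}P_S(s')$ in the denominator. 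Using $P_S(s)^\alpha P_{X|S}(x|s)=P_S(s)^{\alpha-1}P_{S,X}(s,x)$ (and $P_S(s)P_{S,X}(s,x)$ for $\alpha=\infty$) to name the tilting weight, then dividing by $\|P_S\|_\alpha$ and applying $\frac{\alpha}{\alpha-1}\log$, reproduces the three displayed bounds.

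For tightness I would read off the equality conditions of the two inequalities. Equality in the Hölder/dual-norm step forces $P_{S,Y}(s,y)\propto P_S(s)$ on $\mcl S_D(y)$, i.e. $P_{S|Y}(s|y)=P_S(s)/\sum_{s'\in\mcl S_D(y)}P_S(s')$, which upon marginalizing over $x$ is exactly condition (ii); equality in the averaging step forces $P_{Y|S,X}(\cdot|s,x)$ to be supported on the maximizers of $M_\alpha(y)$ over $B_D(x)$, which is condition (i). Thus any feasible mechanism satisfying (i) and (ii) meets the bound with equality. The only genuinely creative step—and the main obstacle—is choosing the tilting weights $u_s(y)$ so that the Hölder bound collapses onto the prior-restricted mass $M_\alpha(y)$ rather than onto the cruder $|\mcl S_D(y)|$; once they are found the rest is bookkeeping, though one must watch the sign of $(1-\alpha)/\alpha$ and treat $\alpha=1,\infty$ through their limiting dualities rather than the generic Hölder inequality.
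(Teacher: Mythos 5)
Your proof is correct and follows essentially the same route as the paper's: a per-$y$ lower bound on $\|P_{S,Y}(\cdot,y)\|_\alpha$ via prior-tilted weights supported on $\mcl S_D(y)$ (your Hölder/dual-norm step is the same weighted power-mean inequality the paper invokes as Jensen's inequality on $t\mapsto t^{1/\alpha}$, producing the identical intermediate expression $\sum_{s,x,y}M_\alpha(y)^{(1-\alpha)/\alpha}P(s)^\alpha P(x,y|s)$ and the identical equality condition), followed by replacing the convex average of $M_\alpha(y)^{(1-\alpha)/\alpha}$ over $y\in B_D(x)$ by its worst case, which yields tightness conditions (ii) and (i) exactly as in the paper. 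The endpoint treatments are likewise equivalent: your KL-nonnegativity argument for $\alpha=1$ is the paper's log-sum inequality in disguise, and your $\ell_\infty$-dual bound for $\alpha=\infty$ matches the paper's weighted-average bound on $\max_s P_{S,Y}(s,y)$.
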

The proof details are in Appendix \ref{proof:Thm:PUT_AlphaLeak_HardDist}. 

Note that by using maximal $\alpha$-leakage as the privacy measure, the setting for publishing datasets consisting of sensitive and non-sensitive data can be generalized to restrict leakages about \textit{all} functions of the sensitive data. This will be addressed in future work.

\section{Applications: PUTs for Hard and Average Distortion Constraints}\label{Sec:Examples}
In this section, we first illustrate the results of Sec. \ref{Sec:PUT_for-HD} and present the optimal PUTs for two distinct hard distortion functions. Our first choice for hard distortion, restricted to binary datasets, is the absolute distance between the types  (i.e., empirical distributions) of the original and revealed (binary) datasets.
This choice is motivated by the observation that, for any dataset, the type is a sufficient statistic for any function of the dataset that is unaffected by permutation---for example, mean, variance, correlation between two features. Thus, constraining the distortion between the released type and the original type, one can guarantee the utility of the released dataset for a variety of statistical applications. In Example 1 below, we derive the optimal mechanism under this distortion measure for binary datasets.
Our second choice for hard distortion is the Hamming distance between the original and released datasets for discrete alphabets. 
This choice is motivated by the fact that a hard Hamming distortion is more relevant when the order of the entries in the dataset cannot be changed. In Example 2 below, we derive the optimal mechanism under this distortion for a dataset sampled from an arbitrary discrete alphabet.
Note that, as a consequence of Corollary~\ref{Col:PUT_MaximalAlphaLeak_HardDist}, for these examples the optimal PUT and privacy mechanism are the same for all values of $\alpha>1$.

In contrast to hard distortion measures, in Example 3, we study the PUTs that result from using \textit{average} Hamming distortion as the utility measure and  maximal $\alpha$-leakage as the privacy measure. Due to lack of closed-form solutions, we use numerical results to highlight the dependence of both the optimal PUTs and the privacy mechanisms on $\alpha$.

\subsection{Example 1: Binary Datasets with Hard Distortion on Types}\label{Subsection:PUT_HardDist_BinaryTypes}

Let $X^n$ be a random dataset with $n$ entries and $Y^n$ be the corresponding released dataset generated by a privacy mechanism $P_{ Y^n|X^n}$. Entries of both $X^n$ and $Y^n$ are from the same alphabet $\mcl X$. Adopting the notation of \cite[Chapter 11]{IT_Cover}, let $P_{x^n}$ and $P_{y^n}$ indicate the types of input dataset $x^n$ and output dataset $y^n$, respectively. We define the distortion function as the distance between types, given by
\begin{equation}\label{eq:HDonTypes}
	d_{\text{T}}(x^n,y^n)=\max_{x\in\mathcal{X}}|P_{x^n}(x)-P_{y^n}(x)|,
\end{equation}
and therefore, obtain $\text{PUT}_{\text{HD},\mcl L_\alpha^{\text{max}}} $ as in \eqref{eq:PUT-MaxAlphaLeak-U_HardDistortion} but with datasets $X^n,Y^n$ in place of single letters $X,Y$. 
Since types of $n$-length sequences take on only values that are multiples of $\frac{1}{n}$, this distortion function $d_{\text{T}}$ takes on values of the form $\frac{m}{n}$, where $m\in[0,n]$.

We concentrate on binary datasets, i.e., $\mcl X=\{0,1\}$.
Note that for binary datasets, we can simply write $d_{\text{T}}(x^n,y^n)=|P_{x^n}(1)-P_{y^n}(1)|$.
For a $n$-length binary dataset, the number of types is $n+1$. Therefore, all input and output datasets can be categorized into $n+1$ type classes defined as  
\begin{align}\label{eq:CollectionSeq_Type}
T(i)\triangleq\{x^n:n P_{x^n}(1)=i\}.
\end{align}

\begin{theorem}\label{Thm:PUTMaxAlphaLk&HardDist-SeqType-Alpha>1}
	For binary datasets and the distortion function in \eqref{eq:HDonTypes}, given integers $n,m$ where $0\le m\le n$, the optimal tradeoff for $\alpha>1$ is
	\begin{align}
	\label{eq:PUT-MaxAlphaLeak-UvsHardType_OptProbExample}
	\text{PUT}_{\text{HD},\mcl L_\alpha^{\text{max}}} \left(\frac{m}{n}\right)&=\min_{\substack{P_{Y^n|X^n}:\\d_{\text{T}}(X^n,Y^n)\leq \frac{m}{n}}} \,\mcl L_{\alpha}^{\text{max}}(X^n\to Y^n)\\
	\label{eq:PUT-MaxAlphaLeak-UvsHardType_OptProbExample-MinLeak}
	&=\log \mathsmaller{\ceil*{\frac{n+1}{2m+1}}}.
	\end{align}
	An optimal privacy mechanism maps all input datasets in \textit{a type class} to a unique output \textit{dataset} which is feasible and belongs to a type class in the set $\mcl T^{\star}$ given by
	\begin{align}\label{eq:Opt-OutTypeSet}
	\mathsmaller{\mcl T^{\star}\triangleq\Big\{T(j): j\hspace{-2pt}=\hspace{-2pt}l\hspace{-2pt}+\hspace{-2pt}(2m\hspace{-2pt}+\hspace{-2pt}1)k, k\in\mathsmaller{\left[0,\ceil*{\frac{n+1}{2m+1}}\hspace{-1pt}-\hspace{-1pt}1\right]}\Big\}},
	\end{align}
where $l=m$ if $\ceil{\frac{n+1}{2m+1}}-\frac{n+1}{2m+1}\leq \frac{m}{2m+1}$, and otherwise, $l=n-\left(\ceil{\frac{n+1}{2m+1}}-1\right)(2m+1)$.
\end{theorem}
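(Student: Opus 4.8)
The plan is to reduce the optimal tradeoff to the combinatorial quantity $q^\star$ from Corollary~\ref{Col:PUT_MaximalAlphaLeak_HardDist} and then compute that quantity exactly for the type-distance distortion on binary datasets. By \eqref{eq:PUT-MaxL-HDdist-OptValue}, for $\alpha>1$ we have $\text{PUT}_{\text{HD},\mcl L_\alpha^{\text{max}}}(m/n)=-\log q^\star$, where $q^\star=\sup_{Q}\inf_{x^n}Q(B_{m/n}(x^n))$, the supremum of the worst-case mass a single output distribution can place inside every distortion ball. So the whole theorem is equivalent to proving $q^\star=\lceil(n+1)/(2m+1)\rceil^{-1}$. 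The first step is to observe that, since $d_{\text{T}}$ depends only on types, the relevant structure collapses onto the $n+1$ type classes $T(0),\dots,T(n)$ indexed by $i=nP_{x^n}(1)\in[0,n]$: a feasible output for input-type $i$ is any dataset whose type index $j$ satisfies $|i-j|\le m$, i.e.\ the distortion ball (in type space) is the integer interval $[i-m,i+m]\cap[0,n]$.

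Next I would argue that the optimization over $Q_{Y^n}$ can be replaced by an optimization over a distribution on the $n+1$ type indices, putting all mass of a given index on a single representative dataset. With $w_j$ the total mass on type index $j$, we have $q^\star=\sup_{w}\min_{i\in[0,n]}\sum_{j:|i-j|\le m}w_j$: we want to choose a probability vector $(w_0,\dots,w_n)$ maximizing the minimum, over all centers $i$, of the mass in a window of radius $m$. This is a clean covering/packing problem on the integer line $\{0,\dots,n\}$. The key combinatorial claim is that the optimal value is $1/N$ where $N=\lceil(n+1)/(2m+1)\rceil$ is the minimum number of length-$(2m+1)$ windows needed to cover $[0,n]$. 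For the achievability direction I would place equal mass $1/N$ on $N$ type indices spaced $2m+1$ apart (this is exactly the set defined by $\mcl T^\star$ in \eqref{eq:Opt-OutTypeSet}, with the offset $l$ chosen to center the pattern so that every window of radius $m$ contains at least one chosen index): then every ball contains at least one atom, giving value $\ge 1/N$.

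For the converse (upper bound on $q^\star$), the natural approach is a packing/averaging argument: choose $N$ pairwise-disjoint windows of radius $m$ (equivalently, $N$ centers spaced $\ge 2m+1$ apart) covering all of $\{0,\dots,n\}$; since they are disjoint, the sum of their window-masses is at most the total mass $1$, so by pigeonhole at least one window has mass $\le 1/N$, forcing $\inf_i(\cdots)\le 1/N$. The delicate point is to confirm that one can always find $N=\lceil(n+1)/(2m+1)\rceil$ disjoint radius-$m$ windows whose union covers $[0,n]$, and simultaneously that $N$ windows suffice for the achievability covering; this is where the explicit offset $l$ in \eqref{eq:Opt-OutTypeSet} and the two cases on $\lceil(n+1)/(2m+1)\rceil-(n+1)/(2m+1)$ versus $m/(2m+1)$ enter, handling the boundary effects at the two ends of $[0,n]$ where a window may stick out past $0$ or $n$. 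Finally I would invoke Corollary~\ref{Col:PUT_MaximalAlphaLeak_HardDist} to translate $q^\star=1/N$ into $-\log q^\star=\log N$ and to read off the optimal mechanism \eqref{eq:opt_mech} as the map sending each input type class to a unique chosen output type.

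I expect the main obstacle to be the converse together with the boundary bookkeeping: proving that the min-over-centers is genuinely bounded by $1/N$ requires exhibiting the right set of $N$ disjoint windows tiling an interval whose length $n+1$ is generally not a multiple of $2m+1$, and verifying that the centered placement encoded by the case split on $l$ both achieves the covering and makes the packing bound tight. Reducing arbitrary $Q_{Y^n}$ to a type-indexed distribution (justifying that concentrating each type-index's mass on one representative dataset does not decrease the objective, and that only type-index windows matter) is routine but must be stated carefully, since $q^\star$ is a supremum over arbitrary output distributions, not just type-symmetric ones.
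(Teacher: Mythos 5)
Your proposal is correct and follows essentially the same route as the paper: reduce to $q^\star$ via Corollary~\ref{Col:PUT_MaximalAlphaLeak_HardDist}, collapse the output distribution onto type classes (the paper's $Q_T$ is exactly your $w_j$), prove achievability by placing equal mass on the $\lceil (n+1)/(2m+1)\rceil$ equally spaced type indices of $\mcl T^\star$, and prove the converse by the disjoint-window averaging/pigeonhole bound, with the offset $l$ and its case split handling the boundary effects just as the paper's index set $\mcl I_T$ does.
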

\begin{figure}[t]
	\centering
	\includegraphics[width=3.5 in]{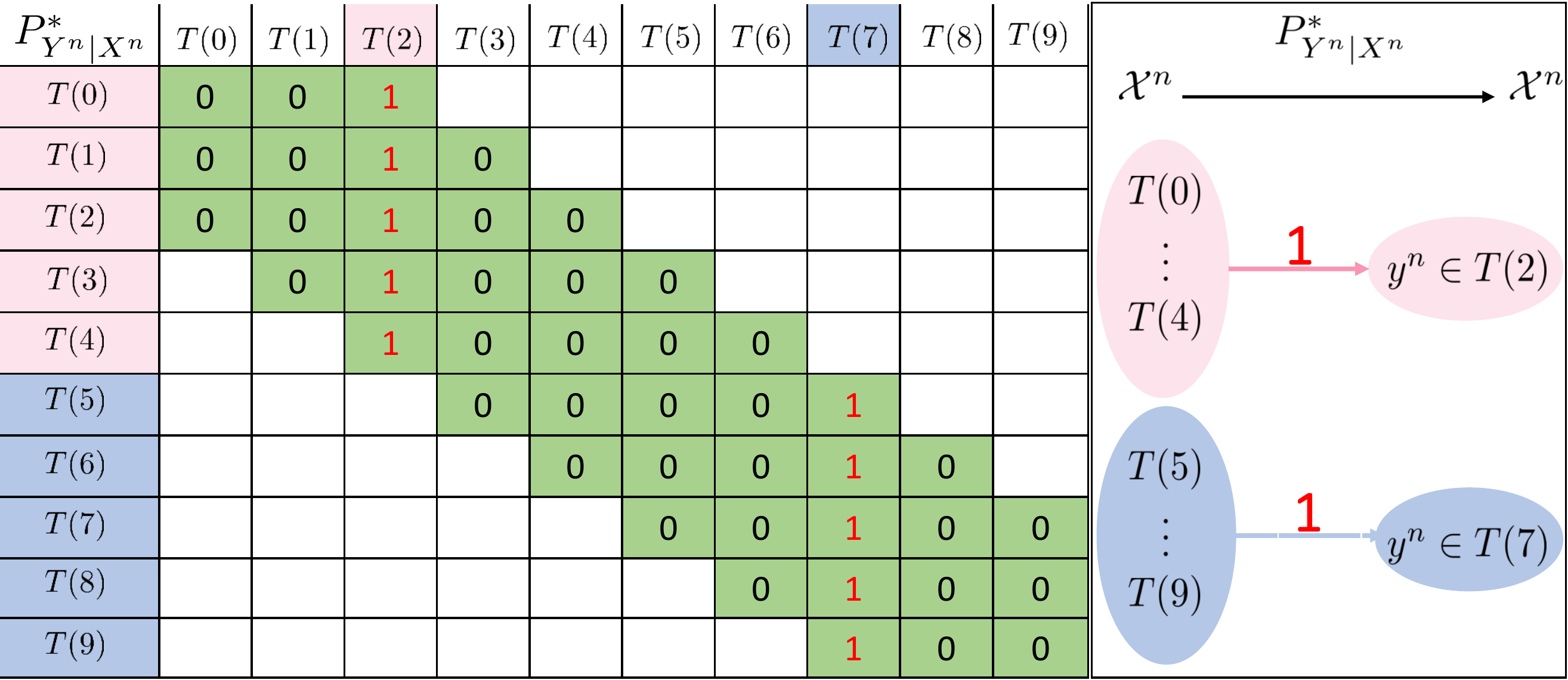}
	\caption{An optimal mechanism $\text{PUT}_{\text{HD},\mcl L_\alpha^{\text{max}}} \left(\frac{m}{n}\right)$ for $\alpha>1$ with $(n,m)=(9,2)$, where rows and columns are types of $X^n$ and $Y^n$, respectively. Note that the hard distortion forces conditional probabilities of outputs outside the feasible ball of given input to be zero. We highlight the conditional probabilities of feasible outputs in green, and give their values in the optimal mechanism.}
	\label{fig:Example1_Channel_1}
\end{figure}
A detailed proof is in Appendix \ref{Proof:Thm:PUTMaxAlphaLk&HardDist-SeqType-Alpha>1}. 
Note that for any $x^n$ in the type class $T(i)$, the corresponding output $y^n$ generated by the optimal mechanism $P^*_{Y^n|X^n}$ is unique and belongs to the unique type class in $\mcl T^{\star}\cap\{T(j): |i-j|\leq m\}$.
For example, if $(n,m)=(9,2)$, then from Thm. \ref{Thm:PUTMaxAlphaLk&HardDist-SeqType-Alpha>1}, we have $\text{PUT}_{\text{HD},\mcl L_\alpha^{\text{max}}}(\frac{2}{9})=1$ bit and $\mcl T^{\star}=\{T(2),T(7)\}$. Fig. \ref{fig:Example1_Channel_1} shows the optimal mechanism, which maps all input datasets in $\{T(i):i\in[0,4]\}$ (resp. $\{T(i):i\in[5,9]\}$) to a \textit{unique output dataset} in $T(2)$ (resp. $T(7)$) with probability $1$.

\subsection{Example 2: Hard Hamming Distortion on Datasets}\label{Subsection:Exp-PUT_HardHammingDist_Sequences}
In the example, we consider hard Hamming distortion on datasets with entries from general finite alphabets. 
Formally, for datasets $x^n,y^n\in \mcl X^n$, we define the Hamming distortion function as
\begin{align}\label{eq:HamingHD_data sets}
	d_{\text{H}}(x^n,y^n)=\frac{1}{n}\sum_{i=1}^{n}\mathbf{1}(x_{i}\neq y_{i}).
\end{align}
Therefore, we obtain $\text{PUT}_{\text{HD},\mcl L_\alpha^{\text{max}}} $ as in \eqref{eq:PUT-MaxAlphaLeak-U_HardDistortion} but with datasets $X^n,Y^n$ in place of single letters $X,Y$.
\begin{theorem}\label{thm:OptPUT-MaxAlphaLeak-UvsHardHamingDist}
	For datasets from a finite alphabet $\mcl X$ and Hamming distortion function, for any integers $n,m$ where $0\le m\le n$, the optimal tradeoff for $\alpha>1$ is
	\begin{align}
	\label{eq:PUT-MaxAlphaLeak-UvsHardHamingDist_OptProbExample}
		\text{PUT}_{\text{HD},\mcl L_\alpha^{\text{max}}}\left(\frac{m}{n}\right)&=\min_{\substack{P_{Y^n|X^n}:\\d_{\text{H}}(x^n,y^n)\leq \frac{m}{n}}}\,\mcl L_{\alpha}^{\text{max}}(X^n\to Y^n)\\
		\label{eq:PUT-MaxAlphaLeak-UvsHardHamingDist_OptProbExample-MinLeak}
		&=\log \frac{\left|\mcl X\right|^n}{\sum_{i=0}^{m}{n \choose i}\left(|\mcl X|-1\right)^i}.
	\end{align}
	An optimal privacy mechanism maps each input $x^n\in \mcl X^n$ \textbf{uniformly} to every feasible output, i.e., for all $x^n,y^n$ where $d_H(x^n,y^n)\le \frac{m}{n}$, $P_{Y^n|X^n}(y^n|x^n)=\frac{1}{\sum_{i=0}^{m}{n \choose i}\left(|\mcl X|-1\right)^i}$.
\end{theorem}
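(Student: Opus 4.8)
The plan is to invoke Corollary~\ref{Col:PUT_MaximalAlphaLeak_HardDist}, which reduces the problem for $\alpha>1$ to computing $q^\star=\sup_{Q_Y}\inf_{x^n}Q_Y(B_D(x^n))$ and to verifying that the claimed uniform mechanism attains it, since the optimal tradeoff is $-\log q^\star$. First I would characterize the hard-distortion ball: under $d_{\text{H}}$ with radius $\frac{m}{n}$, the feasible set $B_D(x^n)$ is the Hamming ball of radius $m$ around $x^n$, so its cardinality is the \emph{constant} $V\triangleq\sum_{i=0}^{m}\binom{n}{i}(|\mcl X|-1)^i$, independent of $x^n$. This constancy of the ball volume is the structural fact that makes the problem clean, and the target value $\log\frac{|\mcl X|^n}{V}$ in \eqref{eq:PUT-MaxAlphaLeak-UvsHardHamingDist_OptProbExample-MinLeak} is exactly $-\log\frac{V}{|\mcl X|^n}$, suggesting $q^\star=\frac{V}{|\mcl X|^n}$.

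Next I would establish the two directions for $q^\star$. For achievability, take $Q_Y$ to be the uniform distribution on $\mcl X^n$; then for every $x^n$, $Q_Y(B_D(x^n))=\frac{V}{|\mcl X|^n}$, so $\inf_{x^n}Q_Y(B_D(x^n))=\frac{V}{|\mcl X|^n}$ and hence $q^\star\geq\frac{V}{|\mcl X|^n}$. Substituting the uniform $Q_Y^\star$ into the optimal mechanism \eqref{eq:opt_mech} gives $P_{Y^n|X^n}^\star(y^n|x^n)=\frac{\mathbf{1}(d_{\text{H}}(x^n,y^n)\le m/n)}{V}$, which is precisely the uniform-over-the-ball mechanism in the statement. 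For the converse, I would argue $q^\star\leq\frac{V}{|\mcl X|^n}$ by a counting/averaging argument: for any $Q_Y$, averaging $Q_Y(B_D(x^n))$ uniformly over all $x^n\in\mcl X^n$ yields $\frac{1}{|\mcl X|^n}\sum_{x^n}\sum_{y^n\in B_D(x^n)}Q_Y(y^n)=\frac{1}{|\mcl X|^n}\sum_{y^n}Q_Y(y^n)\,|\{x^n:y^n\in B_D(x^n)\}|=\frac{V}{|\mcl X|^n}$, where I use that the relation $d_{\text{H}}(x^n,y^n)\le m/n$ is symmetric so each $y^n$ lies in exactly $V$ balls. Since the infimum over $x^n$ is at most this uniform average, $\inf_{x^n}Q_Y(B_D(x^n))\leq\frac{V}{|\mcl X|^n}$ for every $Q_Y$, giving $q^\star\leq\frac{V}{|\mcl X|^n}$.

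Combining the two bounds yields $q^\star=\frac{V}{|\mcl X|^n}$, and then \eqref{eq:PUT-MaxL-HDdist-OptValue} gives $\text{PUT}_{\text{HD},\mcl L_\alpha^{\text{max}}}(\frac{m}{n})=-\log q^\star=\log\frac{|\mcl X|^n}{V}$, matching \eqref{eq:PUT-MaxAlphaLeak-UvsHardHamingDist_OptProbExample-MinLeak}; optimality of the stated mechanism follows from the Corollary since it is exactly \eqref{eq:opt_mech} for this $Q_Y^\star$. I expect the main obstacle to be the converse: one must confirm that the symmetry of the Hamming relation makes every output $y^n$ belong to exactly $V$ input balls (so the double sum telescopes to the constant ball volume), and that bounding an infimum by a uniform average is legitimate. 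The achievability and the mechanism identification are then essentially bookkeeping once the ball volume is recognized as constant.
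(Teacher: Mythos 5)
Your proposal is correct and follows essentially the same route as the paper's proof: both rest on Corollary~\ref{Col:PUT_MaximalAlphaLeak_HardDist}, the constancy of the Hamming-ball volume, the symmetry of the Hamming relation (each $y^n$ lies in exactly $\sum_{i=0}^{m}\binom{n}{i}(|\mcl X|-1)^i$ balls), the infimum-bounded-by-average double-counting argument for the converse, and the uniform output distribution for achievability, yielding the mechanism via \eqref{eq:opt_mech}. The only difference is presentational: you state achievability and converse as two separate bounds, while the paper derives the upper bound and then observes that its equality condition is met by the uniform distribution.
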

Note that for any pair of $x^n$ and $y^n$, the optimal mechanism $P^*_{Y^n|X^n}(y^n|x^n)$ is the average probability of $y^n$ when the support of $Y^n$ is restricted to $B_D(x_n)$, i.e., $Y^n$ takes values from $\{y^n: d_{\text{H}}(x^n,y^n)\leq \frac{m}{n}\}$.
The key observation to reach the conclusion in Thm. \ref{thm:OptPUT-MaxAlphaLeak-UvsHardHamingDist} is that every output dataset is in the same number of feasible balls, such that a uniform distribution over the output space leads to equal probability for the feasible ball of each input dataset. The proof details are in Appendix \ref{proof:thm:OptPUT-MaxAlphaLeak-UvsHardHamingDist}. Fig. \ref{fig:PUT-MaxAlphaLeak-UvsHardHamingDist_Mechanism} illustrates the optimal mechanism in Thm. \ref{thm:OptPUT-MaxAlphaLeak-UvsHardHamingDist} for $\mcl X=\{0,1,2\}$ and $(n,m)=(2,1)$.\\
\begin{figure}
	\centering
	\includegraphics[width=3.5in]{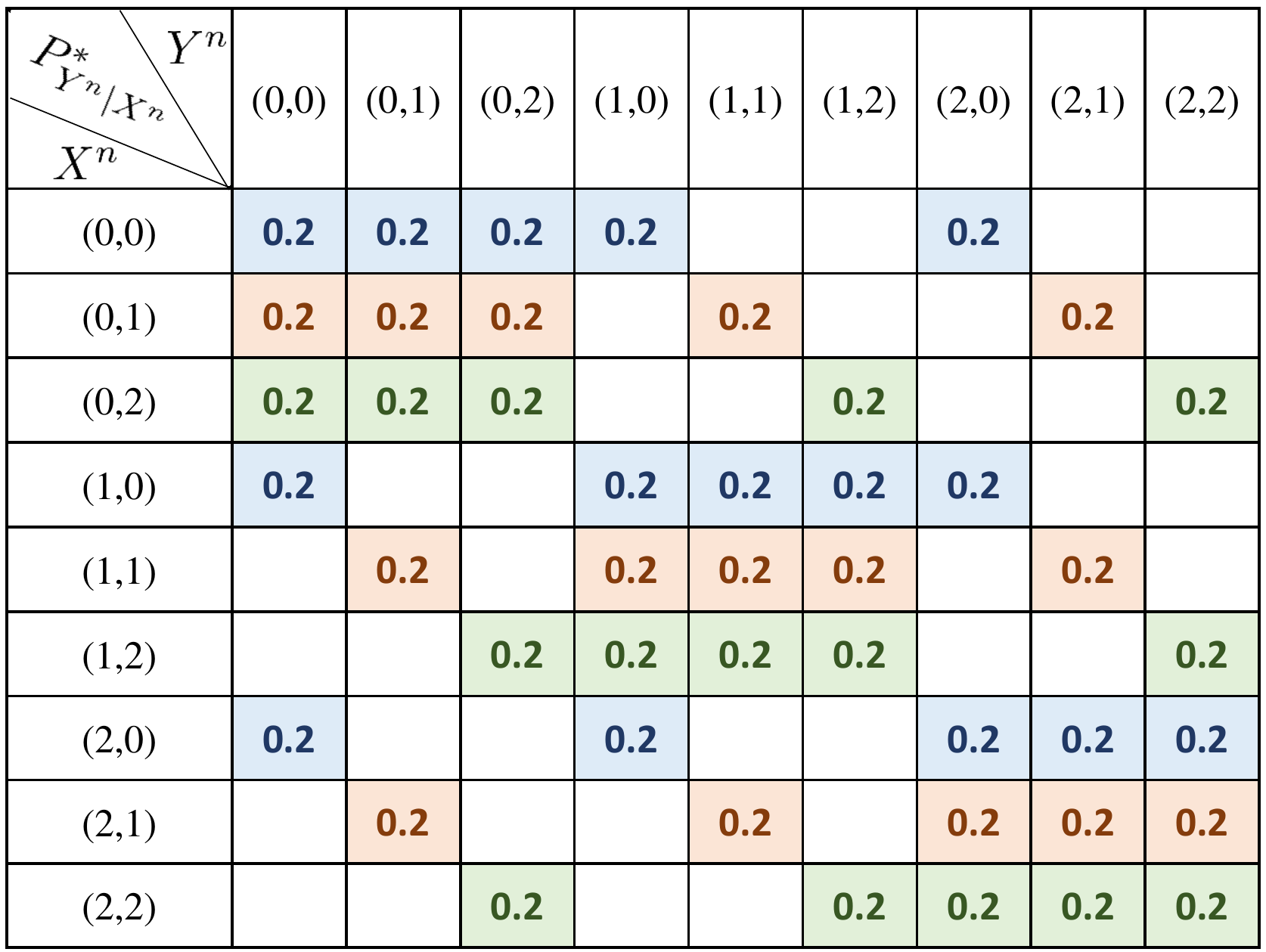}
	\caption{An optimal mechanism of \eqref{eq:PUT-MaxAlphaLeak-UvsHardHamingDist_OptProbExample} for $\alpha>1$ with $(n,m)=(2,1)$ and $\mcl X=\{0,1,2\}$ where rows and columns are $x^n$ and $y^n$, respectively. Note that we color the conditional probabilities of feasible outputs (respect to the hard Hamming distortion) and their values are the same as $0.2$ in the optimal mechanism.}
	\label{fig:PUT-MaxAlphaLeak-UvsHardHamingDist_Mechanism}
\end{figure}
Note that permuting items of a dataset does not change the type but will lead to a non-zero Hamming distortion. 
The distortion on types in \eqref{eq:HDonTypes} can be viewed as a relaxation of the Hamming distortion, in the sense that
 the set of feasible privacy mechanisms in \eqref{eq:PUT-MaxAlphaLeak-UvsHardHamingDist_OptProbExample} belongs to that in \eqref{eq:PUT-MaxAlphaLeak-UvsHardType_OptProbExample}, i.e., 
\begin{align}
	\left\{P_{Y^n|X^n}: d_{\text{H}}(x^n,y^n)\leq \frac{m}{n} \right\}
	 \subset \left\{P_{Y^n|X^n}: d_{\text{T}}(x^n,y^n)\leq \frac{m}{n}\right\}.\nonumber
\end{align}
Therefore, for non-binary alphabets, the result in Thm. \ref{thm:OptPUT-MaxAlphaLeak-UvsHardHamingDist} limits the minimal leakage in \eqref{eq:PUT-MaxAlphaLeak-UvsHardType_OptProbExample}. 

\subsection{Example 3: Average Hamming Distortion on Binary Alphabet}\label{Section:Discussion}

 \begin{figure*}
	\centering
	\subfloat[]{\includegraphics[width=3.5 in]{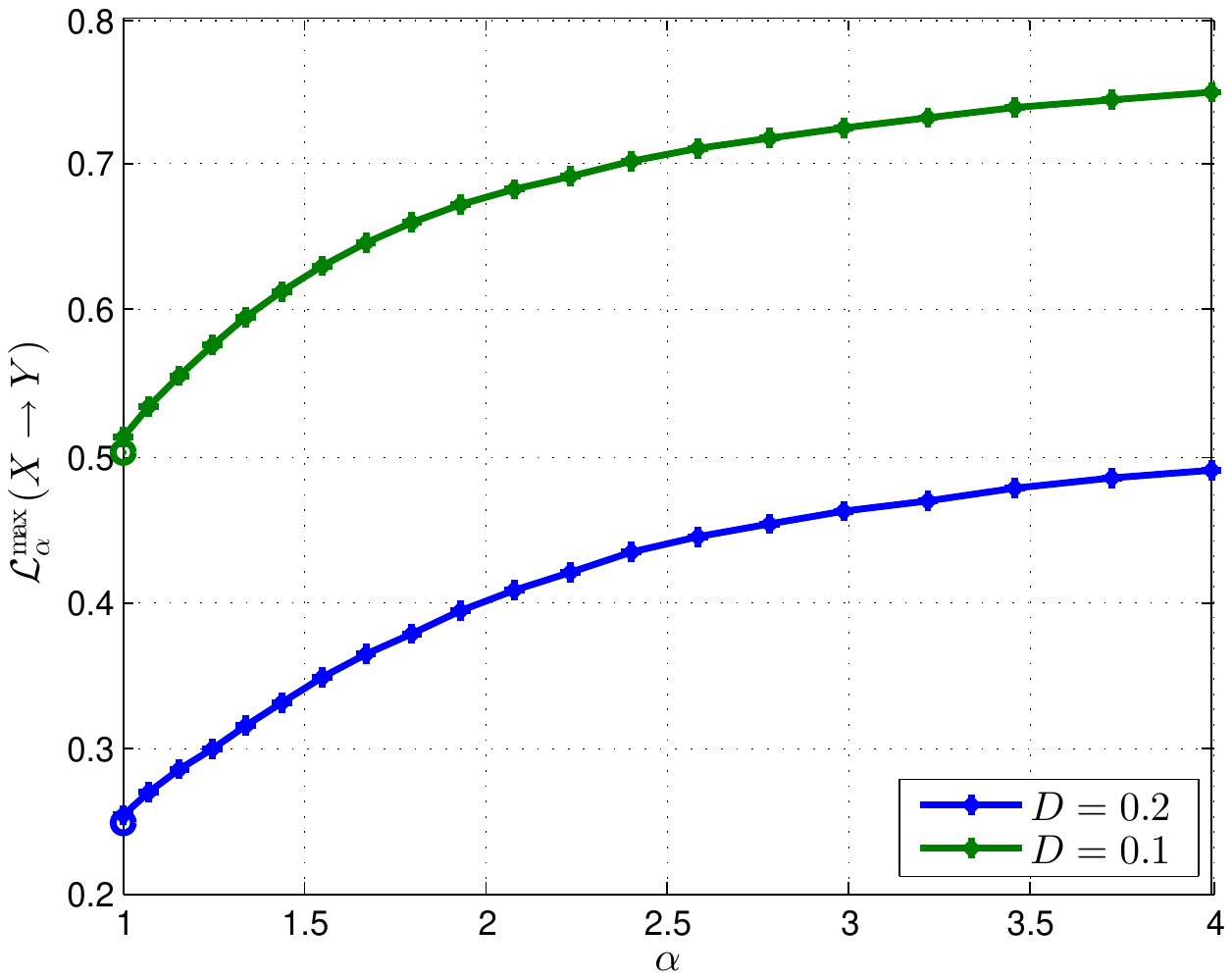}
		\label{fig:PUT-MaxAlphaLeak-BiHam}}
	\subfloat[]{\includegraphics[width=3.5 in]{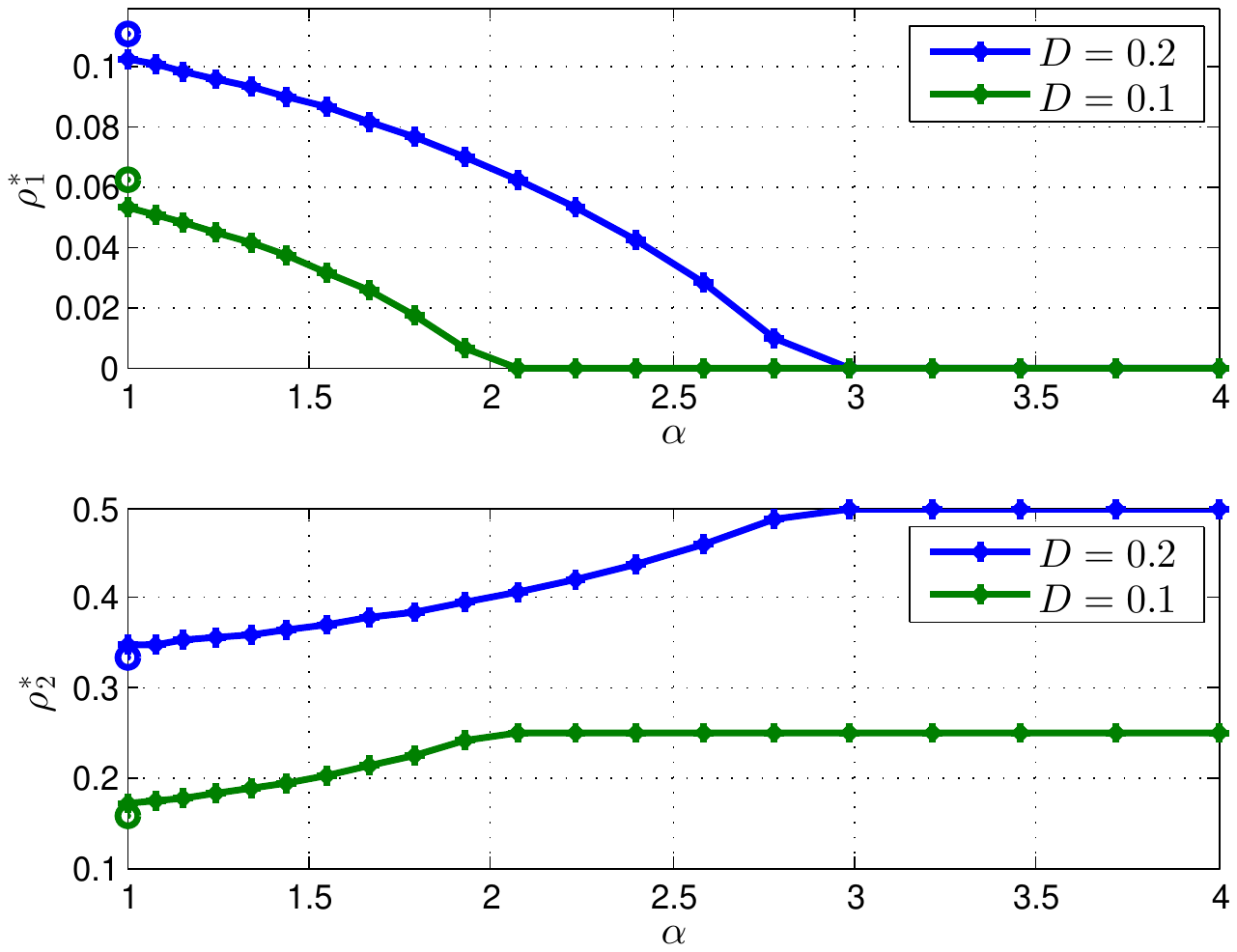}
		\label{fig:PUT-MaxAlphaLeak-BiHam-Mech}}
	\caption{Numerical results for the privacy-utility tradeoff in \eqref{eq:PUT-MaxAlpLK_AvgHamD-Binary} with $p=0.4$ and $D\in\{0.2,0.1\}$. Figure \ref{fig:PUT-MaxAlphaLeak-BiHam} plots the minimal values of maximal $\alpha$-leakage as a function of $\alpha$ (circles indicate $\alpha=1$ and stars are for $1.001\leq \alpha\leq 4$). Figure \ref{fig:PUT-MaxAlphaLeak-BiHam-Mech} illustrates the behavior of the crossover probabilities $\rho_1^*$ and $\rho_2^*$ for the optimal privacy mechanisms as a function of $\alpha$.}	 
	\label{fig:PUT-MaxAlphaLeak-BiHam-All}
\end{figure*}

We consider a PUT setting with maximal $\alpha$-leakage as the privacy measure and average Hamming distortion as the distortion constraint. Such an average utility constraint can
be relevant to data publishing settings where preserving statistics of the dataset is desired.
This example also illustrates that, in contrast to the hard distortion constraint, the optimal mechanism may depend on $\alpha$.
	
Consider the following PUT problem that minimizes maximal $\alpha$-leakage subject to the average Hamming distortion constraint:
	\begin{subequations}\label{eq:PUT-MaxAlpLK_AvgHamD}
		\begin{align}
		\min_{P_{Y|X}} \quad & \mcl L_{\alpha}^{\text{max}}(X\to Y)\\
		\text{s.t.,} \quad & \sum_{x,y\in \mcl X}P_{X,Y}(x,y)\mathbf{1}\left(y\neq x\right)\leq D
		\end{align}
	\end{subequations}
where $0<D<1-\max_x P_{X}(x)$ is the maximum permitted average Hamming distortion. We focus on the binary case: let $X,Y\in \{0,1\}$ where $X$ follow the Bernoulli distribution $\text{Bern}(p)$ ($0<p<1$), i.e., $P_X(1)=p$. We represent the privacy mechanism $P_{Y|X}$ via the two crossover probabilities $P_{Y|X}(1|0)=\rho_1$ and $P_{Y|X}(0|1)=\rho_2$. By solving the supremum in the expression of maximal $\alpha$-leakage, for $\alpha>1$, the optimization in \eqref{eq:PUT-MaxAlpLK_AvgHamD} can be written as
\begin{subequations}\label{eq:PUT-MaxAlpLK_AvgHamD-Binary}
		\begin{IEEEeqnarray}{l l}
		\min_{\rho_1,\rho_2} \,\, & \frac{1}{\alpha-1}\log\hspace{-2pt}\Big((1-\rho_1)^{\alpha}(1-\rho_2)^{\alpha}-\left(\rho_1\rho_2\right)^{\alpha}\Big)\hspace{-2pt}+\log\hspace{-2pt}\Big(\quad\quad\nonumber\\
		&\,\big((1-\rho_1)^{\alpha}-\rho_2^{\alpha}\big)^{\frac{1}{1-\alpha}}+\big( (1-\rho_2)^{\alpha}-\rho_1^{\alpha} \big)^{\frac{1}{1-\alpha}}\Big)\\
		\text{s.t.} \, & (1-p)\rho_1+p\rho_2\leq D.
		\end{IEEEeqnarray}
\end{subequations}

Fig.~\ref{fig:PUT-MaxAlphaLeak-BiHam-All} shows the optimal values and mechanisms in \eqref{eq:PUT-MaxAlpLK_AvgHamD-Binary} for $p=0.4$ and $D=0.2$ or $D=0.1$. From the plots, we can see that for $\alpha=1.001$, the optimal mechanism $P_{Y|X}^*$ (represented by $\rho_1^*$ and $\rho_2^*$) is slightly different from that of mutual information \cite[Figure 10.3]{IT_Cover} due to the fact that as $\alpha$ tends to $1$, the limit of maximal $\alpha$-leakage is Shannon channel capacity instead of mutual information, i.e., $\lim_{\alpha\to 1} \mcl L^{\text{max}}_\alpha(X\to Y)=\lim_{\alpha\to 1} \sup_{P_{\tilde{X}}}I^{\text{A}}_\alpha(\tilde{X};Y)=\sup_{P_{\tilde{X}}}I(\tilde{X};Y)$. We also observe that as $\alpha$ grows, the optimal crossover probabilities $\rho_1^*$ and $\rho_2^*$ gradually approach to $0$ and $\frac{D}{p}$, respectively. Therefore, for the PUT in \eqref{eq:PUT-MaxAlpLK_AvgHamD-Binary}, maximal $\alpha$-leakage with different values of $1<\alpha<\infty$ leads to various optimal privacy mechanisms, which can differ from that for either $\alpha=1$ or $\alpha=\infty$.

It is not difficult to check that the optimal privacy mechanisms (in Fig. \ref{fig:PUT-MaxAlphaLeak-BiHam-Mech}) for different values of $\alpha$ give the same probability of correctly guessing, defined as $\sum_y P_Y(y)\max_x P_{Y|X}(y|x)$ \cite{privacyGuessing_asoodeh2017}, which equals to $1-D$ in this example. Probability of correctly guessing is, in fact, the average accuracy of estimating the value of original data $X$ from $Y$ when the maximal posterior (MAP) estimator is used. Therefore, if the original data $X$ is released against an adversary who is only interested in the most likely value of $X$, all values of $\alpha$ will lead to the same privacy guarantee in the sense that the optimal mechanisms give the same average accuracy of estimation.

\section{Conclusion}
Via $\alpha$-loss ($1\leq \alpha\leq \infty$), we have defined two tunable measures of information leakage: $\alpha$-leakage for a specific function of original data, and maximal $\alpha$-leakage for any arbitrary function of original data, and proven that: (i) $\alpha$-leakage equals to Arimoto mutual information for $1\leq \alpha\leq \infty$; (ii) for $\alpha>1$, maximal $\alpha$-leakage equals to Arimoto channel capacity; and for $\alpha=1$ and $\alpha=\infty$ it simplifies to mutual information and maximal leakage, respectively. From properties of Arimoto mutual information, $\alpha$-leakage is known to be quasi-convex in the conditional distribution and satisfy the post-processing inequality. For maximal $\alpha$-leakage, we have proven that it is quasi-convex in the conditional distribution, and satisfies data processing inequalities as well as a composition property. 
	
In the context of privacy-guaranteed data publishing, we have explored PUT problems for the proposed tunable leakage measures and hard distortion utility constraints. This utility constraint has the advantage that it allows the data curator/provider to make specific, deterministic guarantees on the quality of the released dataset. For maximal $\alpha$-leakage, we have shown that: (i) for all $\alpha>1$, we obtain the same optimal privacy mechanism and optimal PUT, both of which are independent of the distribution of the original data; (ii) for $\alpha =1$, the optimal mechanism differs and depends on the distribution of the original data. In other words, for this hard distortion measure, maximal $\alpha$-leakage behaves as either mutual information or maximal leakage. We have also demonstrated that this extremal behavior may not hold when the hard distortion constraint is replaced by an average distortion constraint (e.g., average Hamming distortion) and the source alphabet is binary. Future directions include studying PUT problems with average distortion constraints for non-binary alphabets to further explore the impact of $\alpha$ on the design of privacy mechanisms.

\allowdisplaybreaks
\appendices
\renewcommand{\thesectiondis}[2]{\Alph{section}:}
\section{Proof of Lemma \ref{lem:Minimal-expectedAlphaLoss}}\label{Proof:lem:Minimal-expectedAlphaLoss}
   \begin{proof}[\nopunct]
   	For $1<\alpha<\infty$, the minimal expected value of the $\alpha$-loss in Definition \ref{Def:alpha-loss} can be expressed as
   	\begin{IEEEeqnarray}{l  l}
   		&\min_{P_{\hat{X}|Y}}\mathbb{E}\left[\ell_{\alpha}(X,Y,P_{\hat{X}|Y})\right]\nonumber\\
   		=&\min_{P_{\hat{X}|Y}}\frac{\alpha}{\alpha-1}\left(1-\sum\limits_{x,y}P_{X,Y}(x,y)P_{\hat{X}|Y}(x|y)^{\frac{\alpha-1}{\alpha}}\right)\label{eq:Max_AlphaReward_1}\\
   		=&\frac{\alpha}{\alpha-1}\left(1-\max_{P_{\hat{X}|Y}}\sum\limits_{x,y}P_{X,Y}(x,y)P_{\hat{X}|Y}(x|y)^{\frac{\alpha-1}{\alpha}}\right)\label{eq:Max_AlphaReward_2}\\
   		=&\frac{\alpha}{\alpha-1}\hspace{-3pt}\left(\hspace{-2pt}1\hspace{-2pt}-\hspace{-2pt}\sum\limits_{y}\hspace{-2pt}P(y)\max_{P_{\hat{X}|Y=y}}\hspace{-2pt}\sum_x \hspace{-2pt}P(x|y)P_{\hat{X}|Y}(x|y)^{\frac{\alpha-1}{\alpha}}\hspace{-2pt}\right)\hspace{-2pt}.\quad\label{eq:Max_AlphaReward_3}
   	\end{IEEEeqnarray}
   	For each $y$ with $P_{Y}(y)>0$, the maximization in \eqref{eq:Max_AlphaReward_3} can be explicitly written as
   	\begin{subequations}\label{eq:Max_AlphaReward}
   		\begin{align}
   		\label{eq:Max_AlphaReward_obj}
   		\max_{\substack{P_{\hat{X}|Y=y}}}\quad &\sum_{x\in\mcl X}P_{X|Y}(x|y)P_{\hat{X}|Y}(x|y)^{\frac{\alpha-1}{\alpha}}\\
   		\label{eq:Max_AlphaReward_const1}
   		\text{s.t.}\quad & \sum_{x\in\mcl X}P_{\hat{X}|Y}(x|y)=1\\
   		\label{eq:Max_AlphaReward_const>0}
   		& P_{\hat{X}|Y}(x|y)\geq 0 \quad \text{  for all }x\in \mcl{X} .
   		\end{align}  	
   	 \end{subequations}
   		For $1\leq \alpha\leq \infty$, the exponent $\frac{\alpha-1}{\alpha}\geq 0$ such that the problem in \eqref{eq:Max_AlphaReward} is a convex program. Therefore, by using Karush-Kuhn-Tucker (KKT) conditions \cite[Chapter 5.5.3]{boydconvex}, we obtain the optimal value of \eqref{eq:Max_AlphaReward} as 
   		\begin{IEEEeqnarray}{l}
   		\hspace{-20pt} \max_{P_{\hat{X}|Y=y}} \sum_{x} P_{X|Y} (x|y)P_{\hat{X}|Y}(x|y)^{\frac{\alpha-1}{\alpha}}=\|P_{X|Y}(\cdot|y)\|_{\alpha}
   		\end{IEEEeqnarray}
   		with the optimal solution $P^{\star}_{\hat{X}|Y}$ as
   		\begin{align}
   		\label{eq:MinExpAlphaLoss-OptStategy-inPf}
   		P^{\star}_{\hat{X}|Y}(x|y)=\frac{P_{X|Y}(x|y)^{\alpha}}{\sum_{x\in\mcl X}P_{X|Y}(x|y)^{\alpha}}\quad \text{for all } x\in\mcl X.
   		\end{align}	
   	For $\alpha=1$, the optimal solution is $P^{\star}_{\hat{X}|Y}=P_{X|Y}$. For $\alpha=\infty$, we have
   	\begin{IEEEeqnarray}{r l}
   	\lim_{\alpha\to \infty}P^{\star}_{\hat{X}|Y}(x|y)
   	=&\lim_{\alpha\to \infty} \frac{\left(\frac{P_{X|Y}(x|y)}{\max_x P_{X|Y}(x|y)}\right)^{\alpha}}{\sum\limits_{x\in\mcl X}\left(\frac{P_{X|Y}(x|y)}{\max_x P_{X|Y}(x|y)}\right)^{\alpha}}\\
   	=&\begin{cases}
   	\frac{1}{k(y)},&x=\arg\max_x P_{X|Y}(x|y)\\
   	0,&\text{otherwise},
   	\end{cases}\quad
   	\end{IEEEeqnarray}
   	where the integer $k(y)$ indicates the cardinality of the set $\{x:x=\arg\max_x P_{X|Y}(x|y)\}$. \\
   	Applying the optimal solution $P^{\star}_{\hat{X}|Y}$ to \eqref{eq:Max_AlphaReward_3}, we have
   	\begin{align}
   	&\min_{P_{\hat{X}|Y}}\mathbb{E}\left[\ell_{\alpha}(X,Y,P_{\hat{X}|Y})\right] \nonumber\\
   	=&\begin{cases}
   	\frac{\alpha}{\alpha-1}\left(1-\sum\limits_{y}\|P_{X,Y}(Xy)\|_{\alpha}\right),& \alpha>1\\
   	\sum\limits_{x,y}P_{X,Y}(x,y)\log\frac{1}{P_{X|Y}(x|y)}, & \alpha=1
   	\end{cases},\\
   	=&\begin{cases}
   	\frac{\alpha}{\alpha-1}\left(1-\exp\left(\frac{1-\alpha}{\alpha}H_{\alpha}^{\text{A}}(X|Y)\right)\right),& \alpha>1\\
   	H(X|Y), & \alpha=1
   	\end{cases}.
   	\end{align}
   \end{proof}
   	
	\section{Proof of Theorem \ref{Thm:DefEquialentExpression_alphaleakage}}\label{Proof:DefEquialentExpression_alphaleakage}
	\begin{proof}[\nopunct]
		The expression \eqref{eq:alphaLeak_definition} can be explicitly written as
		\begin{IEEEeqnarray}{l l}	
			&\mcl L_{\alpha}(X\to  Y)\nonumber\\
			=& \lim_{\alpha'\to \alpha}\frac{\alpha'}{\alpha'-1}\log\hspace{-2pt}\left(\hspace{-2pt}\frac{\max\limits_{P_{\hat{X}|Y}}\sum\limits_{x,y}P_{X,Y}(x,y)P_{\hat{X}|Y}(x|y)^{\frac{\alpha'-1}{\alpha'}}}{\max\limits_{P_{\hat{X}}}\sum\limits_{x}P_X(x)P_{\hat{X}}(x)^{\frac{\alpha'-1}{\alpha'}}}\hspace{-2pt}\right).\label{eq:GealLeak_definition1}
		\end{IEEEeqnarray}
		To simplify the expression in \eqref{eq:GealLeak_definition1}, we need to solve the two maximizations in the logarithm. From \eqref{eq:MaxAlphaLK-AlphaLoss}, we know that to solve the maximization in the numerator equals to find the minimal expected $\alpha$-loss. Making use of the result in Lemma \ref{lem:Minimal-expectedAlphaLoss}, we have that for $\alpha'\in(1,\infty)$,
		\begin{IEEEeqnarray}{l}	
		\hspace{-15pt}	\max_{P_{\hat{X}|Y}}\hspace{-1pt}\sum_{\hspace{-1pt}x,y\hspace{-1pt}}\hspace{-1pt}P_{X,Y}\hspace{-1pt}(\hspace{-0.5pt}x,\hspace{-1pt}y\hspace{-0.5pt})P_{\hat{X}|Y}\hspace{-1pt}(\hspace{-0.5pt}x|y\hspace{-0.5pt})^{\hspace{-2pt}\frac{\alpha'\hspace{-0.5pt}-\hspace{-0.5pt}1}{\alpha'}}
			\hspace{-3pt}=\hspace{-2pt}\exp\hspace{-2pt}\bigg(\hspace{-1.5pt}\frac{1\hspace{-2pt}-\hspace{-2pt}\alpha'}{\alpha'}\hspace{-1pt}H_{\hspace{-0.5pt}\alpha'}^{\hspace{-0.5pt}\text{A}}\hspace{-1pt}(\hspace{-0.5pt}X|Y\hspace{-0.5pt})\hspace{-3.5pt}\bigg)\hspace{-2pt}.
		\end{IEEEeqnarray}
		Similarly, by applying KKT conditions to the maximization in the denominator, we have that for $\alpha'\in(1,\infty)$
		\begin{align}
					\max_{P_{\hat{X}}}\sum_{x\in\mathcal{X}}P_X(x)P_{\hat{X}}(x)^{\frac{\alpha'-1}{\alpha'}}
					=\exp\left(\frac{1-\alpha'}{\alpha'}H_{\alpha'}(X)\right).
		\end{align}
		Therefore, we have for $\alpha'\in(1,\infty)$
	  \begin{align}	
			\label{eq:GealLeak_EquivalentInproof0}
	    &\mcl L_{\alpha}(X\to Y)\nonumber\\
	   =& \frac{\alpha'}{\alpha'-1}\log \exp\Bigg(\frac{1-\alpha'}{\alpha'}\bigg(H_{\alpha'}^{\text{A}}(X|Y)-H_{\alpha'}(X)\bigg)\Bigg)\\ 
	   =& I^{\text{A}}_{\alpha'}(X;Y).
	\end{align}
		From the continuous extensions of Arimoto MI for $\alpha=1$ and $\infty$, respectively, we have that for $1\leq \alpha\leq \infty$, $\alpha$-leakage equals to Arimoto MI.\\
	\end{proof}

	\section{Proof of Theorem \ref{Thm:DefEquialentExpression}}\label{Proof:DefEquialentExpression}
	\begin{proof}[\nopunct]
		From Thm. \ref{Thm:DefEquialentExpression_alphaleakage}, we have for $1\leq \alpha\leq \infty$,
		\begin{align}
			\label{eq:Inproof_Thm:DefEquialentExpression}
			\mcl L_{\alpha}^{\text{max}}(X\to Y)=\sup_{U- X- Y }I_{\alpha}^{\text{A}}(U;Y).
		\end{align}
		If $\alpha=1$, we have 
		\begin{align}
		\label{eq:Inproof_Thm:DefEquialentExpression_alpha1_UB}
			\mcl L_{1}^{\text{max}}(X\to Y)=\sup_{U- X- Y }I(U;Y)\leq I(X;Y)
		\end{align}
		where the inequality is from data processing inequalities of MI \cite[Thm 2.8.1]{IT_Cover}. We then prove that the upper bound $I(X;Y)$ in \eqref{eq:Inproof_Thm:DefEquialentExpression_alpha1_UB} can be achieved. Let $U$ be a function of $X$ satisfying $H(X|U)=0$. From the condition $H(X|U)=0$ and the Markov chain $U-X-Y$, we have
		\begin{IEEEeqnarray}{l l}
			H(X,Y|U)=&H(X|U)+H(Y|X,U)=0+H(Y|X),\quad\nonumber\\
			H(X,Y|U)=&H(Y|U)+H(X|Y,U)=H(Y|U)+0,\nonumber
		\end{IEEEeqnarray}
		i.e., $H(Y|X)=H(Y|U)$. Therefore, for a function $U$ satisfying $H(X|U)=0$, there is $\mathcal{L}_1(U\to Y)=I(U;Y)=I(X;Y)$. 
		
		If $\alpha=\infty$,  we have
		\begin{IEEEeqnarray}{l}
		\hspace{-15pt}	\mcl L_{\infty}^{\text{max}}(X\to Y)=\sup_{U- X- Y }\log\frac{\sum\limits_{y} P_Y(y)\max\limits_{u} P_{U|Y}(u|y)}{\max\limits_{u} P_U(u)},
		\end{IEEEeqnarray}
		which is exactly the expression of MaxL, and therefore, we have that for $\alpha=\infty$, the maximal $\alpha$-leakage equals to the Sibson MI of order $\infty$ \cite[Thm. 1]{OperationalLeak_issa2018}, i.e., 
		\begin{align}
			\mcl L_{\infty}^{\text{max}}(X\to Y)=\log\sum\limits_{y} \max\limits_{x} P_{Y|X}(y|x).
		\end{align}
		For $\alpha\in(1,\infty)$, we provide an upper bound for $\mcl L_{\alpha}^{\text{max}}(X\to Y)$, and then, give an achievable scheme as follows.	\\
		\textbf{Upper Bound}: 
		We have an upper bound of $\mcl L_{\alpha}^{\text{max}}(X\to Y)$ as
		\begin{IEEEeqnarray}{r  l}
				\label{eq:GealLeak_EquivalentInproofConverse0}
				\mcl L_{\alpha}^{\text{max}}(X\to Y)
				=&\sup_{U- X- Y}I_{\alpha}^{\text{A}}(U;Y)\\
				=&\sup_{\substack{P_{Y,\tilde{X}|\tilde{U}}:P_{\tilde{X}}=P_{X}\\
						P_{Y|\tilde{X},\tilde{U}}=P_{Y|X} }} \sup_{P_{\tilde{U}}}I_{\alpha}^{\text{A}}(\tilde{U};Y)
				\label{eq:GealLeak_EquivalentInproofConverse1}\\
				\label{eq:GealLeak_EquivalentInproofConverse2}
				\leq &\sup_{\substack{P_{Y,\tilde{X}|\tilde{U}}:P_{\tilde{X}|\tilde{U}}\left(\cdot|u\right)\ll P_{X}\\
				P_{Y|\tilde{X},\tilde{U}}=P_{Y|X}}} \sup_{P_{\tilde{U}}} I_{\alpha}^{\text{A}}(\tilde{U};Y)\\
				\label{eq:GealLeak_EquivalentInproofConverse3}
				= &\sup_{\substack{P_{Y,\tilde{X}|\tilde{U}}:P_{\tilde{X}|\tilde{U}}\left(\cdot|u\right)\ll P_{X}\\
						P_{Y|\tilde{X},\tilde{U}}=P_{Y|X}}} \sup_{P_{\tilde{U}}} I_{\alpha}^{\text{S}}(\tilde{U};Y)\\
				\label{eq:GealLeak_EquivalentInproofConverse4}
				= &\sup_{P_{\tilde{X}}\ll P_{X}} I_{\alpha}^{\text{S}}(\tilde{X};Y)\\
				\label{eq:GealLeak_EquivalentInproofConverse5}
				= &\sup_{P_{\tilde{X}}\ll P_{X}} I_{\alpha}^{\text{A}}(\tilde{X};Y)
			\end{IEEEeqnarray}
		where $P_{\tilde{X}}\ll P_{X}$ indicate that the support of $P_{\tilde{X}}$ is a subset of the support of $P_{X}$\footnote{Note that any set is also the subset of itself, such that the support of $\tilde{X}$ can be the same as that of $X$}. 
		In \eqref{eq:GealLeak_EquivalentInproofConverse1}, $\tilde{U}-\tilde{X}-Y$ forms a Markov chain and the probability distribution of $\tilde{X}$ is constrained to be $P_X$. The upper bound in \eqref{eq:GealLeak_EquivalentInproofConverse2} results from allowing $\tilde{X}$ to be distributed arbitrarily over the support of $X$.
		The equations in \eqref{eq:GealLeak_EquivalentInproofConverse3} and \eqref{eq:GealLeak_EquivalentInproofConverse5} result from that Arimoto MI and Sibson MI of order $\alpha>0$ have the same supremum \cite[Thm. 5]{alphaMI_verdu}, which can be proved from the expressions of Arimoto and Sibson MIs as follows:
		\begin{IEEEeqnarray}{r  l}
			&\sup_{P_{\tilde{U}}} I^{\text{A}}_{\alpha}(\tilde{U};Y)\nonumber\\
			= &\sup_{P_{\tilde{U}}} \frac{\alpha}{\alpha-1}\log \frac{\sum\limits_y\Big(\sum\limits_u P_{\tilde{U},Y}(u,y)^{\alpha}\Big)^{\frac{1}{\alpha}}}{\Big(\sum\limits_u P_{\tilde{U}}(u)^\alpha\Big)^{\frac{1}{\alpha}}} \\
			=&\sup_{P_{\tilde{U}}} \frac{\alpha}{\alpha-1}\log \sum\limits_y\Bigg(\sum\limits_u\frac{P_{\tilde{U}}(u)^{\alpha}}{\sum\limits_u P_{\tilde{U}}(u)^\alpha} P_{Y|\tilde{U}}(y|u)^{\alpha}\Bigg)^{\frac{1}{\alpha}} \\
			=&\sup_{P_{\tilde{U}'}} \frac{\alpha}{\alpha-1}\log \sum\limits_y\Bigg(\sum\limits_u P_{\tilde{U}'}(u) P_{Y|\tilde{U}}(y|u)^{\alpha}\Bigg)^{\frac{1}{\alpha}} \\
			=& \sup_{P_{\tilde{U}'}} I^{\text{A}}_{\alpha}(\tilde{U}';Y) 
		\end{IEEEeqnarray}
		where $P_{\tilde{U}}$ and $P_{\tilde{U}'}$ are probability distributions over the same support and for each $u$, $P_{\tilde{U}'}(u)=\frac{P_{\tilde{U}}(u)^{\alpha}}{\sum\limits_u P_{\tilde{U}}(u)^\alpha}$.
		From the data processing inequalities of Sibson MI for the Markov chain $\tilde{U}-\tilde{X}-Y$, we have that $I_{\alpha}^{\text{S}}(\tilde{U};Y)\leq I_{\alpha}^{\text{S}}(\tilde{X};Y)$ with equality if and only if $\tilde{U}=\tilde{X}$ \cite[Thm. 3]{alphaMI_verdu}. Therefore, in \eqref{eq:GealLeak_EquivalentInproofConverse3} $\sup_{P_{\tilde{U}}} I_{\alpha}^{\text{S}}(\tilde{U};Y)= I_{\alpha}^{\text{S}}(\tilde{X};Y)$, and then, by replacing $\tilde{U}$ with $\tilde{X}$ we have \eqref{eq:GealLeak_EquivalentInproofConverse4}.  \\
		\textbf{Lower bound}: 
		We bound \eqref{eq:Inproof_Thm:DefEquialentExpression} from below by considering a random variable $U$ such that $U-X-Y$ is a Markov chain and $H(X|U)=0$. Specifically, let the alphabet $\mcl U$ consist of $\mcl U_x$, a collection of $U$ mapped to a $x\in \mcl X$, i.e.,
		$\mcl U=\bigcup_{x\in\mcl X} \mcl U_x $ with $U=u\in \mcl U_x$ if and only if $X=x$.
		Therefore, for the specific variable $U$, we have
		\begin{align}
			\label{eq:GealLeak_EquivalentInproofAchieval0}
			P_{Y|U}(y|u)&=\begin{cases}
				P_{Y|X}(y|x) \quad &\text{ for all } u\in  \mcl U_x\\
				0                  &\text{ otherwise}.
			\end{cases}
		\end{align}
		Construct a probability distribution $P_{\tilde{X}}$ over $\mathcal{X}$ from $P_U$ as
		\begin{align}
			\label{eq:GealLeak_EquivalentInproofAchievalConstructPX}
			P_{\tilde{X}}(x)=\frac{\sum_{u\in\mathcal{U}_x}P_U(u)^{\alpha}}{\sum_{x\in\mcl X}\sum_{u\in\mathcal{U}_x}P_U(u)^{\alpha}}  \quad \text{ for all }  x\in \mcl X.
		\end{align} Thus,
		\begin{IEEEeqnarray}{l  l}
			&I_{\alpha}^{\text{A}}(U;Y)\nonumber\\
			=&\frac{\alpha}{\alpha-1}\log\frac{\sum\limits_{y\in\mcl Y}\left(\sum\limits_{x\in\mcl X}\sum\limits_{u\in\mathcal{U}_x}P_{Y|U}(y|u)^{\alpha}P_{U}(u)^{\alpha}\right)^{\frac{1}{\alpha}}}{\left(\sum\limits_{x\in\mcl X}\sum\limits_{u\in\mathcal{U}_x}P_U(u)^{\alpha}\right)^{\frac{1}{\alpha}}}\\
			=& \frac{\alpha}{\alpha-1}\log\sum\limits_{y\in\mcl Y}\left(\sum\limits_{x\in\mcl X}P_{Y|X}(y|x)^{\alpha}\frac{\sum\limits_{u\in\mathcal{U}_x}P_{U}(u)^{\alpha}}{ \sum\limits_{x\in\mcl X}\sum\limits_{u\in\mathcal{U}_x}P_U(u)^{\alpha} }\right)^{\frac{1}{\alpha}}\quad	\\
			=& \frac{\alpha}{\alpha-1}\log\sum_{y\in\mcl Y}\left(\sum_{x\in\mcl X}P_{Y|X}(y|x)^{\alpha}P_{\tilde{X}}(x)\right)^{\frac{1}{\alpha}}\\
			=&I_{\alpha}^{\text{S}}(\tilde{X};Y).
		\end{IEEEeqnarray}
		Therefore, 
			\begin{align}
				\mcl L_{\alpha}^{\text{max}}(X\to Y) =&\sup_{U-X-Y} I_{\alpha}^{\text{A}}(U;Y)\\
				\geq & \sup_{U:U-X-Y,H(X|U)=0} I_{\alpha}^{\text{A}}(U;Y)\\
				=&\sup_{P_{\tilde{X}}\ll P_X}I_{\alpha}^{\text{S}}(\tilde{X};Y), \label{eq:GealLeak_EquivalentInproofAchievable}
			\end{align}
			where the last equality follows because, for any $P_{\tilde{X}}\ll P_X$, there exists a distribution $P_U(u)$ for $u\in\mathcal{U}$ such that  \eqref{eq:GealLeak_EquivalentInproofAchievalConstructPX} holds;
			therefore, the supremum over these $U$ in \eqref{eq:GealLeak_EquivalentInproofAchieval0} is equivalent to the supremum of $P_{\tilde{X}}$.		
		Therefore, combining \eqref{eq:GealLeak_EquivalentInproofConverse4} and \eqref{eq:GealLeak_EquivalentInproofAchievable}, we obtain \eqref{eq:GealLeak_EquivDef_1infty}. 
	\end{proof}
	
	\section{Proof for Lemma \ref{Lem:alphaLeakage_fDivergenceLeakage}}\label{Proof:alphaLeakage_fDivergenceLeakage}
	\begin{proof}[\nopunct]
		Define the convex function
		\begin{equation}
		f_\alpha(t)=\frac{1}{\alpha-1} (t^\alpha-1),
		\end{equation}
		then for the two distributions $P$ and $Q$ over the support $\mcl Y$, we have a $f$-divergence $\mcl H_{\alpha}(P\|Q)$, which is the Hellinger divergence of order $\alpha$ \cite{Liese2006}, given by
		\begin{equation}
		\mcl H_{\alpha}(P\|Q)=\frac{1}{\alpha-1} \left(\sum_{\mcl Y} P(y)^\alpha Q(y)^{1-\alpha}-1\right).
		\end{equation}
		Therefore, the R\'enyi divergence can be written in terms of the Hellinger divergence as
		\begin{equation}
		D_\alpha(P\|Q)=\frac{1}{\alpha-1}\log (1+(\alpha-1) \mcl H_{\alpha}(P\|Q)).
		\end{equation}	
		Thus, since $z\mapsto \frac{1}{\alpha-1}\log(1+(\alpha-1) z)$ is monotonically increasing in $z$ for $\alpha>1$, we can write maximal $\alpha$-leakage as
		\begin{IEEEeqnarray}{l l}
			&\mathcal{L}^{\text{max}}_\alpha(X\to Y)\nonumber\\
			=&\sup_{P_X}\,\inf_{Q_Y}\,D_\alpha(P_{X,Y}\|P_X\times Q_Y)\\
			=&\frac{1}{\alpha-1} \log \big(1+(\alpha-1) \sup_{P_X}\,\inf_{Q_Y}\,\mcl H_{\alpha}(X\to Y)\big)\\
			=&\frac{1}{\alpha-1} \log \big(1+(\alpha-1) \mathcal{L}_{\mcl H_{\alpha}}(X\to Y)\big).
		\end{IEEEeqnarray}	
		That is, for $\alpha>1$ maximal $\alpha$-leakage is a monotonic function of the Hellinger divergence-based measure.
	\end{proof}

	\section{Proof of Theorem \ref{Thm:Geneleak_qusiconvex_nondecreasing_dataprocessing}}\label{Proof:Geneleak_qusiconvex_nondecreasing_dataprocessing}
	\begin{proof}[\nopunct]
		\textbf{The proof of part 1}: We know that for $\alpha\geq 1$, $I^{\text{S}}_{\alpha}(X;Y)$ is quasi-convex $P_{Y|X}$ for given $P_X$ \cite[Thm. 2.7.4]{IT_Cover}, \cite[Thm. 10]{ConvexityAlphaMI_Ho}. In addition, the supremum of a set of quasi-convex functions is also quasi-convex, i.e., if the function $f(a,b)$ is quasi-convex in $b$ for any given $a$, the supremum $\sup_a f(a,b)$ is also quasi-convex in $b$ \cite{boydconvex}. Therefore, maximal $\alpha$-leakage in \eqref{eq:GealLeak_EquivDef} is quasi-convex $P_{Y|X}$.\\
		\textbf{The proof of part 2}: 
		Let $\beta>\alpha\geq1$, and $P_{X\alpha}^{\star}=\arg \sup_{P_X} I^{\text{S}}_{\alpha}(P_X,P_{Y|X})$ for given $P_{Y|X}$, such that
		\begin{IEEEeqnarray}{l l}
				\mcl L_{\alpha}^{\text{max}}(X\to Y)&= I^{\text{S}}_{\alpha}(P_{X\alpha}^{\star},P_{Y|X})\\
				\label{eq:GeneLeak_Property1inProof4}
				& \leq I^{\text{S}}_{\beta}(P_{X\alpha}^{\star},P_{Y|X})\\
				\label{eq:GeneLeak_Property1inProof5}
				& \leq \sup_{P_X} I^{\text{S}}_{\beta}(P_X,P_{Y|X})\\
				&=\mcl L^{\text{max}}_{\beta}(X\to Y)
		\end{IEEEeqnarray}
		where \eqref{eq:GeneLeak_Property1inProof4} results from that $I^{\text{S}}_{\alpha}$ is non-decreasing in $\alpha$ for $\alpha>0$ \cite[Thm. 4]{ConvexityAlphaMI_Ho}, and the equality in \eqref{eq:GeneLeak_Property1inProof5} holds if and only if $P_{X\alpha}^{\star}=\arg \sup_{P_X} I_{\beta}(P_X,P_{Y|X})$.\\	
		\textbf{The proof of part 3}: 
		Let random variables $X$, $Y$ and $Z$ form the Markov chain $X-Y-Z$. Making use of that Sibson MI of order $\alpha>1$ satisfies data processing inequalities \cite[Thm. 3]{alphaMI_verdu}, i.e., 
		\begin{IEEEeqnarray}{l l}
				I^{\text{S}}_{\alpha}(X; Z)\leq I^{\text{S}}_{\alpha}(X; Y) \label{eq:DPInq_inproof01}\\
				I^{\text{S}}_{\alpha}(X; Z)\leq I^{\text{S}}_{\alpha}(Y; Z) \label{eq:DPInq_inproof02},
		\end{IEEEeqnarray}
		we prove that maximal $\alpha$-leakage satisfies data processing inequalities as follows.\\
		We first prove \eqref{eq:GeneLeak_DataProcessIneq_XY}. Let $P^{\star}_X=\arg\sup_{P_X} I^{\text{S}}_{\alpha}(P_X,P_{Z|X})$. For the Markov chain $X-Y-Z$, we have
		\begin{IEEEeqnarray}{l l}
				\mcl L_{\alpha}^{\text{max}}(X\to Z)&=I^{\text{S}}_{\alpha}(P^{\star}_X,P_{Z|X}) \label{eq:DPInq_inproof1}\\
				&\leq I^{\text{S}}_{\alpha}(P^{\star}_X,P_{Y|X}) \label{eq:DPInq_inproof2}\\
				&\leq \sup_{P_X} I^{\text{S}}_{\alpha}(P_X,P_{Y|X}) \label{eq:DPInq_inproof3}\\
				&=\mcl L_{\alpha}^{\text{max}}(X\to Y) \label{eq:DPInq_inproof4}
		\end{IEEEeqnarray}
		where the inequality in \eqref{eq:DPInq_inproof2} results from \eqref{eq:DPInq_inproof01}. Similarly, the inequality in \eqref{eq:GeneLeak_DataProcessIneq_YZ} can be proved directly from \eqref{eq:DPInq_inproof02}.\\	
		\textbf{The proof of part 4}: 
		For $\alpha=1$, we have
		\begin{align}
		\mcl L^{\text{max}}_{1}(X\to Y)=I(X;Y)\geq 0,
		\end{align}
		with equality if and only if $X$ is independent of $Y$ \cite{IT_Cover}. 
		For $1<\alpha\leq \infty$, 
		referring to \eqref{eq:Sibson_MI} and \eqref{eq:GealLeak_EquivDef_1infty} we have 
		\begin{IEEEeqnarray}{l l}
				&\mcl L_{\alpha}^{\text{max}}(X\to Y)\nonumber\\
				=&\sup_{P_X}	\frac{\alpha}{\alpha-1}\log \sum_{y}\left(\sum_{x}P_X(x)P_{Y|X}(y|x)^{\alpha}\right)^{\frac{1}{\alpha}}\\
				\label{eq:alphaLeak_SpecialMechanism_Inproof1}
				\geq & \sup_{P_X}	\frac{\alpha}{\alpha-1}\log \sum_{y}\bigg(\sum_{x}P_X(x)P_{Y|X}(y|x)\bigg)^{\frac{\alpha}{\alpha}}\\
				= & \sup_{P_X}	\frac{\alpha}{\alpha-1}\log 1=0,
		\end{IEEEeqnarray} 	
		where \eqref{eq:alphaLeak_SpecialMechanism_Inproof1} results from applying Jensen’s inequality to the convex function $f: t\to t^{\alpha}$ ($t\geq 0$), such that the equality holds if and only if given any $y\in\mcl Y$, $P_{Y|X}(y|x)$ are the same for all $x\in \mcl X$, such that
		\begin{align}
			P_{Y|X}(y|x)=P_Y(y)\quad x\in \mcl X, y\in\mcl Y
		\end{align} which means $X$ and $Y$ are independent, i.e., $P_{Y|X}$ is a rank-1 row stochastic matrix. \\			
		For $\alpha=1$, from \eqref{eq:GealLeak_EquivDef_1} we know $\mcl L^{\text{max}}_{1}(X\to Y)=I(X;Y)$. Therefore, 
		\begin{IEEEeqnarray}{l l}
			& \mcl L^{\text{max}}_{1}(X\to Y)-H(X)\nonumber\\
			=& \sum_{x\in\mcl X,y\in\mcl Y}P(x,y)\log \frac{P(y|x)}{P(y)} - \sum_{x\in\mcl X }P(x)\log \frac{1}{P(x)}\quad \\
			=&\sum_{x,y}\hspace{-5pt}P(x,y)\log \frac{P(y|x)}{P(y)} - \hspace{-5pt} \sum_{x,y }\hspace{-5pt}P(x,y)\log \frac{1}{P(x)}\\
			=&\sum_{x,y}P(x,y)\log P(x|y) \leq 0,
		\end{IEEEeqnarray}
		with equality if and only if for all $x,y\in \mcl X\times \mcl Y$, the conditional probability $P_{X|Y}(x|y)$ is either $1$ or $0$. That is, $\mcl L^{\text{max}}_{1}(X\to Y)\leq H(X)$ with equality if and only if $X$ is a deterministic function of $Y$. 
		For $1<\alpha\leq \infty$, from the monotonicity of maximal $\alpha$-leakage in $\alpha$ and \eqref{eq:GealLeak_EquivDef_1infty}, we have
		\begin{align}
		\mcl L^{\text{max}}_\alpha(X\to Y)&\leq \mcl L^{\text{max}}_{\infty}(X\to Y)\\
		&=\log \sum_{y\in\mcl Y}\max_x P_{Y|X}(y|x)\\
		\label{eq:MaxAlphaLeak-Property-Range_InPf}
		&\leq\log \sum_{y\in\mcl Y}\sum_{x\in\mcl X} P_{Y|X}(y|x)=\log|\mcl X|.
		\end{align}
		where the equality in \eqref{eq:MaxAlphaLeak-Property-Range_InPf} holds if and only if for every $y\in \mcl Y$, $\sum_{\mcl X} P(y|x)=\max_x P(y|x)$, i.e., $X$ is a deterministic function of $Y$. To prove that for $\alpha\in(1,\infty)$, the upper bound in \eqref{eq:MaxAlphaLeak-Property-Range_InPf} is achievable, we construct a mapping $P_{X\Leftarrow Y}$ such that $X$ is a deterministic function of $Y$. That is, for every $y\in \mcl Y$, there exists a unique $x_y\in \mcl X$ such that $P(x_y|y)=1$. Therefore, we have $x_y=\arg_x P_{X\Leftarrow Y}(y|x)>0$.
		For $\alpha\in(1,\infty)$, from \eqref{eq:Sibson_MI} and \eqref{eq:GealLeak_EquivDef_1} we have
		\begin{IEEEeqnarray}{l l}
				 &\mcl L_{\alpha}^{\text{max}}(P_{X\Leftarrow Y}) \nonumber\\
				=&\sup_{P_X} \frac{\alpha}{\alpha-1}\log \sum_{y\in\mathcal{Y}}\left(P_X^{\frac{1}{\alpha}}(x_y)P_{X\Leftarrow Y}(y|x_y)\right)\\
				\label{eq:GeneLeak_LemmaSpecialMech_proof1}
				= &\sup_{P_X} \frac{\alpha}{\alpha-1}\log \sum_{x\in\mathcal{X}}P_X^{\frac{1}{\alpha}}(x);
		\end{IEEEeqnarray} 
		in addition, since the function maximized in \eqref{eq:GeneLeak_LemmaSpecialMech_proof1} is symmetric and concave in $P_X$, it is Schur-concave in $P_X$, and therefore, the optimal distribution of $X$ achieving the supreme in \eqref{eq:GeneLeak_LemmaSpecialMech_proof1} is uniform. Thus, 
		\begin{align}
			\mcl L_{\alpha}^{\text{max}}(P_{X\Leftarrow Y})=\log |\mathcal{X}|,\quad 1<\alpha\leq \infty .
		\end{align}
		Therefore, maximal $\alpha$-leakage achieves its maximal value $\log|\mathcal{X}|$ and $H(P_X)$ for $\alpha>1$ and $\alpha=1$, respectively, if and only if $X$ is a deterministic function of $Y$.
	\end{proof}

\section{Proof for Theorem \ref{Thm:MaxAlphaLeak-Bounds}}\label{Proof:Thm:MaxAlphaLeak-Bounds}
To prove Thm. \ref{Thm:MaxAlphaLeak-Bounds}, we define a divergence function $k_{\alpha}$ for $\alpha>1$ and provide a lower bound for its sum in the following definition and lemma, respectively.
\begin{definition}\label{Def:AlphaDiv_insidelog}
	Given two discrete distributions $P_Y$ and $Q_Y$ over the support $\mcl Y$, a divergence function $k_{\alpha}$ for $\alpha>1$ is defined as
	\begin{align}
	\label{eq:AlphaDiv_insidelog}
	k_{\alpha}(P_Y\|Q_Y)
	\triangleq\sum_{y}Q_Y(y)\left(\frac{P_Y(y)}{Q_Y(y)}\right)^{\alpha}.
	\end{align}
\end{definition}

\begin{proposition}\label{Pro:AlphaDiv_insidelog}
	The function $k_{\alpha}(P_Y\|Q_Y)$ in \eqref{eq:AlphaDiv_insidelog} is jointly convex in $(P_Y,Q_Y)$, and $k_{\alpha}(P_Y\|Q_Y)\geq 1$ with equality if and only if $P_Y=Q_Y$.
\end{proposition}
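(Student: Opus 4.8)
The plan is to treat the two assertions separately, exploiting the fact that each summand of $k_{\alpha}$ is a \emph{perspective} of the scalar convex map $\phi(t)=t^{\alpha}$. First I would rewrite the generic summand as
\begin{align}
Q_Y(y)\left(\frac{P_Y(y)}{Q_Y(y)}\right)^{\alpha}=P_Y(y)^{\alpha}\,Q_Y(y)^{1-\alpha},
\end{align}
and recognize $g(p,q)\triangleq q\,(p/q)^{\alpha}=p^{\alpha}q^{1-\alpha}$ as the perspective of $\phi(t)=t^{\alpha}$. Since $\phi$ is convex on $\mathbb{R}_+$ for $\alpha>1$, its perspective $g$ is jointly convex in $(p,q)$ on the region $\{q>0\}$ \cite{boydconvex}. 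Because a sum of jointly convex functions is jointly convex, $k_{\alpha}(P_Y\|Q_Y)=\sum_y g(P_Y(y),Q_Y(y))$ is jointly convex in $(P_Y,Q_Y)$. Coordinates with $Q_Y(y)=0$ (where $P_Y(y)=0$ is forced for $k_{\alpha}$ to be finite) contribute nothing and are absorbed into the usual continuity convention $0\cdot(0/0)^{\alpha}=0$.

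For the lower bound I would view $\{Q_Y(y)\}$ as the weights of a probability distribution and apply Jensen's inequality to the convex function $\phi(t)=t^{\alpha}$:
\begin{align}
k_{\alpha}(P_Y\|Q_Y)=\sum_y Q_Y(y)\,\phi\!\left(\frac{P_Y(y)}{Q_Y(y)}\right)\geq \phi\!\left(\sum_y Q_Y(y)\frac{P_Y(y)}{Q_Y(y)}\right)=\phi(1)=1,
\end{align}
where the final equality uses $\sum_y P_Y(y)=1$. Since $\phi$ is \emph{strictly} convex for $\alpha>1$, equality in Jensen's inequality forces the ratio $P_Y(y)/Q_Y(y)$ to be constant over the support of $Q_Y$; combined with $\sum_y P_Y(y)=\sum_y Q_Y(y)=1$ this constant must equal $1$, giving $P_Y=Q_Y$. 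Conversely $P_Y=Q_Y$ trivially yields $k_{\alpha}=1$.

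The argument is short, and the only delicate points are bookkeeping at the boundary $Q_Y(y)=0$ and invoking strict convexity to characterize the equality case; neither is a genuine obstacle. If one prefers to avoid the perspective machinery, an alternative for the convexity claim is to note that $k_{\alpha}$ is the $f$-divergence $D_f(P_Y\|Q_Y)$ of \eqref{eq:f-divergence} associated with $f(t)=t^{\alpha}$, so joint convexity follows from the standard joint convexity of $f$-divergences in their convex generator; this also recovers the bound $k_{\alpha}=1+(\alpha-1)\mcl H_{\alpha}(P_Y\|Q_Y)\ge 1$ through the Hellinger-divergence identity used in the proof of Lemma \ref{Lem:alphaLeakage_fDivergenceLeakage}. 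I would nonetheless present the perspective-plus-Jensen route since it is self-contained and makes the equality condition transparent.
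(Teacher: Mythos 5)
Your proposal is correct and follows essentially the same route as the paper: joint convexity via the perspective of $\phi(t)=t^{\alpha}$ (summed over $y$), and the lower bound via Jensen's inequality with weights $Q_Y(y)$. In fact your write-up is slightly more careful than the paper's, since you invoke strict convexity of $t^{\alpha}$ for $\alpha>1$ to establish the ``equality iff $P_Y=Q_Y$'' claim and you handle the boundary $Q_Y(y)=0$ correctly (forcing $P_Y(y)=0$ there), whereas the paper's proof asserts the Jensen bound without characterizing equality and sets the perspective to zero at $Q_Y(y)=0$ regardless of $P_Y(y)$, a convention under which joint convexity would actually fail.
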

\begin{proof}
	For $\alpha\geq 1$, the function $f(t)=t^\alpha$ is convex in $t\geq 0$, such that the perspective of $f(t)$, defined as $g(t,a)=af(t/a)$ ($a>0$), is convex in $(t,a)$ \cite[Chapter. 3.2.6]{boydconvex}. Let $t=P_Y(y)$ and $a=Q_Y(y)>0$ such that the perspective function can be written as
	\begin{align}
		g(P_Y(y),Q_Y(y))=Q_Y(y)\left(\frac{P_Y(y)}{Q_Y(y)}\right)^{\alpha},
	\end{align}
	which is therefore convex in $(P_Y(y),Q_Y(y))$. For $Q_Y(y)=0$, the function $g(P_Y(y),Q_Y(y))$ is zero, which is also convex in $(P_Y(y),Q_Y(y))$.
	 Thus, the function $k_{\alpha}(P_Y\|Q_Y)$ in \eqref{eq:AlphaDiv_insidelog} is a sum of convex functions, and therefore, it is convex in $(P_Y(y),Q_Y(y))$.\\
	 Let $t=\frac{P_Y(y)}{Q_Y(y)}$. From the convexity of $f(t)=t^\alpha$ in $t\geq 0$ and Jensen’s inequality \cite[Chapter. 3.1.8]{boydconvex}, we have that 
	 \begin{align}
	 	k_{\alpha}(P_Y\|Q_Y)
	 	\triangleq&\sum_{y}Q_Y(y)\left(\frac{P_Y(y)}{Q_Y(y)}\right)^{\alpha}\\
	 	\geq& \sum_{y}\left(\sum_{y}Q_Y(y)\frac{P_Y(y)}{Q_Y(y)}\right)^{\alpha}=1.
	 \end{align}
\end{proof}

\begin{lemma}\label{lemma:Ineqaulity-k_alpha}
	Let $K$ be a positive integer with $K<\infty$. Given a group of distributions $\{P_k:k\in[1,K]\}$ and an arbitrary distribution $P$ on a discrete set $\mcl Y$, there is
		\begin{align}\label{eq:Ineqaulity-k_alpha}
		\sum_{k=1}^{K} k_{\alpha}(P_k\|P)\geq &\sum_{k=1}^{K} k_{\alpha}(P_k\|P_c)\\
		=&\left(\sum_y \left(\sum\limits_{k=1}^{K}P_k(y)^{\alpha}\right)^{\frac{1}{\alpha}}\right)^{\alpha},
		\end{align}
	with equality if and only if $P=P_c$, where $P_c$ is given by
	\begin{align}\label{eq:constructed-Pc}
	P_c(y)=\frac{1}{Z}\left(\sum\limits_{k=1}^{K}P_k(y)^{\alpha}\right)^{\frac{1}{\alpha}},\, \alpha\in [1,\infty]
	\end{align} 
	where $Z$ is the constant as
	\begin{align}\label{eq:ConstZ_constructed-Pc}
	Z=\sum_y \left(\sum\limits_{k=1}^{K}P_k(y)^{\alpha}\right)^{\frac{1}{\alpha}},
	\end{align}
	which guarantees that $P_c$ is a distribution.
\end{lemma}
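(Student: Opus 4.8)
The plan is to read the statement as a minimization problem: I must show that $P_c$ is the unique minimizer of $F(P):=\sum_{k=1}^{K}k_{\alpha}(P_k\|P)$ over all distributions $P$ on $\mcl Y$, and that the minimum value equals $Z^{\alpha}$. First I would rewrite the objective in a convenient form. Since $k_{\alpha}(P_k\|P)=\sum_{y}P_k(y)^{\alpha}P(y)^{1-\alpha}$ by Definition~\ref{Def:AlphaDiv_insidelog}, summing over $k$ and writing $g(y):=\big(\sum_{k=1}^{K}P_k(y)^{\alpha}\big)^{1/\alpha}$ gives
\begin{align}
F(P)=\sum_{y}\frac{g(y)^{\alpha}}{P(y)^{\alpha-1}},\qquad Z=\sum_{y}g(y). \nonumber
\end{align}
The right-hand side of \eqref{eq:Ineqaulity-k_alpha} is then exactly $Z^{\alpha}$, so the claim is $F(P)\ge Z^{\alpha}$ with equality iff $P=P_c$.

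The core of the argument is a single application of H\"older's inequality. Writing $g(y)=\big(g(y)/P(y)^{(\alpha-1)/\alpha}\big)\cdot P(y)^{(\alpha-1)/\alpha}$ and applying H\"older with the conjugate exponents $\alpha$ and $\alpha/(\alpha-1)$ (whose reciprocals sum to one for $\alpha>1$), I obtain
\begin{align}
Z=\sum_{y}g(y)\le\Big(\sum_{y}\frac{g(y)^{\alpha}}{P(y)^{\alpha-1}}\Big)^{1/\alpha}\Big(\sum_{y}P(y)\Big)^{\frac{\alpha-1}{\alpha}}=F(P)^{1/\alpha}, \nonumber
\end{align}
using $\sum_{y}P(y)=1$. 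Raising both sides to the $\alpha$-th power gives $F(P)\ge Z^{\alpha}$, the desired inequality. To confirm the stated value I substitute $P_c(y)=g(y)/Z$ directly:
\begin{align}
F(P_c)=\sum_{y}g(y)^{\alpha}\Big(\frac{g(y)}{Z}\Big)^{-(\alpha-1)}=Z^{\alpha-1}\sum_{y}g(y)=Z^{\alpha}, \nonumber
\end{align}
so the bound is attained precisely at $P_c$.

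For the equality condition I would invoke the equality case of H\"older: the bound is tight iff $g(y)^{\alpha}/P(y)^{\alpha-1}$ is proportional to $P(y)$ as $y$ ranges over $\mcl Y$, i.e.\ $P(y)\propto g(y)$; after normalization this forces $P=P_c$. A few boundary issues remain: if $P(y)=0$ at some $y$ with $g(y)>0$ then $F(P)=\infty$ and the inequality is vacuous, so I may restrict attention to $P$ whose support contains that of $g$; and the extremal orders $\alpha\in\{1,\infty\}$ follow by the continuous extension used throughout, consistent with \eqref{eq:constructed-Pc}. As an independent check (and to certify that the H\"older optimum is global) Proposition~\ref{Pro:AlphaDiv_insidelog} shows each summand $k_{\alpha}(\cdot\|P)$, hence $F$, is convex in $P$, so a Lagrange/KKT stationarity computation recovers $P(y)\propto g(y)$ as well.

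I do not expect a genuine obstacle here: the one nontrivial decision is choosing the H\"older split correctly so that the two factors collapse to $\sum_{y}g(y)$ and $\sum_{y}P(y)=1$ respectively. The remaining care is purely bookkeeping---verifying the value identity $F(P_c)=Z^{\alpha}$ and translating H\"older's equality condition cleanly into $P=P_c$.
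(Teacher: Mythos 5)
Your proof is correct, but it takes a genuinely different route from the paper's. The paper never invokes H\"older; instead it computes the difference $\sum_{k} k_{\alpha}(P_k\|P)-\sum_{k} k_{\alpha}(P_k\|P_c)$ directly, uses the algebraic identity $\sum_{k}P_k(y)^{\alpha}=Z^{\alpha}P_c(y)^{\alpha}$ to collapse that difference to $Z^{\alpha}\big(k_{\alpha}(P_c\|P)-1\big)$, and then applies its Proposition~\ref{Pro:AlphaDiv_insidelog} (itself a Jensen argument for $t\mapsto t^{\alpha}$) to conclude nonnegativity and the equality condition $P=P_c$; the value at $P_c$ is then checked by substitution, just as you do. Your argument instead lower-bounds $F(P)=\sum_y g(y)^{\alpha}/P(y)^{\alpha-1}$ by $Z^{\alpha}$ in one stroke via H\"older with conjugate exponents $\alpha$ and $\alpha/(\alpha-1)$, which has the advantage of being self-contained (Proposition~\ref{Pro:AlphaDiv_insidelog} becomes optional) and of \emph{deriving} the minimizer from the equality condition $P\propto g$ rather than verifying a candidate handed to you; the paper's route, by contrast, reuses machinery it has already established and requires knowing $P_c$ in advance. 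The two are cousins at bottom---H\"older for these exponents and the Jensen step in Proposition~\ref{Pro:AlphaDiv_insidelog} are essentially the same convexity fact---so neither is more general, just organized differently. Two small wording points: when $P(y)=0$ at some $y$ with $g(y)>0$ the inequality holds trivially (with $F(P)=\infty$) rather than being ``vacuous''; and note the lemma is really only needed and only sharp for $\alpha\in(1,\infty)$ (at $\alpha=1$ both sides equal $K$ identically, so the ``equality iff $P=P_c$'' clause fails), a caveat your continuous-extension remark glosses over but which the paper also leaves implicit.
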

\begin{proof}
	From the definition $k_\alpha$ in \eqref{eq:AlphaDiv_insidelog}, we have
	\begin{IEEEeqnarray}{l l}
		&\sum_{k=1}^{K} k_{\alpha}(P_k\|P)-\sum_{k=1}^{K} k_{\alpha}(P_k\|P_c)\nonumber\\
		=& \sum_{k=1}^{K} \sum_{y}P_k(y)^\alpha\left(P(y)^{1-\alpha}-P_c(y)^{1-\alpha}\right)\\
		=& \sum_{y}  \left(\sum_{k=1}^{K}P_k(y)^\alpha\right) \left(P(y)^{1-\alpha}-P_c(y)^{1-\alpha}\right)\\
		=&\sum_{y} Z^{\alpha}P_c(y)^{\alpha}\left(P(y)^{1-\alpha}-P_c(y)^{1-\alpha}\right)\\
		=&Z^{\alpha} \sum_{y}\left(P_c(y)^{\alpha}P(y)^{1-\alpha}-P_c(y)\right)\\
		=& Z^{\alpha}(k_{\alpha}(P_c\|P)-1)\geq 0
	\end{IEEEeqnarray}
	with equality if and only if $P=P_c$. In addition, making use of the expression of $P_c$ and $Z$ in \eqref{eq:constructed-Pc} and \eqref{eq:ConstZ_constructed-Pc}, respectively, we have
	\begin{IEEEeqnarray}{l l}
		&\sum_{k=1}^{K} k_{\alpha}(P_k\|P_c)\nonumber\\
		=&\sum_{k=1}^{K}\sum_{y}P_c(y)\left(\frac{P_k(y)}{P_c(y)}\right)^{\alpha}\\
		=&\sum_{k=1}^{K}\sum_{y}Z^{\alpha-1}\left(\sum\limits_{k'=1}^{K}P_{k'}(y)^{\alpha}\right)^{\frac{1}{\alpha}}\frac{P_k(y)^{\alpha}}{\sum\limits_{k'=1}^{K}P_{k'}(y)^{\alpha}} \label{eq:revise1_e1}\\
		=& Z^{\alpha-1}\sum_{y}\left(\sum\limits_{k'=1}^{K}P_{k'}(y)^{\alpha}\right)^{\frac{1}{\alpha}}\frac{\sum_{k=1}^{K} P_k(y)^{\alpha}}{\sum_{k'=1}^{K}P_{k'}(y)^{\alpha}}\\
		=& \left(\sum_y \left(\sum\limits_{k=1}^{K}P_k(y)^{\alpha}\right)^{\frac{1}{\alpha}}\right)^{\alpha}.
	\end{IEEEeqnarray}
\end{proof}
Making use of the results in Lemma \ref{lemma:Ineqaulity-k_alpha}, we prove Thm. \ref{Thm:MaxAlphaLeak-Bounds} as follows.
\begin{proof}
	From Thm. \ref{Thm:DefEquialentExpression}, we have that for $\alpha>1$
	\begin{IEEEeqnarray}{l l}
	&\mcl L_{\alpha}^{\text{max}}(X\to Y)\nonumber\\
	=& \sup_{P_{\tilde{X}}}I^{\text{S}}_{\alpha}(\tilde{X},Y)
	=\sup_{P_{\tilde{X}}}\inf_{Q_Y}D_{\alpha}(P_{\tilde{X}}P_{Y|X}\|P_{\tilde{X}}Q_Y)\\
	\label{eq:MaxAlphaLeak-Bounds-inPf}
	=&\sup_{P_{\tilde{X}}}\inf_{Q_Y}\frac{1}{\alpha-1}\log \sum_x P_{\tilde{X}}(x)k_{\alpha}(P_{Y|X=x}\|Q_Y).
	\end{IEEEeqnarray}
	For $\alpha>1$, the function $f:t\to \frac{1}{\alpha-1}\log t$ is increasing in $t\geq 0$. Therefore, we simplify the optimization in \eqref{eq:MaxAlphaLeak-Bounds-inPf} as
		\begin{align}\label{eq:MaxAlphaLeak-Bounds-inPf_1}
		\sup_{P_{\tilde{X}}}\, \inf_{Q_Y}\, \sum_x P_{\tilde{X}}(x)k_{\alpha}(P_{Y|X=x}\|Q_Y)
		\end{align}
	and provide a lower bound of \eqref{eq:MaxAlphaLeak-Bounds-inPf_1} as follows.
	Since the divergence function $k_{\alpha}$ is joint convex in the pair of distributions, the objective function in \eqref{eq:MaxAlphaLeak-Bounds-inPf_1} is joint convex in $(P_{Y|X},Q_Y)$ for fixed $P_{\tilde{X}}$, and linear in $P_{\tilde{X}}$ for fixed $(P_{Y|X},Q_Y)$. Therefore, the max-min equals to the min-max as followed:
	\begin{IEEEeqnarray}{l l}
		&\sup_{P_{\tilde{X}}} \inf_{Q_Y} \sum_x P_{\tilde{X}}(x)k_{\alpha}(P_{Y|X=x}\|Q_Y)\nonumber\\
		=&\inf_{Q_Y} \sup_{P_{\tilde{X}}} \sum_x P_{\tilde{X}}(x)k_\alpha(P_{Y|X=x}\|Q_Y)\label{eq:PUT-k_alpha-InPrf2}\\
		=& \inf_{Q_Y} \max_{x} \, k_\alpha(P_{Y|X=x}\|Q_Y)\label{eq:PUT-k_alpha-InPrf3}\\
		\geq&  \inf_{Q_Y} \frac{\sum_x k_\alpha(P_{Y|X=x}\|Q_Y)}{|\mcl X|}\label{eq:PUT-k_alpha-InPrf4}\\
		\geq&   \frac{\sum_x k_\alpha(P_{Y|X=x}\|P_c)}{|\mcl X|}\label{eq:PUT-k_alpha-InPrf5}\\
		=& \frac{1}{|\mcl X|}\left(\sum_{y}\|P_{Y|X}(y|\cdot)\|_{\alpha}\right)^{\alpha},\label{eq:PUT-k_alpha-InPrf6}
	\end{IEEEeqnarray}
	where the inequality in \eqref{eq:PUT-k_alpha-InPrf5} is directly from \eqref{eq:Ineqaulity-k_alpha} in Lemma~ \ref{lemma:Ineqaulity-k_alpha} with equality if and only if 
	\begin{align}\label{eq:k_alpha-P_c-InPrf}
	Q_Y(y)=P_c(y)=\frac{1}{Z}\|P_{Y|X}(y|\cdot)\|_{\alpha},
	\end{align}
	with the constant $Z=\sum_y \|P_{Y|X}(y|\cdot)\|_{\alpha}$.
	Therefore, for any $P_{Y|X}$, we have 
	\begin{align}
     \mcl L_{\alpha}^{\text{max}}(X\to Y) \geq \frac{\alpha}{\alpha-1}\log\frac{\sum_{y}\|P_{Y|X}(y|\cdot)\|_{\alpha}}{|\mcl X|^{\frac{1}{\alpha}}},
	\end{align}
	with equality if and only if the $P_{Y|X}$ guarantees that the divergence function $k_\alpha(P_{Y|X=x}\|P_c)$ are the same for all $x\in \mcl X$, i.e., the $P_{Y|X}$ satisfies \eqref{eq:MaxAlphaLeak-LowBD-ComplementCnst}. 
\end{proof}

	\section{Proof of Theorem \ref{Thm:GeneLeak_CompositionTheory}}\label{proof:Thm:GeneLeak_CompositionTheory}
	\begin{proof}[\nopunct]
		Let $\mcl Y_1$ and $\mcl Y_2$ be the alphabets of $Y_1$ and $Y_2$, respectively. For any $(y_1,y_2)\in \mcl Y_1\times \mcl Y_2$, due to the Markov chain $Y_1-X-Y_2$, the corresponding entry of the conditional probability matrix of $(Y_1,Y_2)$ given $X$ is
		\begin{IEEEeqnarray}{l}
			P(y_1y_2|x)=P(y_1|x)P(y_2|xy_1)=P(y_1|x)P(y_2|x).\nonumber
		\end{IEEEeqnarray}
		Therefore, for $\alpha\in(1,\infty)$
		\begin{IEEEeqnarray}{l l}
				&\mcl L_{\alpha}^{\text{max}}(X\to Y_1,Y_2)\nonumber\\
				=&\sup_{P_X} \frac{\alpha}{\alpha-1}\log \hspace{-2pt}\sum_{y_1,y_2}\hspace{-3pt}\left(\hspace{-1pt}\sum_{x}P_X(x)P_{Y_1,Y_2|X}(y_1,y_2|x)^{\alpha}\hspace{-2pt}\right)^{\frac{1}{\alpha}}\\
				=&\sup_{P_X} \frac{\alpha}{\alpha\hspace{-2pt}-\hspace{-2pt}1}\hspace{-2pt}\log\hspace{-3pt}\sum_{y_1,y_2}\hspace{-4pt}\bigg(\hspace{-3pt}\sum_{x}\hspace{-2pt}P(x)
				\label{eq:alphaLeakage_CompostionTheoremProof_0}
				P(y_1|x)^{\alpha}\hspace{-2pt}P(y_2|x)^{\alpha}\hspace{-3pt}\bigg)^{\hspace{-2pt}\frac{1}{\alpha}}.
	  \end{IEEEeqnarray}	
		Let $K(y_1)=\sum_{x\in\mathcal{X}}P_X(x)P_{Y_1|X}(y_1|x)^{\alpha}$, for all $y_1\in\mcl Y_1$, such that we can construct a set of distributions over $\mcl X$ as 
		\begin{align}
			P_{\tilde{X}}(x|y_1)=\frac{P_X(x)P_{Y_1|X}(y_1|x)^{\alpha}}{K(y_1)}.
		\end{align} Therefore, from \eqref{eq:alphaLeakage_CompostionTheoremProof_0}, $\mcl L_{\alpha}^{\text{max}}(X\to Y_1,Y_2)$ can be rewritten as
		\begin{IEEEeqnarray}{l l}
			&\mcl L_{\alpha}^{\text{max}}(X\to Y_1,Y_2)\nonumber\\
			&=\sup_{P_X} \frac{\alpha}{\alpha-1}\log \sum_{y_1,y_2\in \mcl Y_1\times \mcl Y_2}\nonumber\\
			&\qquad\bigg(\sum_{x\in\mathcal{X}}\hspace{-3pt}K(y_1)P_{\tilde{X}}(x|y_1)P_{Y_2|X}(y_2|x)^{\alpha}\bigg)^{\frac{1}{\alpha}}\\
			&=\sup_{P_X} \frac{\alpha}{\alpha-1}\log \sum_{y_1,y_2}\Bigg(\bigg(\sum_{x}P_X(x)P_{Y_1|X}(y_1|x)^{\alpha}\bigg)^{\frac{1}{\alpha}}\nonumber\\
			&\quad \cdot\bigg(\sum_{x} P_{\tilde{X}}(x|y_1)P_{Y_2|X}(y_2|x)^{\alpha}\bigg)^{\frac{1}{\alpha}}\Bigg)\\
			&= \sup_{P_X} \frac{\alpha}{\alpha-1}\log \sum_{y_1}\Bigg(\bigg(\sum_{x}P_X(x)P_{Y_1|X}(y_1|x)^{\alpha}\bigg)^{\frac{1}{\alpha}}\nonumber\\
			&\quad \cdot\sum_{y_2}\bigg(\sum_{x} P_{\tilde{X}}(x|y_1)P_{Y_2|X}(y_2|x)^{\alpha}\bigg)^{\frac{1}{\alpha}}\Bigg)\\
			 &\leq \sup\limits_{P_X} \frac{\alpha}{\alpha-1}\log \Bigg(\sum\limits_{y_1}\left(\sum\limits_{x}P_X(x)P_{Y_1|X}(y_1|x)^{\alpha}\right)^{\frac{1}{\alpha}}\nonumber\\
			\label{eq:alphaLeakage_CompostionTheoremProof_1}
			&\quad \cdot\max\limits_{y_1}\sum\limits_{y_2}\left(\sum\limits_{x} P_{\tilde{X}}(x|y_1)P_{Y_2|X}(y_2|x)^{\alpha}\right)^{\frac{1}{\alpha}}\Bigg)\\			
			&=\sup\limits_{P_X} \frac{\alpha}{\alpha-1}\log \Bigg(\sum\limits_{y_1}\left(\sum\limits_{x}P_X(x)P_{Y_1|X}(y_1|x)^{\alpha}\right)^{\frac{1}{\alpha}}\\
			\label{eq:alphaLeakage_CompostionTheoremProof_2}
			&\quad \cdot\sum\limits_{y_2}\left(\sum\limits_{x} P_{\tilde{X}}(x|y_1^{\star})P_{Y_2|X}(y_2|x)^{\alpha}\right)^{\frac{1}{\alpha}}\Bigg)\\			
			&\leq \sup\limits_{P_X} \frac{\alpha}{\alpha-1}\log \sum\limits_{y_1}\left(\sum\limits_{x}P_X(x)P_{Y_1|X}(y_1|x)^{\alpha}\right)^{\frac{1}{\alpha}}\nonumber\\
			\label{eq:alphaLeakage_CompostionTheoremProof_3}
			&\quad +\sup\limits_{P_{\tilde{X}}}\frac{\alpha}{\alpha-1}\log\sum\limits_{y_2}\left(\sum\limits_{x} P_{\tilde{X}}(x)P_{Y_2|X}(y_2|x)^{\alpha}\right)^{\frac{1}{\alpha}}\\
			&=\mcl L_{\alpha}^{\text{max}}(X\to Y_1)+\mcl L_{\alpha}^{\text{max}}(X\to Y_2),
		\end{IEEEeqnarray}	
		where $y_1^{\star}$ in \eqref{eq:alphaLeakage_CompostionTheoremProof_2} is the optimal $y_1$ achieving the maximum in \eqref{eq:alphaLeakage_CompostionTheoremProof_1}. Therefore, the equality in \eqref{eq:alphaLeakage_CompostionTheoremProof_1} holds if and only if for all $y_1\in \mcl Y_1$
		\begin{IEEEeqnarray}{r l}
			&\sum_{y_2}\left(\sum_{x} P_{\tilde{X}}(x|y_1)P_{Y_2|X}(y_2|x)^{\alpha}\right)^{\frac{1}{\alpha}}\nonumber\\
			=&\sum_{y_2}\left(\sum_{x} P_{\tilde{X}}(x|y_1^{\star})P_{Y_2|X}(y_2|x)^{\alpha}\right)^{\frac{1}{\alpha}};
		\end{IEEEeqnarray}
		and the equality in \eqref{eq:alphaLeakage_CompostionTheoremProof_3} holds if and only if the optimal solutions $P_X^{\star}$ and $P_{\tilde{X}}^{\star}$ of the two maximizations in \eqref{eq:alphaLeakage_CompostionTheoremProof_3} satisfy, for all $x\in\mcl X$,
		\begin{align}
			P_{\tilde{X}}^{\star}(x)=\frac{P_X^{\star}(x)P_{Y_1|X}^{\alpha}(y_1^{\star}|x)}{\sum_{x\in\mathcal{X}}P_X(x)P_{Y_1|X}^{\alpha}(y_1^{\star}|x)}.
		\end{align}
		Now we consider $\alpha=1$. For $Y_1-X-Y_2$, we have
		\begin{align}
			I(Y_2;X|Y_1)\leq I(Y_2;X).
		\end{align}
		From Thm. \ref{Thm:DefEquialentExpression}, there is
		\begin{IEEEeqnarray}{l l}
				&\mcl L^{\text{max}}_{1}(X\to Y_1,Y_2)\nonumber\\
				= & I(X;Y_1)+I(X;Y_2|Y_1)\\
				\leq & I(X;Y_1)+I(X;Y_2)\\
				= & \mcl L^{\text{max}}_{1}(X\to Y_1)+\mcl L^{\text{max}}_{1}(X\to Y_2).
		\end{IEEEeqnarray}
		For $\alpha=\infty$, we also have
		\begin{IEEEeqnarray}{l l}
				&\mcl L^{\text{max}}_{\infty}(X\to Y_1,Y_2)\nonumber\\
				= &\log \sum_{y_1,y_2\in \mcl Y_1\times \mcl Y_2}\max_{x\in\mcl X} P(y_1|x)P(y_2|x)\\
				\leq & \log \sum_{y_1,y_2\in \mcl Y_1\times \mcl Y_2}\left(\max_{x\in\mcl X} P(y_1|x)\right)\left(\max_{x\in\mcl X} P(y_2|x)\right)\\
				=& \log  \sum_{y_1\in \mcl Y_1}\max_{x\in\mcl X} P(y_1|x)+\log \sum_{y_2\in \mcl Y_2}\max_{x\in\mcl X} P(y_2|x)\\
				=&	\mcl L^{\text{max}}_{\infty}(X\to Y_1)+	\mcl L^{\text{max}}_{\infty}(X\to Y_2).
		\end{IEEEeqnarray}
		
	\end{proof}

\section{Proof of Theorem \ref{thm:MaxAlphaLK-Addtivity}}\label{proof:thm:MaxAlphaLK-Addtivity}
\begin{proof}[\nopunct]
	For $\alpha>1$, a function $f(t)=\frac{\alpha}{\alpha-1}\log t$ is monotonically increasing in $t>0$. Therefore, 
	to solve maximal $\alpha$-leakage from $X^n$ to $Y^n$, i.e., 
	\begin{align}
		&\mcl L_{\alpha}^{\text{max}}(X^n\to Y^n)\nonumber\\
		=&\sup_{P_{\tilde{X}^n}} \frac{\alpha}{\alpha-1}\log \sum_{y^n}\left(\sum_{x^n}P(x^n)P(y^n|x^n)^\alpha\right)^{\frac{1}{\alpha}},
	\end{align}
	it is sufficient to prove that 
	\begin{align}\label{eq:MaxAlphaLK-Add-inpf0}
		& \sup_{P_{\tilde{X}^n}} \sum_{y^n}\left(\sum_{x^n}P(x^n)P(y^n|x^n)^\alpha\right)^{\frac{1}{\alpha}}\nonumber\\
		= & \sup\limits_{\substack{P_{\tilde{X}_i}\\i\in[1,n]}}
		\prod\limits_{i=1}^n\left(\sum\limits_{y_{i}}\left(\sum\limits_{x_{i}}P\left(x_{i}\right)P\left(y_{i}|x_{i}\right)^{\alpha}\right)^{\frac{1}{\alpha}}\right).
	\end{align} 
	For a memoryless $P_{Y^n|X^n}$ with no feedback, we simplify \eqref{eq:MaxAlphaLK-Add-inpf0} as
	\begin{IEEEeqnarray}{l l}
			& \sup_{P_{\tilde{X}^n}} \,
			\sum_{y^n}\left(\sum_{x^n}\frac{P(x^n,y^n)}{P(y^n|x^n)^{1-\alpha}}\right)^{\frac{1}{\alpha}}\nonumber\\
			=&\sup\limits_{\prod_{i=1}^{n}P_{\tilde{X}_i|\tilde{X}_{i-1},\cdots,\tilde{X}_{1}}}\sum\limits_{y_{1},\cdots,y_{n} }\Bigg(\sum\limits_{x_{1},\cdots,x_{n}}\prod\limits_{i=1}^n\nonumber\\
			&\quad \frac{P(y_{i},x_{i}|x_{i-1},y_{i-1},\cdots,x_{1}y_{1})}{P\left(y_{i}|x^ny_{i-1},\cdots,y_{1}\right)^{1-\alpha}}\Bigg)^{\frac{1}{\alpha}}\label{eq:MaxAlphaLK-Add-inpf3}\\
			=&\sup\limits_{\prod_{i=1}^{n}P_{\tilde{X}_i|\tilde{X}_{i-1},\cdots,\tilde{X}_{1}}}\sum\limits_{y_{1},\cdots,y_{n} }\Bigg(\hspace{-3pt}\sum\limits_{x_{1},\cdots,x_{n}}\prod\limits_{i=1}^n\nonumber\\
			&\bigg(\frac{P\left(y_{i}|x_{i},\cdots,x_{1}\right)P\left(x_{i}|x_{i-1},\cdots,x_{1}\right)}{P\left(y_{i}|x^n\right)^{1-\alpha}}\bigg)\Bigg)^{\frac{1}{\alpha}}\label{eq:MaxAlphaLK-Add-inpf4}\\
			=&\sup\limits_{\prod_{i=1}^{n}P_{\tilde{X}_i|\tilde{X}_{i-1},\cdots,\tilde{X}_{1}}}\sum\limits_{y_{1},\cdots,y_{n} }\nonumber\\
			&\left(\sum\limits_{x_{1},\cdots,x_{n}}\prod\limits_{i=1}^nP\left(y_{i}|x_{i}\right)^{\alpha}P\left(x_{i}|x_{i-1},\cdots,x_{1}\right)\right)^{\frac{1}{\alpha}}\label{eq:MaxAlphaLK-Add-inpf5}\\
			= &\sup\limits_{\prod_{i=1}^{n}P_{\tilde{X}_i}}\sum\limits_{y_{1},\cdots,y_{n} }\left(\sum\limits_{x_{1},\cdots,x_{n}}\prod\limits_{i=1}^nP\left(y_{i}|x_{i}\right)^{\alpha}P\left(x_{i}\right)\right)^{\frac{1}{\alpha}}\label{eq:MaxAlphaLK-Add-inpf6}\\
			=&\sup\limits_{P_{\tilde{X}_{1}}, \cdots, P_{\tilde{X}_{n}}}\sum\limits_{y_{1},\cdots,y_{n} }\hspace{-5pt}\Bigg(\prod\limits_{i=1}^n\sum\limits_{x_{i}}P\left(x_{i}\right)P\left(y_{i}|x_{i}\right)^{\alpha}\Bigg)^{\frac{1}{\alpha}} \label{eq:MaxAlphaLK-Add-inpf7}\\
			=&\sup\limits_{\substack{P_{\tilde{X}_i}\\i\in[1,n]}}
			\prod\limits_{i=1}^n\left(\sum\limits_{y_{i}}\left(\sum\limits_{x_{i}}P\left(x_{i}\right)P\left(y_{i}|x_{i}\right)^{\alpha}\right)^{\frac{1}{\alpha}}\right) \label{eq:MaxAlphaLK-Add-inpf9}\\
			=&\sup\limits_{\substack{P_{\tilde{X}_i},i\in[1,n]}}
			\prod\limits_{i=1}^n \exp\Big\{\frac{\alpha-1}{\alpha}I^{\text{S}}_{\alpha}(\tilde{X}_i;Y_{i})\Big\}\label{eq:MaxAlphaLK-Add-inpf9-1}
		\end{IEEEeqnarray}
	where 
	\begin{itemize}
		\item \eqref{eq:MaxAlphaLK-Add-inpf3} {is} from the chain rule of probability;
		\item \eqref{eq:MaxAlphaLK-Add-inpf4} and \eqref{eq:MaxAlphaLK-Add-inpf5} are directly from the mechanism has no feedback and is memoryless, respectively;
		\item the equality in \eqref{eq:MaxAlphaLK-Add-inpf6} holds for memoryless sources, i.e., $P_{\tilde{X}_i|\tilde{X}_{i-1},\cdots,\tilde{X}_{1}}=P_{\tilde{X}_i}$ for all $i\in[1,n]$;
		\item both \eqref{eq:MaxAlphaLK-Add-inpf7} and \eqref{eq:MaxAlphaLK-Add-inpf9} are from the distributive property of multiplication;
		\item \eqref{eq:MaxAlphaLK-Add-inpf9-1} is from the definition of Sibson MI in \eqref{eq:Sibson_MI} and that the base of the logarithm is $2$.
	\end{itemize}
	Therefore, we have for $\alpha>1$,
	\begin{IEEEeqnarray}{l l}
		\sup\limits_{P_{\tilde{X}^n}} I^{\text{S}}_{\alpha}(\tilde{X}^n;Y^n)
		=\sum\limits_{i=1}^n \sup\limits_{\substack{P_{\tilde{X}_i}}}I^{\text{S}}_{\alpha}(\tilde{X}_i;Y_{i}).
	\end{IEEEeqnarray}
	That is,
	\begin{align}
		\mcl L_{\alpha}^{\text{max}}\left(X^n\to Y^n\right)=\sum\limits_{i=1}^n\mcl L_{\alpha}^{\text{max}}\left(X_{i}\to Y_{i}\right).
	\end{align}
	For $\alpha=1$, we have
   \begin{IEEEeqnarray}{l l}
			&I\left(X^n;Y^n\right)\nonumber\\
			=&\sum\limits_{i,j=1}^{n}I\mathsmaller{\left(X_{i};Y_{j}\big|X_{i-1},\cdots,X_{1},Y_{j-1},\cdots,Y_{1}\right)}\quad \label{eq:MaxAlphaLK-Add-inpf10}\\
			=&\sum_{i,j=1}^{n}I\left(X_{i};Y_{j}\big|X_{i-1},\cdots,X_{1}\right)\label{eq:MaxAlphaLK-Add-inpf11}\\
			=&\sum_{i=1}^{n}I\left(X_{i};Y_{i}\big|X_{i-1},\cdots,X_{1}\right)\label{eq:MaxAlphaLK-Add-inpf12}\\
			\leq & \sum_{i=1}^{n}I\left(X_{i};Y_{i}\right)\label{eq:MaxAlphaLK-Add-inpf13}
    \end{IEEEeqnarray}
	where 
	\begin{itemize}
		\item \eqref{eq:MaxAlphaLK-Add-inpf10} is from the chain rule of MI;
		\item \eqref{eq:MaxAlphaLK-Add-inpf11} and \eqref{eq:MaxAlphaLK-Add-inpf12} are from the facts that the mechanism has no feedback and is memoryless, respectively;
		\item from \cite[(2.122)]{IT_Cover}, we know that for a Markov chain $X-Y-Z$, conditioning reduces mutual information, i.e., $I(X;Y|Z)\leq I(X;Y)$ with equality if and only if $I(X;Z)=0$. Therefore, since for any $i\in[1,n]$ $\left(X_{i-1},\cdots,X_{1}\right)-X_{i}-Y_{i}$, the equality in \eqref{eq:MaxAlphaLK-Add-inpf6} holds if and only if the source is memoryless, i.e., $P_{\tilde{X}_i|\tilde{X}_{i-1},\cdots,\tilde{X}_{1}}=P_{\tilde{X}_i}$ for all $i\in[1,n]$.
	\end{itemize}
	
\end{proof}

\section{Proof of Theorem \ref{thm:PUT_fLeakKforDistvsHardDist}}\label{Proof:thm:PUT_fLeakKforDistvsHardDist}
\begin{proof}[\nopunct]
	Given $P_X$, the collection of stochastic matrices is denoted as $\mcl P_{Y|X}$. The feasible ball $B_D(x)$ around $x$ is defined in \eqref{eq:PUT_HardDist_CollectofFeasibleY}. For the distribution dependent PUT in \eqref{eq:PUT_fLeakKforDistvsHardDist}, we have
	\begin{IEEEeqnarray}{l l}
		&\text{PUT}_{\text{HD},\mcl L_f} (D)\nonumber\\
		\label{eq:PUT_fLeakKforDistvsHardDist1}
		=&\inf_{\substack{P_{Y|X}\in \mcl P_{Y|X}\\:d(X,Y)\le D}}\,\, \inf_{Q_Y}D_f(P_{Y|X}P_X\|P_X\times Q_Y)\\
		\label{eq:PUT_fLeakKforDistvsHardDist2}
		=&\inf_{Q_Y} \,\,\inf_{\substack{P_{Y|X}\in \mcl P_{Y|X}\\:d(X,Y)\le D}}\, \sum_{x\in\mcl X} P_{X}(x)D_f(P_{Y|X=x}\|Q_Y)\\
		\label{eq:PUT_fLeakKforDistvsHardDist3}
		=&\inf_{Q_Y} \sum_{x\in\mcl X} P_X(x)  \inf_{\substack{P_{Y|X=x}\\Y\in B_D(x)}}\sum_{y\in \mcl Y} Q_{Y}(y)f \left(\frac{ P_{Y|X}(y|x)}{Q_Y(y)}\right)\\
		=&\inf_{Q_Y} \sum_{x\in\mcl X} P_X(x) \inf_{\substack{P_{Y|X=x}\\Y\in B_D(x)}} \left(\sum\limits_{y\in B_D(x)^c} Q_{Y}(y)f \left(\frac{P_{Y|X}(y|x)}{Q_Y(y)}\right)\right.\nonumber\\
		&\left.+\frac{Q_{Y}(B_D(x))}{Q_{Y}(B_D(x))}\sum\limits_{y\in B_D(x)} Q_{Y}(y)f \left(\frac{ P_{Y|X}(y|x)}{ Q_Y(y)}\right)\right) \\
		=&\inf_{Q_Y} \sum_{x\in\mcl X} P_X(x) \inf_{\substack{P_{Y|X=x}\\Y\in B_D(x)}} \Bigg(\sum\limits_{y\in B_D(x)^c} Q_{Y}(y)f \left(0\right)\nonumber\\
		&+Q_{Y}(B_D(x))\hspace{-6pt}\sum\limits_{y\in B_D(x)} \frac{Q_{Y}(y)}{Q_{Y}(B_D(x))}f \left(\frac{ P_{Y|X}(y|x)}{ Q_Y(y)}\right)\Bigg) \label{eq:PUT_fLeakKforDistvsHardDist3-1}\\		
		\geq&\inf_{Q_Y} \sum_{x\in\mcl X} P_X(x) \inf_{\substack{P_{Y|X=x}\\Y\in B_D(x)}}\Bigg(Q_Y\left(B_D(x)^c\right) f(0)\nonumber\\
		&+Q_{Y}(B_D(x))f\bigg(\frac{1}{Q_Y(B_D(x))}\bigg)\Bigg)\label{eq:PUT_fLeakKforDistvsHardDist4}\\
		=& 
		f(0)\hspace{-2pt}+\hspace{-2pt}\inf\limits_{Q_Y} \hspace{-2pt}\sum_{x\in \mcl X}\hspace{-4pt}P(x) Q_{Y}(B_D(x))
		\label{eq:PUT_fLeakKforDistvsHardDist5}
		\bigg(\hspace{-3pt}f\Big(\hspace{-2pt}\frac{1}{Q_Y(B_D(x))}\hspace{-2pt}\Big)\hspace{-3pt}-\hspace{-2pt}f(0)\hspace{-4pt}\bigg)\hspace{-2pt},\,\qquad
	\end{IEEEeqnarray}
	where \begin{itemize}
		\item \eqref{eq:PUT_fLeakKforDistvsHardDist2} follows from the fact that $D_f(P_{Y|X}P_X\|P_X\times Q_Y)$ is convex in $(P_{Y|X},Q_Y)$ for fixed $P_X$,
		\item \eqref{eq:PUT_fLeakKforDistvsHardDist3-1} is directly from the hard distortion constraint $d(X;Y)\leq 0$ such that for any $y\notin B_D(x)$  $P_{Y|X}(y|x)=0$, and therefore, $\sum_{y\in B_D(x)}P_{Y|X}(y|x)=1$,
		\item \eqref{eq:PUT_fLeakKforDistvsHardDist4} is from the Jensen's inequality such that
		\begin{IEEEeqnarray}{ l l}
			&\sum\limits_{y\in B_D(x)} \frac{Q_{Y}(y)}{Q_{Y}(B_D(x))}f \left(\frac{ P_{Y|X}(y|x)}{ Q_Y(y)}\right)\nonumber\\
		    \geq &f \left(\sum\limits_{y\in B_D(x)} \frac{Q_{Y}(y)}{Q_{Y}(B_D(x))}\frac{ P_{Y|X}(y|x)}{ Q_Y(y)}\right)\\
			=  &f \left( \frac{\sum\limits_{y\in B_D(x)}P_{Y|X}(y|x) }{Q_{Y}(B_D(x))}\right)=f \left( \frac{1 }{Q_{Y}(B_D(x))} \hspace{-2pt}\right),\qquad
		\end{IEEEeqnarray}	
		with equality if and only if there is a mechanism $P_{Y|X}$ satisfying
		\begin{align}\label{eq:OptMechvsQ_Y-InProof}
		\frac{P_{Y|X}(y|x)}{Q_Y(y)}=\frac{\mathbf{1}(y\in B_D(x))}{Q_Y(B_D(x))}.
		\end{align}
	\end{itemize}	
    Note that $f:\mathbb{R}_+\to \mathbb{R}$ is a convex function, such that the function $tf(\frac{1}{t})$ is convex in $t\in \mathbb{R}_+$. Therefore, the objective function in \eqref{eq:PUT_fLeakKforDistvsHardDist5} is convex in $Q_Y$. Furthermore, in \eqref{eq:PUT_fLeakKforDistvsHardDist5} the feasible region of $Q_Y$ is the probability distribution simplex over the set $\{B_D(x), x\in \mcl X\}$. For finite supports $\mcl X$ and $\mcl Y$ of $X$ and $Y$, respectively, the set $\{B_D(x), x\in \mcl X\}$ is a compact, and therefore, the infimum in \eqref{eq:PUT_fLeakKforDistvsHardDist5} is achievable.
\end{proof}

\section{Proof of Theorem \ref{Thm:PUT_fLeak_HardDist}}\label{Proof:Thm:PUT_fLeak_HardDist}
\begin{proof}[\nopunct]
Given $P_X$, the collection of stochastic matrices is denoted as $\mcl P_{Y|X}$. The feasible ball $B_D(x)$ around $x$ is defined in \eqref{eq:PUT_HardDist_CollectofFeasibleY}. For the distribution independent PUT in \eqref{eq:PUT_fDivergenceLeak_HardDistortion}, we have
	\begin{IEEEeqnarray}{l l}
		&\text{PUT}_{\text{HD},\mcl L_f^{\text{max}}} (D)\nonumber\\
		=&\inf_{\substack{P_{Y|X}\in \mcl P_{Y|X}\\:d(X,Y)\le D}}\,\sup_{P_{\tilde{X}}}\,\inf_{Q_Y}\, D_f(P_{\tilde{X}}P_{Y|X}\|P_{\tilde{X}}\times Q_Y)\label{eq:put1}\\
		=& \inf_{Q_Y}\,\sup_{P_{\tilde{X}}}\,\inf_{\substack{P_{Y|X}\in \mcl P_{Y|X}\\:d(X,Y)\le D}}\,D_f(P_{\tilde{X}}P_{Y|X}\|P_{\tilde{X}}\times Q_Y)\label{eq:put2}\\
		=&\inf_{Q_Y}\,\sup_{P_{\tilde{X}}}\,\inf_{\substack{P_{Y|X}\in \mcl P_{Y|X}\\:d(X,Y)\le D}}\,\sum_{x\in \mcl X}  P_{\tilde{X}}(x)D_f(P_{Y|X=x}\|Q_Y)\label{eq:put3}\\
		= &\inf_{Q_Y}\, \sup_{P_{\tilde{X}}}\hspace{-2pt}\sum_{x\in \mcl X} \hspace{-2pt} P_{\tilde{X}}(x)\hspace{-5pt}\inf\limits_{\substack{P_{Y|X=x}\\Y\in B_D(x)}}\hspace{-2pt} \sum_{y\in \mcl Y}\hspace{-2pt}Q_Y(y) f\hspace{-3pt}\left(\hspace{-2pt}\frac{P_{Y|X}(y|x)}{Q_Y(y)}\hspace{-2pt}\right)\label{eq:put5}\\
		=&\inf_{Q_Y}\,\sup_{P_{\tilde{X}}}\, \sum_{x\in \mcl X}  P_{\tilde{X}}(x)\, \inf\limits_{\substack{P_{Y|X=x}\\Y\in B_D(x)}}\Bigg(\sum_{y\in B_D(x)} Q_Y(y)\nonumber\\
		&\quad \cdot f\left(\frac{P_{Y|X}(y|x)}{Q_Y(y)}\right)+\sum_{y\in B_D(x)^c} Q_Y(y) f(0)\Bigg)\label{eq:put6}\\
		=&\inf_{Q_Y}\,\sup_{P_{\tilde{X}}}\, \sum_{x\in \mcl X}  P_{\tilde{X}}(x)\, \inf\limits_{\substack{P_{Y|X=x}\\Y\in B_D(x)}}\,
			\Bigg( Q_Y(B_D(x))\sum_{y\in B_D(x)}\nonumber\\ & \frac{Q_Y(y)}{Q_{Y}(B_D(x))}f\hspace{-2pt}\left(\hspace{-2pt}\frac{P_{Y|X}(y|x)}{Q_Y(y)}\hspace{-2pt}\right)+Q_Y(B_D(x)^c)f(0)\Bigg)\label{eq:put7}\qquad \\
		\geq&\inf_{Q_Y}\,\sup_{P_{\tilde{X}}}\, \sum_{x\in \mcl X}  P_{\tilde{X}}(x)\, \inf\limits_{\substack{P_{Y|X=x}\\Y\in B_D(x)}}
			\bigg( Q_Y(B_D(x)) \nonumber\\
			&\quad \cdot  f\left(\frac{1}{Q_Y(B_D(x))}\right)+Q_Y(B_D(x)^c)f(0)\bigg)\label{eq:put8}\\
		=&\inf_{Q_Y}\,\sup_{P_{\tilde{X}}}\, \sum_{x\in \mcl X}  P_{\tilde{X}}(x)\,
		\bigg( Q_Y(B_D(x))f\left(\frac{1}{Q_Y(B_D(x))}\right)\nonumber\\
		&\quad +\big(1-Q_Y(B_D(x))\big)f(0)\bigg)\\
		=&\inf_{Q_Y}\,\sup_{P_{\tilde{X}}}\,  \sum_{x\in \mcl X}  P_{\tilde{X}}(x)\, g\big(Q_Y(B_D(x))\big)\label{eq:put9}\\
		=&\inf_{Q_Y}\,\sup_x\, g\big(Q_Y(B_D(x))\big)\label{eq:put9+}
	\end{IEEEeqnarray}	
	where \begin{itemize}
		\item \eqref{eq:put2} and \eqref{eq:put5} follow from the fact that $D_f(P_{\tilde{X}}P_{Y|X}\|P_{\tilde{X}}\times Q_Y)$ is linear in $P_{\tilde{X}}$ for fixed $(P_{Y|X},Q_Y)$ and convex in $(P_{Y|X},Q_Y)$ for fixed $P_{\tilde{X}}$,
		\item \eqref{eq:put8} follows from the convexity of $f$ and Jensen's inequality. The equality holds if and only if there exists a mechanism $P_{Y|X}$ satisfying \eqref{eq:OptMechvsQ_Y-InProof}.
		\item \eqref{eq:put9} results from $q\triangleq Q_Y(B_D(x))$ and 
		\begin{equation}\label{eq:put10}
		g(q)\triangleq qf(q^{-1})+(1-q)f(0).
		\end{equation}
	\end{itemize}
	Due to the convexity of $f$, we have $f(q^{-1})-f(0)\le f'(q^{-1})\left(q^{-1} -0\right)$,
	from which, the derivative $g'(q)=f(q^{-1})-q^{-1} f'(q^{-1})-f(0)\leq 0$. Therefore, the function $g$ in \eqref{eq:put10} is non-increasing, such that \eqref{eq:put9+} is simplified as $g(q^\star)$, where $q^{\star}$ is given by 
	\begin{equation}\label{eq:q_star_def-inPf}
	q^\star\triangleq \sup_{Q_Y}\, \inf_x\, Q_Y(B_D(x)).
	\end{equation}
	 Note that in \eqref{eq:q_star_def-inPf}, the feasible region of $Q_Y$ is the probability distribution simplex over the set $\{B_D(x), x\in \mcl X\}$. For finite supports $\mcl X$ and $\mcl Y$ of $X$ and $Y$, respectively, the set $\{B_D(x), x\in \mcl X\}$ is a compact, and therefore, the supremum in \eqref{eq:q_star_def-inPf} is achievable.

\end{proof}

\section{Proof of Theorem \ref{Thm:PUT_AlphaLeak_HardDist}}\label{proof:Thm:PUT_AlphaLeak_HardDist}
\begin{proof}[\nopunct]
	From Thm.~\ref{Thm:DefEquialentExpression_alphaleakage}, we know that for $\alpha\geq 1$, $\alpha$-leakage $\mcl L_{\alpha}(S;Y)$ equals to Arimoto MI $I^{\text{A}}_\alpha(S;Y)$. Since $I^{\text{A}}_\alpha(S;Y)=H_{\alpha}(S)-H^{\text{A}}_\alpha(S|Y)$ and $H_{\alpha}(S)$ is independent of $P_{Y|S,X}$, to minimize $I^{\text{A}}_\alpha(S;Y)$ with respect to $P_{Y|S,X}$ can be simplified to maximize $H^{\text{A}}_\alpha(S|Y)$. In addition, for $\alpha> 1$, the function $g: t\to \frac{\alpha}{1-\alpha}\log t$ is a monotonically non-increase function in $t>0$. Therefore, the problem in \eqref{eq:PUT_AlphaLeak_HD} can be simplified to 
	\begin{align}
		\inf_{\substack{P_{Y|SX}\\:d(X,Y)\le D}}\, \sum_{y\in \mcl Y}\Big(\sum_{s\in \mcl S} P(s,y)^{\alpha}\Big)^{\frac{1}{\alpha}}\label{eq:HDvsAlphaLK_InPf0}.
	\end{align}
	The hard distortion on $X$ and $Y$ in \eqref{eq:PUT_AlphaLeak_HD} determines a collection of feasible $x$ and therefore $s$ for each $y$. We define the two collections for each $y\in \mcl Y$ as
	\begin{align}
		\mcl X_D(y)&\triangleq \{x\in \mcl X:d(x,y)\leq D\},\\
		\mcl S_D(y)&\triangleq\{s\in \mcl S:\exists\, x \in \mcl X_D(y), P_{SX}(sx)>0\}.
	\end{align}
	Note that both sets defined above are independent of the privacy mechanism $P_{Y|S,X}$. 
    
    For $\alpha\in (1,\infty)$, we have
\begin{IEEEeqnarray}{l l}
	& \inf_{\substack{P_{Y|SX}\\:d(X,Y)\le D}}\, \sum_{y}\Big(\sum_s P(s,y)^{\alpha}\Big)^{\frac{1}{\alpha}}\nonumber\\
	=&\inf_{\substack{P_{Y|SX}\\:d(X,Y)\le D}} \sum\limits_{y\in \mcl Y}\hspace{-3pt}\left( \sum\limits_{s\in \mcl S}\hspace{-3pt}\Big(\sum\limits_{x\in \mcl X} P(y|s,x)P(s,x)\hspace{-2pt}\Big)^{\alpha}\hspace{-2pt} \right)^{\frac{1}{\alpha}} \label{eq:HDvsAlphaLK_InPf1}\\
	=&\inf_{\substack{P_{Y|S,X}}} \sum\limits_{y}\Bigg( \sum\limits_{\mcl S_D(y)}\bigg(\sum\limits_{\mcl X_D(y)} P(s,x,y)\bigg)^{\alpha}\hspace{-5pt}+\hspace{-5pt}\sum_{\substack{x\notin \mcl X_D(y)\\s\notin \mcl S_D(y)}} 0  \Bigg)^{\frac{1}{\alpha}}\label{eq:HDvsAlphaLK_InPf2}\\
	=&\inf\limits_{\substack{P_{Y|S,X}}} \sum\limits_{y}\Big(\hspace{-3pt}\sum\limits_{s'\in\mcl S_D(y) }P(s')^{\alpha}\Big)^{\frac{1}{\alpha}}\Bigg( \sum\limits_{s\in\mcl S_D(y)}\nonumber\\
	&\quad \frac{P(s)^{\alpha}}{\sum\limits_{s'\in \mcl S_D(y) }P(s')^{\alpha}}\Big(\sum\limits_{\mcl X_D(y)} P(x,y|s) \Big)^{\alpha}\Bigg)^{\frac{1}{\alpha}}\label{eq:HDvsAlphaLK_InPf3}\\
	\geq & \inf\limits_{\substack{P_{Y|S,X}}}  \sum\limits_{y}\Big(\hspace{-3pt}\sum\limits_{s'\in\mcl S_D(y) }P(s')^{\alpha}\Big)^{\frac{1}{\alpha}}\Bigg(\sum\limits_{s\in\mcl S_D(y)}\nonumber\\
	&\quad \frac{P(s)^{\alpha}}{\sum\limits_{s'\in\mcl S_D(y) }P(s')^{\alpha}}\Big(\sum\limits_{\mcl X_D(y)} P(x,y|s) \Big)\Bigg)  \label{eq:HDvsAlphaLK_InPf4}\\
	=&\inf\limits_{\substack{P_{Y|S,X}}}  \sum\limits_{\substack{y,\mcl S_D(y)\\\mcl X_D(y)}}\hspace{-2pt}\Big(\hspace{-3pt}\sum\limits_{s'\in\mcl S_D(y) }P(s')^{\alpha}\Big)^{\frac{1}{\alpha}-1}\hspace{-2pt}P(s)^{\alpha}P(x,y|s)         \label{eq:HDvsAlphaLK_InPf5}\\
	= &\inf\limits_{\substack{P_{Y|S,X}}}  \sum\limits_{\substack{s,x\\B_D(x)}}\Big(\hspace{-3pt}\sum\limits_{s'\in\mcl S_D(y) }P(s')^{\alpha}\Big)^{\frac{1}{\alpha}-1}P(s)^{\alpha}P(x,y|s)\label{eq:HDvsAlphaLK_InPf6}\\
	\geq &\inf\limits_{\substack{P_{Y|S,X}}}\sum\limits_{\substack{s,x}}P(s)^{\alpha}P(x|s) \min\limits_{y\in B_D(x)}\Big(\hspace{-5pt}\sum\limits_{s'\in\mcl S_D(y) }\hspace{-5pt}P(s')^{\alpha}\Big)^{\hspace{-3pt}\frac{1-\alpha}{\alpha}}\label{eq:HDvsAlphaLK_InPf7}\\
	= & \sum\limits_{\substack{s,x}}P(s)^{\alpha}P(x|s) \Big(\max\limits_{y\in B_D(x)}\sum\limits_{s'\in\mcl S_D(y) }P(s')^{\alpha}\Big)^{\frac{1}{\alpha}-1}\label{eq:HDvsAlphaLK_InPf8},
\end{IEEEeqnarray}
where
\vspace{-10pt }\begin{itemize}
		\item \eqref{eq:HDvsAlphaLK_InPf4} is directly from the concavity of the function $g_1: t\to t^{\frac{1}{\alpha}}$ ($\alpha>1$) and Jensen's inequality. The equality holds if and only if the optimal $P_{Y|S,X}^{\star}$ achieving the infimum satisfies that for all $s\in \mcl S_D(y)$,		
		\begin{IEEEeqnarray}{l l}\label{eq:HDvsAlphaLK_InPf-Mech1}
		\hspace{-20pt}	P^{\star}(y|s)=\hspace{-7pt}\sum\limits_{x\in\mcl X_D(y)}\hspace{-5pt}P^{\star}(y|sx)P(x|s)=\hspace{-2pt}\frac{P_Y^{\star}(y)}{\sum\limits_{s'\in\mcl S_D(y)}\hspace{-3pt}P_S(s')}.
		\end{IEEEeqnarray}
		\vspace{-2pt }where $P_Y^{\star}$ is the probability distribution of $Y$ derived from $P^{\star}_{Y|S,X}$.
		\item in \eqref{eq:HDvsAlphaLK_InPf6}, $B_D(x)$ is the feasible ball defined in \eqref{eq:PUT_HardDist_CollectofFeasibleY}.
		\item the equality in \eqref{eq:HDvsAlphaLK_InPf7} holds if and only if for any $(s,x)$, all $y$ with $P^{\star}(y|s,x)>0$ lead to the same $\sum_{s'\in\mcl S_D(y)}P(s')$.
		\item the equality in \eqref{eq:HDvsAlphaLK_InPf8} is from the fact that the function $g:t\to t^{\frac{1}{\alpha}-1}$ is monotonically non-increasing in $t>0$ for $\alpha\geq 1$.
	\end{itemize}
\vspace{-8pt } Similarly, for $\alpha=\infty$, we have
\begin{IEEEeqnarray}{l l}
	 &\inf_{\substack{P_{Y|SX}\\:d(X,Y)\le D}}\, \sum_{y}	P_Y(y)\max_s P_{S|Y}(s|y)\nonumber\\
	 =&\inf_{\substack{P_{Y|S,X}}} \sum\limits_{y}P_Y(y) \max\limits_{\mcl S_D(y)}\bigg(\sum\limits_{\mcl X_D(y)} P_{S,X|Y}(s,x|y)\bigg)\label{eq:HDvsAlphaInfLK_InPf2}\\
	\geq& \inf\limits_{\substack{P_{Y|S,X}}}\sum\limits_{y}\hspace{-3pt}P(y)\hspace{-3pt}\Bigg(\hspace{-3pt}\sum\limits_{\mcl S_D(y)}\hspace{-3pt}\frac{P(s) }{\sum\limits_{s'\in\mcl S_D(y)}P(s')}\hspace{-4pt}\sum\limits_{\mcl X_D(y)}\hspace{-4pt}P(s,x|y)\hspace{-3pt}\Bigg)\label{eq:HDvsAlphaInfLK_InPf3}\qquad \\
	= &\inf\limits_{\substack{P_{Y|S,X}}}\sum\limits_{s,x}\sum\limits_{B_D(x)} \frac{P(s)}{\sum\limits_{s'\in\mcl S_D(y)}P(s')}P(s,x,y)\label{eq:HDvsAlphaInfLK_InPf6}\\
	\geq &\inf\limits_{\substack{P_{Y|S,X}}}\sum\limits_{s,x}\sum\limits_{B_D(x)}\hspace{-6pt} P(s,x,y)\hspace{-3pt} \min\limits_{y\in B_D(x)}\hspace{-3pt} \frac{P(s)}{\sum\limits_{s'\in\mcl S_D(y)}P(s')}\label{eq:HDvsAlphaInfLK_InPf7}\\
	=& \sum_{s,x}P(s,x)P(s)\left(\max_{\substack{y\in B_D(x)}}\sum\limits_{s'\in\mcl S_D(y)}P(s')\right)^{-1}\label{eq:HDvsAlphaInfLK_InPf8}.
\end{IEEEeqnarray}
Note that the sufficient and necessary conditions for the equalities in \eqref{eq:HDvsAlphaInfLK_InPf3} and \eqref{eq:HDvsAlphaInfLK_InPf7} hold are the same as that for \eqref{eq:HDvsAlphaLK_InPf4} and \eqref{eq:HDvsAlphaLK_InPf7}, respectively.

	For $\alpha=1$, $\mcl L_{\alpha}(S\to Y)=I^{\text{A}}(S;Y)=I(S;Y)$, such that
	\begin{IEEEeqnarray}{r l}
		\text{PUT}_{\text{HD},\mcl L_\alpha} (D)
		=&	\inf_{\substack{P_{Y|SX}\\:d(X,Y)\le D}}\, \sum_{s,y}P(s,y)\log \frac{P(s,y)}{P(s)P(y)}\label{eq:HDvsAlpha=1LK_InPf1}\\
		=&\inf_{\substack{P_{Y|S,X}}} \sum\limits_{y} \sum\limits_{\mcl S_D(y)}\bigg(\Big(\sum\limits_{\mcl X_D(y)} P(s,x,y)\Big)\nonumber\\
		&\quad \cdot \log \frac{\sum_{\mcl X_D(y)} P(s,x,y)}{P(s)P(y)}\bigg)\label{eq:HDvsAlpha=1LK_InPf2}\\
		\geq  &\inf_{\substack{P_{Y|S,X}}} \sum\limits_{y} \bigg(\Big(\sum\limits_{\mcl S_D(y)}\sum\limits_{\mcl X_D(y)} P(s,x,y)\Big)\nonumber\\
		&\quad \cdot\log \frac{\sum_{\mcl S_D(y)} \sum_{\mcl X_D(y)} P(s,x,y)}{\sum_{\mcl S_D(y)}P(s)P(y)}\bigg)\label{eq:HDvsAlpha=1LK_InPf3}\\
		= &\inf_{\substack{P_{Y|S,X}}}\hspace{-3pt}\sum\limits_{\substack{y,\mcl S_D(y)\\ \mcl X_D(y)}}\hspace{-5pt} P(s,x,y)\log \frac{1}{\sum\limits_{s'\in\mcl S_D(y)}\hspace{-4pt}P(s')}\label{eq:HDvsAlpha=1LK_InPf4}\hspace{20pt} \\
		\geq &\sum\limits_{s,x}P(s,x)\min\limits_{\substack{y\in B_D(x)}}\log \frac{1}{\sum\limits_{s'\in\mcl S_D(y)}P(s')} \label{eq:HDvsAlpha=1LK_InPf5}.
	\end{IEEEeqnarray}
	Note that the inequality in \eqref{eq:HDvsAlpha=1LK_InPf4} is from log-sum inequality in \cite[Thm. 2.7.1]{IT_Cover}, and the sufficient and necessary conditions for the equalities in \eqref{eq:HDvsAlpha=1LK_InPf4} and \eqref{eq:HDvsAlpha=1LK_InPf5} hold are the same as that for \eqref{eq:HDvsAlphaLK_InPf4} and \eqref{eq:HDvsAlphaLK_InPf7}, respectively.  
\end{proof}

\section{Proof of Theorem \ref{Thm:PUTMaxAlphaLk&HardDist-SeqType-Alpha>1}}\label{Proof:Thm:PUTMaxAlphaLk&HardDist-SeqType-Alpha>1}
\begin{proof}[\nopunct]
Define the distortion ball for the type-distance distortion in \eqref{eq:HDonTypes} as
	\begin{align}
	B_{m}(x^n)\triangleq \left\{y^n: |P_{x^n}(0)-P_{y^n}(0)|\leq \frac{m}{n}\right\}.
	\end{align}
	From Corollary \ref{Col:PUT_MaximalAlphaLeak_HardDist}, to find an optimal mechanism $P_{Y^n|X^n}^{\star}$, we need to find an output distribution $Q_{Y^n}^{\star}$ which optimizes \eqref{eq:q_star_def} with $x^n$ and $y^n$ in place of $x,y$.
	
	Note that for the hard distortion $|P_{x^n}(0)-P_{y^n}(0)|\leq \frac{m}{n}$, all datasets in a type class share the same group of feasible output datasets, and this feasible group can be represented by output type classes. Therefore, for any $x^n\in T(i)$ ($i\in[0,n]$), we rewrite $B_{m}(x^n)$ as
	\begin{align}
	B_{m}(x^n)=B_m(T(i))\triangleq\bigcup_{\substack{|i-j|\leq m\\j\in [0,n]}}T(j).
	\end{align} 
    We define a distribution $Q_{T}$ of type classes for outputs as
	\begin{align}\label{eq:PUT-TYPE-RelatDisonTypeandSeq-inProof}
	Q_{T}(T(j))\triangleq\sum\limits_{y^n\in T(j)}Q_{Y^n}(y^n), \text{ for } j\in[0,n],
	\end{align}
	such that
	\begin{equation}\label{eq:q_star_def_TypeExamp}
	q^{\star}=\sup_{Q_{T}}\, \min\limits_{i\in[0,n]}\, Q_{T}(B_m(T(i))).
	\end{equation}
	Note that we replace the infimum by minimum in \eqref{eq:q_star_def_TypeExamp} due to the fact that the infimum over finite integers equals to the minimum. 
	The optimal distribution $Q_{T}$ is determined by bounding $q*$ from above and below in \eqref{eq:q_star_def_TypeExamp}. 
	The upper bound is determined by restricting the optimization in \eqref{eq:q_star_def_TypeExamp} to a judicious choice of a small set of input types. The lower bound is a constructive scheme.
		
	We define an index set $\mcl I_T\subset [0,n]$ for types as 
		\begin{IEEEeqnarray}{l l}
			\label{eq:index_types}
			I_T\triangleq \left\{l+(2m+1)k: k\in\left[0,\ceil*{\frac{n+1}{2m+1}}-1\right]\right\} 
		\end{IEEEeqnarray}
	where $l=m$ if $\ceil{\frac{n+1}{2m+1}}\leq \frac{m+n+1}{2m+1}$, and otherwise, $l=n-\left(\ceil{\frac{n+1}{2m+1}}-1\right)(2m+1)$.
	From the expression of $\mcl I_T$ in \eqref{eq:index_types}, we observe that: (i) the difference between adjacent elements is $2m+1$; (ii) for the first and last elements,
	\begin{itemize}
	\item if $\ceil{\frac{n+1}{2m+1}}\leq \frac{m+n+1}{2m+1}$ holds, the first element is $m$ and the last element is
	\begin{IEEEeqnarray}{ l  l}
		 &m+(2m+1)\left(\ceil*{\frac{n+1}{2m+1}}-1\right)\nonumber\\
		 =&(2m+1)\ceil*{\frac{n+1}{2m+1}}-m-1\in[n-m,n],\quad 
	\end{IEEEeqnarray}	
	due to the inequalities $\frac{n+1}{2m+1}\leq \ceil{\frac{n+1}{2m+1}}\leq \frac{m+n+1}{2m+1} $;
	\item if $\ceil{\frac{n+1}{2m+1}}> \frac{m+n+1}{2m+1}$ holds, the last element is $n$ and the first element is
	\begin{IEEEeqnarray}{l l}
		&n-\left(\ceil*{\frac{n+1}{2m+1}}-1\right)(2m+1)\nonumber\\
		=& n+2m+1-\ceil*{\frac{n+1}{2m+1}}(2m+1)\in [0,m),\quad 
	\end{IEEEeqnarray}
    due to the inequalities $\frac{n+1}{2m+1}+1-\frac{1}{2m+1} \geq \ceil{\frac{n+1}{2m+1}}> \frac{m+n+1}{2m+1} $ for $n\in \mathbb{Z}_{++}$.
	\end{itemize} 
Therefore, it is not difficult to see that feasible balls of input type classes indexed by $I_T$ are a partition of the set of all type classes, i.e., 
	\begin{subequations}\label{eq:PartitionSets_inPf}
		\begin{align}
		&B_m(T(i_1))\cap B_m(T(i_2))=\emptyset\quad i_1,i_2\in \mcl I_T,\\
		&\bigcup_{j\in[0,n]} T(j)= \bigcup_{i\in\mcl I_T} B_m(T(i)).
		\end{align}
	\end{subequations}
	Therefore, the problem in \eqref{eq:q_star_def_TypeExamp} {is bounded from above} by
	\begin{IEEEeqnarray}{ l l }
		q^{\star}  &\leq  \sup_{Q_{T}}\, \min\limits_{i\in\mcl I_T }\, Q_{T}(B_m(T(i)))\\
			&\leq \sup_{Q_{T}}\, \frac{1}{|\mcl I_T|} \sum_{i\in\mcl I_T }\, Q_{T}(B_m(T(i)))\label{eq:Example1-InPf-1}\\
			&= \sup_{Q_{T}}\left(\ceil*{\frac{n+1}{2m+1}}\right)^{-1} \sum_{j\in[0,n]}Q_{T}(T(j))\label{eq:Example1-InPf-2}\\
			&=\left(\ceil*{\frac{n+1}{2m+1}}\right)^{-1},\label{eq:Example1-InPf-3}
	\end{IEEEeqnarray}
    where
    \begin{itemize}
    		\item the inequality in \eqref{eq:Example1-InPf-1} is from that the average probability of $B_m(T(i))$ over $i\in \mcl I_T$ is no less than the minimal probability of $B_m(T(i))$ for $i\in \mcl I_T$;
    		\item the equality in \eqref{eq:Example1-InPf-2} is from that the cardinality of $\mcl I$ defined in \eqref{eq:index_types} is $\ceil{\frac{n+1}{2m+1}}$; 
    		\item the equality in \eqref{eq:Example1-InPf-3} is from that for any distribution over types $T(j)$ with $j\in[0,n]$, the sum of $Q_{T}(T(j))$ over $j\in[0,n]$ is $1$.
    \end{itemize}     
	To bound $q^\star$ from below, we construct a distribution $Q'_{T}$ as
	\begin{align}\label{eq:PUT-TYPE-OptDisOnType}
	Q_{T}'(T(j))=\begin{cases}
	\left(\ceil*{\frac{n+1}{2m+1}}\right)^{-1}& j\in I_T\\
	0& \textit{otherwise}.
	\end{cases}
	\end{align}
	By \eqref{eq:PartitionSets_inPf} for each $i\in[0,n]$, there is a \textit{unique} $j$ satisfying $|i-j|\leq m$. Therefore, we bound \eqref{eq:q_star_def_TypeExamp} by 
	\begin{IEEEeqnarray}{r l}
		q^{\star}\geq &\,\min_i \,Q_{T}'(B_m(T(i)))\\
		= &\,\min_i Q_{T}'\Big(\bigcup_{\substack{|i-j|\leq m\\j\in \mcl I_T}}T(j)\Big)\label{eq:Example1-InPf-4}\\
		= &\left(\ceil*{\frac{n+1}{2m+1}}\right)^{-1},\label{eq:Example1-InPf-5}
	\end{IEEEeqnarray}  
    where the equality in \eqref{eq:Example1-InPf-5} holds because for any $i\in[0,n]$, there is only one $j\in \mcl I_T$ satisfying $|i-j|\leq m$ such that the union in \eqref{eq:Example1-InPf-4} has exactly one element in it.
    
	Therefore, $q^{\star}=\left(\ceil*{\frac{n+1}{2m+1}}\right)^{-1}$ and the $Q_{T}'$ defined in \eqref{eq:PUT-TYPE-OptDisOnType} achieve the optimum in \eqref{eq:q_star_def_TypeExamp}.
Thus, we can derive an optimal $Q^{\star}_{Y^n}$, which assigns the same non-zero probability to only one dataset of each type classes indexed by $I_T$, i.e., $Q^{\star}_{Y^n}(y^n)=q^{\star}$ for one $y^n\in T(j)$ for each $j\in I_T$. Therefore, from \eqref{eq:opt_mech} we have the corresponding optimal privacy mechanism, which maps all input datasets in one input type class to one feasible output dataset with probability $1$. 	
\end{proof}

\section{Proof of Theorem \ref{thm:OptPUT-MaxAlphaLeak-UvsHardHamingDist}}\label{proof:thm:OptPUT-MaxAlphaLeak-UvsHardHamingDist}

\begin{proof}[\nopunct]

	For the Hamming distortion function on datasets in \eqref{eq:HamingHD_data sets}, the feasible ball $B_{m}(x^n)$ of any dataset $x^n\in \mcl X^n$ is given by
	\begin{align}
	B_{m}(x^n)=\left\{y^n\in \mcl X^n: d_{\text{H}}(x^n,y^n)\leq \frac{m}{n} \right\}.
	\end{align}	
	For each $x^n\in \mcl X^n$, the number of datasets having different values at exactly $k>0$ different positions is ${n \choose k}\left(|\mcl X|-1\right)^k$.
	Therefore, the number of elements in its feasible ball $B_{m}(x^n)$ is 
	\begin{align}
	\label{eq:Exampe_PUT-HD-HamSeq_inPf}
	|B_{m}(x^n)|=\sum_{i=0}^{m}{n \choose i}\left(|\mcl X|-1\right)^i,
	\end{align} 
	Note that the cardinality $|B_{m}(x^n)|$ in \eqref{eq:Exampe_PUT-HD-HamSeq_inPf} of a feasible ball is independent of the input dataset. 
	We denote the cardinality as $N_{\text{ball}}$, i.e., $N_{\text{ball}}\triangleq |B_{m}(x^n)|$.
	Due to the symmetric property of the Hamming distortion on datasets in \eqref{eq:HamingHD_data sets}, i.e., for any two datasets $x_1^n,x_2^n\in \mcl X^n$, $x_1^n\in B_D(x_2)$ if and only if $x_2\in B_D(x_1)$, we know that each output dataset is in exactly $N_{\text{ball}}$ different feasible balls (the example in Fig. \ref{fig:PUT-MaxAlphaLeak-UvsHardHamingDist_Mechanism} may help to figure out the above relationships). Therefore,
	\begin{IEEEeqnarray}{l l}
		q^{\star}&=\sup_{Q_{Y^n}} \inf_{x^n\in \mcl X^n} Q_{Y^n}\left(B_{m}(x^n)\right)\\
		&\leq \sup_{Q_{Y^n}} \frac{1}{\left|\mcl X^n\right|}\sum_{x^n\in \mcl X^n} Q_{Y^n}\left(B_{m}(x^n)\right)
		\label{thm:PUT_MaxAlphaLkHardHaming_inproof1}\\
		&=\sup_{Q_{Y^n}} \frac{1}{\left|\mcl X^n\right|}\sum_{x^n\in \mcl X^n} \sum_{y^n\in B_{m}(x^n)}Q_{Y^n}\left(y^n\right)
		\label{thm:PUT_MaxAlphaLkHardHaming_inproof2}\\
		&=\sup_{Q_{Y^n}} \frac{1}{\left|\mcl X\right|^n}\sum_{\substack{x^n\in \mcl X^n\\y^n\in B_{m}(x^n)}}Q_{Y^n}\left(y^n\right)
		\label{thm:PUT_MaxAlphaLkHardHaming_inproof3}\\
		&=\sup_{Q_{Y^n}} \frac{1}{\left|\mcl X\right|^n}\sum_{y^n\in \mcl X^n} N_{\text{ball}}Q_{Y^n}\left(y^n\right)
		\label{thm:PUT_MaxAlphaLkHardHaming_inproof4}\\
		&=\frac{N_{\text{ball}}}{\left|\mcl X\right|^n}\label{thm:PUT_MaxAlphaLkHardHaming_inproof5}
	\end{IEEEeqnarray}
	where 
	\begin{itemize}
		\item the equality in \eqref{thm:PUT_MaxAlphaLkHardHaming_inproof1} holds if and only if for an arbitrary pair of datasets $x_1^n, x_2^n$, there is 
		\begin{align}
		Q_{Y^n}\left(B_D(x_1^n)\right)=Q_{Y^n}\left(B_D(x_2^n)\right),
		\end{align}
		which can be satisfied by a uniform distribution over $\mcl X^n$, i.e., $Q_{Y^n}^{\star}=\frac{1}{\left|\mcl X\right|^n}$.
		\item the equality in \eqref{thm:PUT_MaxAlphaLkHardHaming_inproof4} holds because, for each $y^n$, the number of sequences $x^n$ where $d_H(x^n,y^n)\le \frac{m}{n}$ is exactly $N_{ball}$.	
	\end{itemize}
\end{proof}

	\section*{Acknowledgment}

	The authors would like to thank Dr. Mario Alberto Diaz Torres and Prof. Vincent Y. F. Tan  for many useful discussions.
	
	\ifCLASSOPTIONcaptionsoff
	\newpage
	\fi

	
	
\bibliographystyle{IEEEtran}
\bibliography{JL_References-J}
	

	
\begin{IEEEbiography}[{\includegraphics[width=1in,height=1.25in,clip,keepaspectratio]{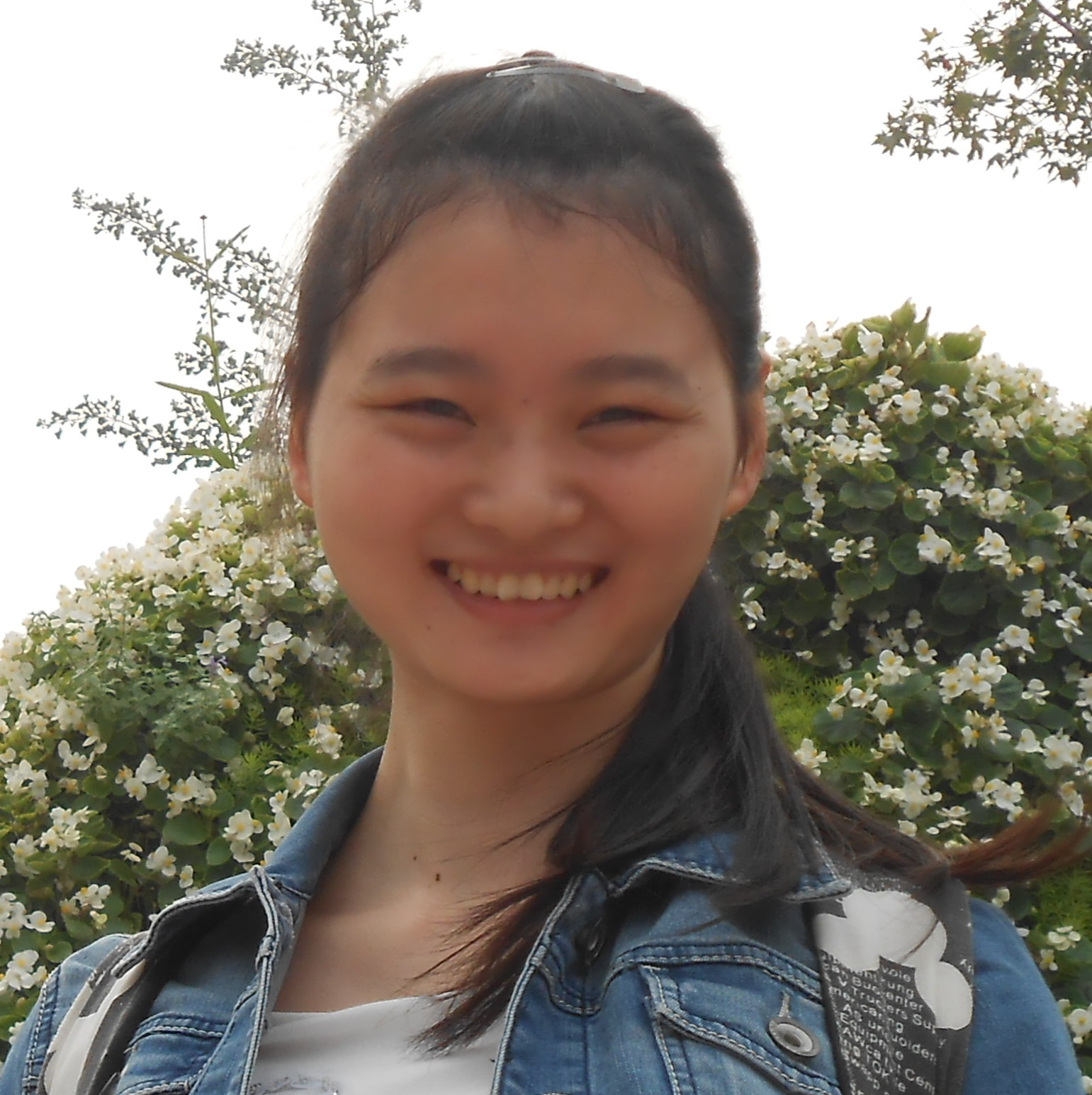}}]{Jiachun Liao} (S'16) received the B.Eng. in communication engineering and M.Eng. degrees in communication and information system from Beijing Jiaotong University, in 2012 and 2015, respectively. She is currently pursuing the Ph.D. degree in the School of Electrical, Computer, and Energy Engineering at Arizona State University. Her research interests includes wireless communications, information privacy and fairness in machine learning. 
\end{IEEEbiography}

\vskip 0pt plus -1fil

\begin{IEEEbiography}[{\includegraphics[width=1in,height=1.25in,clip,keepaspectratio]{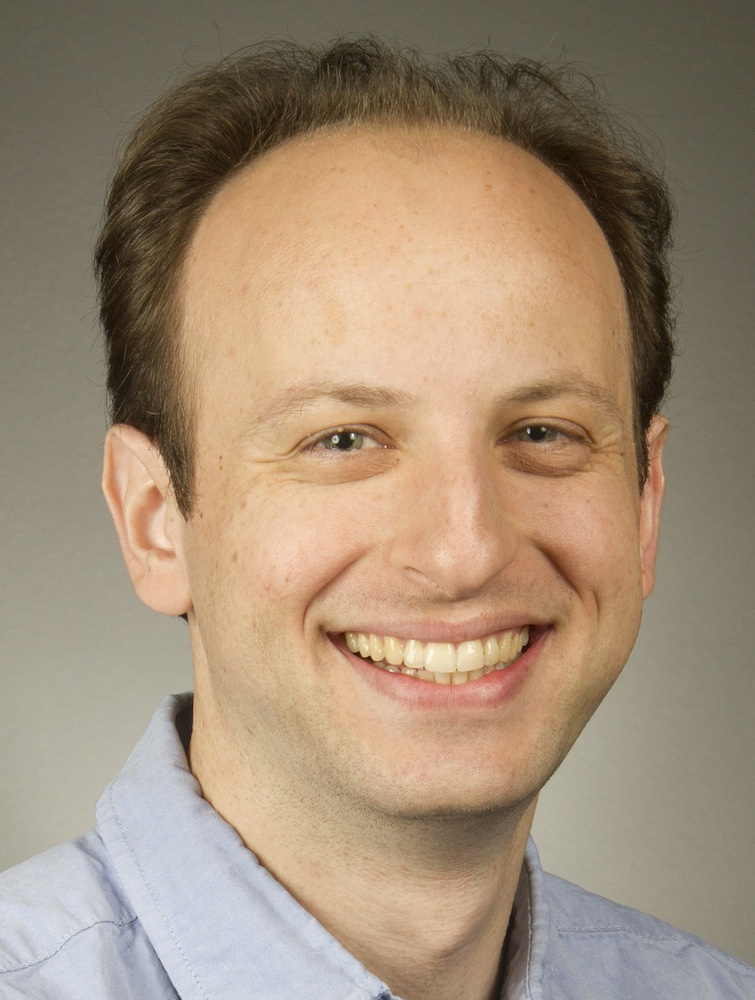}}]{Oliver Kosut} (S'06--M'10) received B.S. degrees in electrical engineering and mathematics from the Massachusetts Institute of Technology, Cambridge, MA, USA in 2004 and the Ph.D. degree in electrical and computer engineering from Cornell University, Ithaca, NY, USA in 2010.
	
Since 2012, he has been a faculty member in the School of Electrical, Computer and Energy Engineering at Arizona State University, Tempe, AZ, USA, where he is an Associate Professor. Previously, he was a Postdoctoral Research Associate in the Laboratory for Information and Decision Systems at MIT from 2010 to 2012. His research interests include information theory, cybersecurity, and power systems. Prof.~Kosut received the NSF CAREER award in 2015.
\end{IEEEbiography}

\vskip 0pt plus -1fil

\begin{IEEEbiography}[{\includegraphics[width=1in,height=1.25in,clip,keepaspectratio]{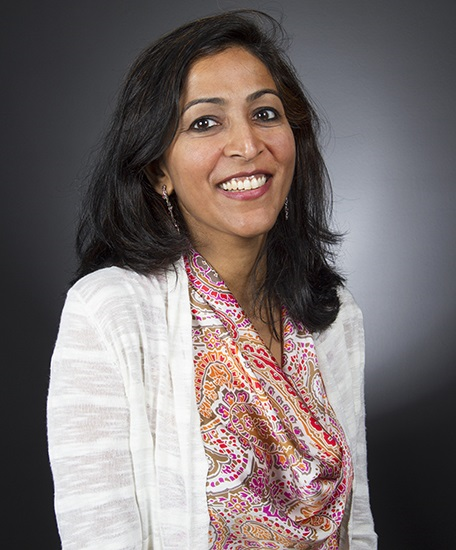}}]{Lalitha Sankar} (S'02-M'07-SM'15) received the B.Tech. degree from the Indian Institute of Technology, Bombay, the M.S. degree from the University of Maryland, and the Ph.D. degree from Rutgers University. She is currently an Associate Professor in the School of Electrical, Computer, and Energy Engineering at Arizona State University. Prior to this, she was an Associate Research Scholar at Princeton University and a recipient of a three year Science and Technology Teaching postdoctoral fellowship from the Council on Science and Technology at Princeton University. Sankar has also worked as a Senior Member of Technical Staff at AT\&T Shannon Labs and Polaroid Engineering R\&D Labs.
	
Her research interests include applying information sciences to study data privacy as well as cybersecurity and resilience in critical infrastructure networks. For her doctoral work, she received the 2007\--2008 Electrical Engineering Academic Achievement Award from Rutgers University. She received the IEEE Globecom 2011 Best Paper Award for her work on privacy of side-information in multi-user data systems, and the National Science Foundation CAREER Award in 2014.
\end{IEEEbiography}

\vskip 0pt plus -1fil

\begin{IEEEbiography}[{\includegraphics[width=1in,height=1.25in,clip,keepaspectratio]{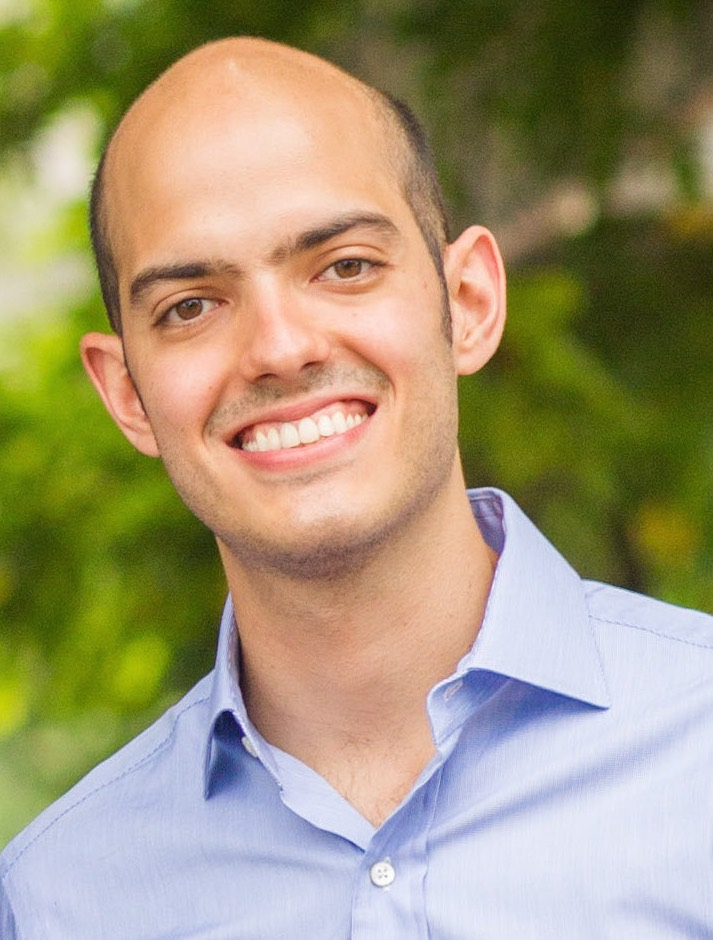}}]{Flavio du Pin Calmon} is an Assistant Professor of Electrical Engineering at Harvard's John A. Paulson School of Engineering and Applied Sciences. Before joining Harvard, he was the inaugural data science for social good post-doctoral fellow at  IBM Research in Yorktown Heights, New York. He received his Ph.D. in Electrical Engineering and Computer Science at MIT. His main research interests are information theory, inference, and statistics, with applications to fairness, privacy, machine learning, and communications engineering. Prof. Calmon has received the NSF CAREER Award in 2019, the Google Research Faculty Award, the IBM Open Collaborative Research Award, and Harvard's Lemann Brazil Research Fund Award.
\end{IEEEbiography}

\end{document}